\newtheorem{theorem}{Theorem}[section]
\newtheorem{lemma}[theorem]{Lemma}
\newtheorem{prop}[theorem]{Proposition}
\newcommand{\R}{\mathbb{R}}
\newcommand{\C}{\mathbb{C}}
\newcommand{\N}{\mathbb{N}}
\newcommand{\X}{\mathcal{X}}
\newcommand{\Y}{\mathcal{Y}}
\newcommand{\g}{\mathbf{g}}
\newcommand{\h}{\mathbf{h}}
\newcommand{\V}[1]{\mathcal{V}_{\mathrm{#1}}(M)}
\newcommand{\ran}{\operatorname{Ran}}
\newcommand{\sol}{\operatorname{Sol}_{P}}
\newcommand{\sols}{\operatorname{Sol}_{P,\dot{C}^{\infty}}}
\newcommand{\solf}{\operatorname{Sol}_{P,f}}
\newcommand{\soly}{\operatorname{Sol}_{P,\Y^{-\infty}}}
\newcommand{\wf}{\mathrm{WF}_{\mathrm{sc}}}
\newcommand{\wfs}{\mathrm{WF}}
\newcommand{\lc}{\mathrm{LC}}
\newcommand{\supp}{\mathrm{supp}}
\newcommand{\diff}[2][]{\mathrm{Diff}^{#1}_{\mathrm{#2}}(M)}
\newcommand{\dv}{d\mathrm{vol}_{\g}}
\newcommand{\psc}[1]{\Psi_{\mathrm{sc}}^{#1}(M)}
\newcommand{\psco}{\Psi_{\mathrm{sc}}^{0,0}(M)}
\newcommand{\hsc}[1]{H_{\mathrm{sc}}^{#1}(M)}
\newcommand{\hb}[1]{H_{\mathrm{b}}^{#1}(M)}
\newcommand{\tsc}{{}^{\mathrm{sc}}TM}
\newcommand{\tscc}{{}^{\mathrm{sc}}T^*M}
\newcommand{\tsccc}{{}^{\mathrm{sc}}\overline{T^*M}}
\begin{document}

\title{An analogue of non-interacting quantum field theory in Riemannian signature}
\author{
    {Mikhail Molodyk\thanks{Stanford University, Department of Physics and Stanford Institute for Theoretical Physics, Stanford, CA 94305, USA. E-mail: mam765@stanford.edu} }
    and 
    {Andr\'as Vasy\thanks{Stanford University, Department of Mathematics, Stanford, CA 94305, USA. E-mail: andras@math.stanford.edu}}    
    }
    \date{}

\maketitle

    \begin{abstract}
 In this paper, we define an analogue of non-interacting quantum fields satisfying $(\Delta_{\g}-\lambda^2)\phi=0$ on a Riemannian scattering space $(M,\g)$ with two boundary components, i.e. a manifold with two asymptotically conic ends (meaning asymptotic to the ``large end" of a cone). Thus, the Lorentzian spacetime of usual QFT constructions is replaced by a Riemannian manifold with two boundary components, which play a role analogous to the two components of the mass shell for the Klein-Gordon field on Lorentzian spacetimes. Our main result describes a canonical construction of two-point functions satisfying a version of the Hadamard condition.
      \end{abstract}


\section{Introduction}
    In this paper, we define an analogue of non-interacting quantum fields satisfying $(\Delta_{\g}-\lambda^2)\phi=0$ on a Riemannian scattering space $(M,\g)$ with two boundary components, i.e. a manifold with two asymptotically conic ends (meaning asymptotic to the ``large end'' of a cone). The metric on such spaces has the form
    \begin{equation}
    \g=\frac{dx^2}{x^4}+\frac{\h}{x^2},
        \label{eq:metric-0}
    \end{equation}
    where $x$ is a boundary-defining function on $M$ (whose inverse is the radius of the asymptotic cone) and $\h$ is a symmetric smooth two-tensor on $M$ with nondegenerate and positive-definite restriction to the boundary. We use incoming and outgoing asymptotic data at infinity to construct two-point functions of quantum states satisfying a wavefront mapping property analogous to that of Hadamard states of the Klein--Gordon field. Our construction is a Euclidean analogue of the work of Vasy and Wrochna \cite{V-W} on the wave equation on asymptotically Minkowski spaces, though we only consider massive fields ($\lambda>0$) here and correspondingly our main analytic tool is the algebra of scattering pseudodifferential operators (whereas the massless case requires b-operators). While it is not an example of what is usually called Euclidean QFT, the algebraic structure being instead that of Lorentzian QFT, our model is an attempt to elucidate the nature of common (Lorentzian) QFT constructions by placing them in an unfamiliar Riemannian setting. For those more familiar with scattering theory, on the other hand, this model can serve as an introduction to some ideas of QFT from an unusual perspective.

    The characteristic set $\Sigma$ of $P=\Delta_{\g}-\lambda^2$ as a scattering operator lies over the boundary of $M$; consequently, when there are two boundary components, $\partial M=S=S_1\sqcup S_2$, the characteristic set has two connected components, $\Sigma=\Sigma_1\sqcup \Sigma_2$. They play a similar role to the positive and negative energy shells of the Klein--Gordon equation, but without a time orientation to distinguish them. The Hamilton vector field within $\Sigma$ governing the propagation of singularities of solutions to $Pu=0$ has a radial (critical) set $R$, leading to a source-sink structure of the flow. Then the Fredholm theory developed in \cite{Vasy-AH-KdS} and used in \cite{V-W} applies in essentially the same way, providing generalized inverses of $P$ acting between suitable weighted Sobolev spaces which are analogous to the retarded/advanced ($P_{\pm\mp}^{-1}$) and Feynman/anti-Feynman ($P_{\pm\pm}^{-1}$) propagators. While in the Lorentzian setting the retarded/advanced ones are often considered more fundamental due to causality considerations, these are absent in the Riemannian setting, and in fact the Feynman inverses are more natural from an analytic point of view (being generalizations of the limiting-absorption-principle resolvents). It can be shown that acting between the Feynman/anti-Feynman function spaces, $P$ is in fact always invertible and not just Fredholm, while in the retarded/advanced cases this is true for small enough values of $\lambda$ when $\dim M\geqslant 3$.
    
    In the algebraic approach to QFT, an algebra of quantum fields is built based on an appropriate space of test functions, the essential input being a choice of Hermitian form on this function space which prescribes the canonical commutation/anti-commutation relations (CCR/CAR) for fields; see Section~\ref{sec:alg-QFT} for details. In the standard construction for Klein--Gordon fields, this Hermitian form is given by the difference of the retarded and advanced propagators acting on the space of compactly supported smooth functions. In our setting, as in \cite{V-W} for the wave equation on asymptotically Minkowski spaces, we show that the difference $G_{+\pm}=P_{+\pm}^{-1}-P_{-\mp}^{-1}$ of \textit{either} pair of propagators, acting on the space $\dot{C}^{\infty}(M)$ of smooth functions on $M$ vanishing to infinite order at the boundary (the geometric generalization of Schwartz functions), can be used to define a field algebra in this way. Concretely, extending \cite{Vasy-positivity}, we show the following.

    \begin{theorem}
        $D_{+\pm}([f],[g])=i\langle f, G_{+\pm}g\rangle_{L^2_{\g}}$ are nondegenerate Hermitian sesquilinear forms on $\dot{C}^{\infty}(M)/P\dot{C}^{\infty}(M)$. In the Feynman $(++)$ case, the Hermitian form is positive.
    \end{theorem}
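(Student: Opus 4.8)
The plan is to verify, in order, that $D_{+\pm}$ descends to the quotient, is Hermitian, is nondegenerate, and --- for $++$ only --- is positive, using as inputs the mapping properties of the four inverses $P_{+\pm}^{-1},P_{-\mp}^{-1}$ between the weighted variable-order scattering Sobolev spaces $\Y,\X$ and the microlocal package (elliptic regularity off $\Sigma$, propagation in $\Sigma=\Sigma_{1}\sqcup\Sigma_{2}$, radial point estimates at $R$) behind their construction. The first two points are formal. Since each $P_{\bullet}^{-1}$ is a genuine two-sided inverse of $P$ and $\dot{C}^{\infty}(M)$ embeds continuously into every space in sight, $P_{+\pm}^{-1}Pf'=f'=P_{-\mp}^{-1}Pf'$ for $f'\in\dot{C}^{\infty}(M)$, so $G_{+\pm}P=0$ on $\dot{C}^{\infty}(M)$ and, applying $P$, also $PG_{+\pm}=0$; as $P=\Delta_{\g}-\lambda^{2}$ is formally self-adjoint for $\dv$, integration by parts gives $\langle Pf',G_{+\pm}g\rangle=\langle f',PG_{+\pm}g\rangle=0$ and $\langle f,G_{+\pm}Pg'\rangle=0$, so $D_{+\pm}$ is well defined on $\dot{C}^{\infty}(M)/P\dot{C}^{\infty}(M)$. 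Hermiticity amounts to the skew-adjointness $G_{+\pm}^{*}=-G_{+\pm}$ for the $\dv$ pairing, which I would read off from the adjoint identities $(P_{+\pm}^{-1})^{*}=P_{-\mp}^{-1}$: the adjoint of $P_{+\pm}^{-1}\colon\Y\to\X$ inverts $P^{*}=P$ between the dual spaces, and passing to duals flips the weights and reverses the variable order across $R$, i.e. exchanges the source and sink thresholds --- precisely the exchange carrying the Feynman/retarded condition to the anti-Feynman/advanced one. Subtracting yields $G_{+\pm}^{*}=P_{-\mp}^{-1}-P_{+\pm}^{-1}=-G_{+\pm}$, hence $D_{+\pm}([f],[g])=\overline{D_{+\pm}([g],[f])}$.

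\textbf{Nondegeneracy.} Suppose $D_{+\pm}([f],[g])=0$ for all $g$. Skew-adjointness turns this into $\langle G_{+\pm}f,g\rangle=0$ for every $g\in\dot{C}^{\infty}(M)$; since $G_{+\pm}f$ lies in a weighted scattering Sobolev space in which $\dot{C}^{\infty}(M)$ is dense, $G_{+\pm}f=0$, i.e. $u:=P_{+\pm}^{-1}f=P_{-\mp}^{-1}f$. Then $Pu=f\in\dot{C}^{\infty}(M)$, while $u$ lies in both of the function spaces defining $G_{+\pm}$; these carry complementary orders at the source and sink pieces of $R$, so their conjunction forces $u$ to be above threshold at \emph{every} component of $R\subset\Sigma_{1}\sqcup\Sigma_{2}$, with the improved weight there. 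The radial point estimates, propagation of singularities along $\Sigma$, and ellipticity off $\Sigma$ then bootstrap $u$ into $\dot{C}^{\infty}(M)$, so $f=Pu\in P\dot{C}^{\infty}(M)$ and $[f]=0$. (Equivalently, $[f]\mapsto G_{+\pm}f$ realizes $\dot{C}^{\infty}(M)/P\dot{C}^{\infty}(M)$ inside $\sols$.)

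\textbf{Positivity in the Feynman case.} For $++$ I would establish a boundary pairing (Green) identity over the two ends $S=S_{1}\sqcup S_{2}$, extending \cite{Vasy-positivity}. Tempered $u,v$ with $Pu,Pv\in\dot{C}^{\infty}(M)$ have oscillatory asymptotic expansions at each $S_{j}$, with leading incoming/outgoing scattering data $\beta_{j}^{u},\alpha_{j}^{u}\in C^{\infty}(S_{j})$, and, up to a fixed normalization,
\[
\langle Pu,v\rangle-\langle u,Pv\rangle=2i\lambda\sum_{j=1,2}\bigl(\langle\alpha_{j}^{u},\alpha_{j}^{v}\rangle_{L^{2}(S_{j})}-\langle\beta_{j}^{u},\beta_{j}^{v}\rangle_{L^{2}(S_{j})}\bigr).
\]
With $u=P_{++}^{-1}f$, $v=P_{++}^{-1}g$ and $(P_{++}^{-1})^{*}=P_{--}^{-1}$, the left-hand side equals $\langle f,(P_{++}^{-1}-P_{--}^{-1})g\rangle=\langle f,G_{++}g\rangle$. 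The Feynman condition is precisely that a single oscillatory branch survives at each end --- in these conventions $\beta_{j}(P_{++}^{-1}f)=0$ for $j=1,2$ --- so
\[
-i\langle f,G_{++}f\rangle=2\lambda\sum_{j=1,2}\bigl\|\alpha_{j}(P_{++}^{-1}f)\bigr\|_{L^{2}(S_{j})}^{2}\ \ge\ 0,
\]
which in addition shows $-i\langle f,G_{++}f\rangle=0$ forces all $\alpha_{j}$ to vanish, re-proving nondegeneracy for $++$. Conceptually this is Stone's formula: $P_{++}^{-1}$ is the two-ended limiting-absorption resolvent at energy $\lambda^{2}$, $G_{++}$ is (up to $2\pi i$) the density of its spectral measure, and $S_{1},S_{2}$ are the two scattering channels.

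\textbf{Main obstacle.} The substance is in the last two steps. For nondegeneracy one must pin down the intersection of the two spaces defining $G_{+\pm}$ over Schwartz right-hand sides, i.e. show that "above threshold at all of $R$ together with $Pu\in\dot{C}^{\infty}(M)$" genuinely forces $u\in\dot{C}^{\infty}(M)$, where the full radial-point and propagation analysis is used. For positivity the crux is the two-ended boundary pairing formula: one needs the leading oscillatory asymptotics of Feynman solutions at each of $S_{1}$ and $S_{2}$ with enough control on the leading coefficient, and one must verify that the Feynman choice selects a single branch with a \emph{consistent} sign at both ends, so that the two boundary terms add rather than cancel. Once the one-ended model of \cite{Vasy-positivity} is in place I expect this to reduce to bookkeeping over the two components of $R$, but it is the sign-matching across the two ends that genuinely has to be checked.
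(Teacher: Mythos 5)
Your proposal follows essentially the same route as the paper's proof: Hermiticity via the adjoint identity $(P_{+\pm}^{-1})^{*}=P_{-\mp}^{-1}$ (the paper establishes it by direct microlocal integration by parts, Lemma~\ref{thm:int-by-parts}, rather than your duality-of-weighted-spaces argument, but these are interchangeable), nondegeneracy from $G_{+\pm}f=0\Rightarrow P_{+\pm}^{-1}f=P_{-\mp}^{-1}f\in\X_{+\pm}\cap\X_{-\mp}=\dot{C}^{\infty}(M)$, and positivity from the two-ended boundary pairing formula with Feynman input (the paper feeds in $u=v=P_{--}^{-1}f$ where you take $P_{++}^{-1}f$; both give $-i\langle f,G_{++}f\rangle=2\lambda\|\rho^{\mp}P_{\mp\mp}^{-1}f\|^{2}$). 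The sign-matching across the two ends that you flag as the main obstacle is automatic once the boundary pairing formula (Proposition~\ref{thm:bpf}, after Melrose) is in hand: the formula pairs incoming data against incoming and outgoing against outgoing over all of $S=S_{1}\sqcup S_{2}$ at once with one global sign, and the Feynman threshold condition annihilates the outgoing data at \emph{both} ends simultaneously because $R^{-}=R_{1}^{-}\cup R_{2}^{-}$ is defined intrinsically by $\xi=-\lambda$, not by any ad hoc labeling of the ends.
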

    
    Using either of these forms to define a field algebra, one obtains a quantum theory of fields on $M$ solving $Pu=0$, indexed by (or ``smeared with'') test functions in $\dot{C}^{\infty}(M)$.

    We proceed to define two-point functions of quantum states which are analogous to the in/out vacuum states on asymptotically Minkowski spacetimes. To do this, we use a parametrization of solutions to $Pu=0$ by incoming and outgoing asymptotic data at infinity (or, in the compactified picture, on the boundary of $M$), due in this geometric setting to Melrose \cite{Melrose-AES}. Given a boundary-defining function $x$ on $M$ with respect to which $\g$ has the form Eq.~(\ref{eq:metric-0}), any tempered distributional solution $u$ such that $\wf(u)\subset R$ (the space of which solutions we denote $\sol$) has the form
    \[u=x^{\frac{n-1}{2}}(e^{-\frac{i\lambda}{x}}a_M^++e^{\frac{i\lambda}{x}}a_M^-)\]
    for some $a_M^{\pm}\in C^{\infty}(M)$. Conversely, for any $a^+\in C^{\infty}(S)$ there exists a solution of this form with $a^+_M|_S=a^+$ (and similarly for $a^-_M$). This allows one to define boundary data maps $\rho^{\pm\pm}:u\mapsto a^{\pm}=a^{\pm}_M|_S$ and Poisson operators $\mathcal{U}_0^{\pm\pm}=(\rho^{\pm\pm})^{-1}$. Moreover, in the case when there are two boundary components, it is possible to consider ``mixed'' boundary data, that is the incoming data at one component and the outgoing data at the other. This results in maps $\rho^{\pm\mp}:u\mapsto \pi_1 a^{\pm}+\pi_2 a^{\mp}$ and $\mathcal{U}_0^{\pm\mp}=(\rho^{\pm\mp})^{-1}$, where $\pi_1,\pi_2$ are projections onto the corresponding boundary component.

    A definition of two-point functions amounts to a decomposition of the Hermitian form $iG_{+\pm}$ into a difference (in the bosonic/CCR case) or sum (in the fermionic/CAR case) of positive Hermitian forms $\Lambda_1,\Lambda_2$ which satisfy $P\Lambda_j=\Lambda_j P=0$. We accomplish this on the level of boundary data by decomposing a function on $S$ into its projections onto the two components. We take advantage of the following sequence of continuous linear maps:
    \[\begin{tikzcd}
   {\dot{C}^{\infty}(M)} & {\sol} & {C^{\infty}(S)} & {C^{\infty}(S)} & {C^{-\infty}(M),}
   	\arrow["G_{+\pm}", from=1-1, to=1-2]
   	\arrow["\rho^{\alpha\beta}", from=1-2, to=1-3]
    \arrow["\pi_j", from=1-3, to=1-4]
    \arrow["\mathcal{U}_0^{\alpha\beta}", from=1-4, to=1-5]
   \end{tikzcd}\]
   where $\alpha,\beta\in\{+,-\}$ and $j=1,2$, and $C^{-\infty}(M)$ is the space of distributions dual to $\dot{C}^{\infty}(M)$. The first two maps are isomorphisms of Hermitian inner product spaces, all three of which provide equivalent descriptions of the quantum theory: in terms of Schwartz functions on $M$ (as originally defined), in terms of sufficiently regular solutions to the field equation on $M$, and in terms of asymptotic/boundary data on $S$. This is an analogue of the more familiar isomorphisms between the space of test functions, the space of space-compact smooth solutions to the wave/Klein--Gordon equation, and the space of compactly supported smooth Cauchy data on any Cauchy surface, commonly used to give equivalent formulations of QFT on globally hyperbolic spacetimes (see e.g. \cite{Gerard-book}).
   
   Our construction of two-point functions requires the algebra to be bosonic when it is based on the retarded/advanced propagators (which is the usual case) and fermionic when based on the (anti-)Feynman propagators (for which there is no obvious physical interpretation). Our main result is the following.
    
    \begin{theorem}
        We define $\Lambda^{\alpha\beta}_{j}:\dot{C}^{\infty}(M)\to C^{-\infty}(M)$ for $\alpha,\beta\in \{+,-\}$ and $j=1,2$ by  
        \begin{equation}
            \Lambda_1^{\pm\pm} f=+i\mathcal{U}_0^{\pm\pm}\pi_1\rho^{\pm\pm}G_{++}f,
            \hskip 50pt
            \Lambda_2^{\pm\pm} f=+i\mathcal{U}_0^{\pm\pm}\pi_2\rho^{\pm\pm}G_{++}f,
        \end{equation}
        \begin{equation}
            \Lambda_1^{\pm\mp} f=+i\mathcal{U}_0^{\pm\mp}\pi_1\rho^{\pm\mp}G_{+-}f,
            \hskip 50pt
            \Lambda_2^{\pm\mp} f=-i\mathcal{U}_0^{\pm\mp}\pi_2\rho^{\pm\mp}G_{+-}f.
        \end{equation}
        The pairs $(\Lambda_1^{++},\Lambda_2^{++})$ and $(\Lambda_1^{--},\Lambda_2^{--})$ are two-point functions of states on the CAR algebra defined by the Hermitian form $D_{++}$, while the pairs $(\Lambda_1^{+-},\Lambda_2^{+-})$ and $(\Lambda_1^{-+},\Lambda_2^{-+})$ are two-point functions of states on the CCR algebra defined by the Hermitian form $D_{+-}$. The two-point functions extend continuously to $f\in C^{-\infty}(M)$ such that $\wf^{\infty,r}(f)\cap R=\varnothing$ for some $r>\frac{1}{2}$ and satisfy the wavefront mapping property
        \begin{equation}
            \wf(\Lambda^{\alpha\beta}_jf) \backslash R\subset \lc(\wf(f)\cap \Sigma_j).
        \end{equation}
    \end{theorem}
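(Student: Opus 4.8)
The plan is to verify the three separate assertions of the theorem in turn: (i) that the listed pairs $(\Lambda_1^{\alpha\beta},\Lambda_2^{\alpha\beta})$ genuinely define two-point functions of quantum states on the stated $\mathrm{CAR}^{\mathrm{pol}}$, resp.\ $\mathrm{CCR}^{\mathrm{pol}}$ algebras; (ii) that each $\Lambda_j^{\alpha\beta}$ extends continuously to distributional $f$ with $\wf^{\infty,r}(f)\cap R=\varnothing$ for some $r>\tfrac12$; and (iii) the wavefront mapping bound. For (i), recall that a pair of two-point functions must satisfy: each $\Lambda_j$ is a positive Hermitian form, $P\Lambda_j=\Lambda_j P=0$, and $\Lambda_1\mp\Lambda_2=-iG_{+\pm}$ (difference in the bosonic CCR case, sum in the fermionic CAR case, matching the sign choices in the displayed formulas). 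The relations $P\Lambda_j^{\alpha\beta}=0=\Lambda_j^{\alpha\beta}P$ follow because $\mathcal U_0^{\alpha\beta}$ lands in $\sol$, hence in $\ker P$, while on the right $G_{+\pm}P=0$ on $\dot C^\infty(M)$ since $G_{+\pm}=P_{+\pm}^{-1}-P_{-\mp}^{-1}$ kills $P\dot C^\infty(M)$ (each generalized inverse satisfies $P_{\cdot}^{-1}P=\mathrm{Id}$ on $\dot C^\infty(M)$, so the difference is zero there). The decomposition identity $\Lambda_1^{\alpha\beta}\mp\Lambda_2^{\alpha\beta}=-iG_{+\pm}$ reduces, via $\mathcal U_0^{\alpha\beta}=(\rho^{\alpha\beta})^{-1}$ and $\pi_1+\pi_2=\mathrm{Id}$ on $C^\infty(S)$, to the tautology $\mathcal U_0^{\alpha\beta}(\pi_1\pm\pi_2)\rho^{\alpha\beta}G_{+\pm}=G_{+\pm}$ — here one must check that $G_{+\pm}$ indeed maps $\dot C^\infty(M)$ into $\sol$ (so that $\rho^{\alpha\beta}$ applies), which is part of the continuous-maps diagram in the introduction and should be imported from the analysis of $G_{+\pm}$ and the characterization $\wf(G_{+\pm}f)\subset R$.

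The heart of part (i) is positivity of each individual $\Lambda_j$. By the preceding Theorem (Vasy's positivity result), the total form $D_{++}$ is positive, i.e.\ $-iG_{++}$ is a positive Hermitian form on $\dot C^\infty(M)/P\dot C^\infty(M)$; and in the $(+-)$ case $D_{+-}$ is merely nondegenerate Hermitian. I would like to deduce positivity of $\Lambda_j^{\alpha\beta}$ from a pointwise (on the boundary) positivity: after passing through $\rho^{\alpha\beta}G$, the sesquilinear form $\langle f,\Lambda_j^{\alpha\beta}g\rangle$ should be expressible as an integral over $S_j$ of a manifestly nonnegative density in the asymptotic data $\rho^{\alpha\beta}G f$ and $\rho^{\alpha\beta}G g$. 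Concretely, I expect an identity of the shape $\langle f,\Lambda_j^{\alpha\beta} g\rangle = c\int_{S_j} \overline{(\rho^{\alpha\beta}G f)}\,(\rho^{\alpha\beta}G g)\,d\mathrm{vol}_{\h}$ with a fixed positive constant $c$ (depending on $\lambda$, $n$, and the chosen $x$), obtained by computing the boundary pairing $\langle f, G g\rangle = \langle f, \mathcal U_0(\pi_1+\pi_2)\rho G\,g\rangle$ via the asymptotic expansion $u=x^{(n-1)/2}(e^{-i\lambda/x}a_M^++e^{i\lambda/x}a_M^-)$ and integration by parts (a Green's-formula / boundary-pairing argument à la Melrose, extracting the leading boundary term from the oscillatory factors $e^{\pm i\lambda/x}$). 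The sign in front of $\Lambda_2^{\pm\mp}$ being $+i$ rather than $-i$ is exactly what is forced so that the mixed-data contribution from the second component is also positive (the $e^{-i\lambda/x}$ vs.\ $e^{i\lambda/x}$ asymptotics at $S_2$ flip the sign of the boundary density); once this is tracked correctly, positivity of both $\Lambda_1$ and $\Lambda_2$ in all four cases follows, and the CAR/CCR split ($\Lambda_1+\Lambda_2$ vs.\ $\Lambda_1-\Lambda_2$) is consistent with the Hermitian form being $D_{++}$ (which then must be positive, as it is) resp.\ $D_{+-}$ (which is indefinite, matching the fermionic-vs-bosonic dichotomy stated).

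For (ii), the continuous extension to $f$ with $\wf^{\infty,r}(f)\cap R=\varnothing$, $r>\tfrac12$: each building block of $\Lambda_j^{\alpha\beta}$ must be shown to extend. The Poisson operator $\mathcal U_0^{\alpha\beta}$ and the projections $\pi_j$ are already continuous on $C^\infty(S)$ (or its relevant distributional enlargement), so the issue is $\rho^{\alpha\beta}G_{+\pm}$ on such $f$. One invokes the microlocal propagation/radial-point estimates underlying the Fredholm theory of \cite{Vasy-AH-KdS}: $P_{+\pm}^{-1}$ and $P_{-\mp}^{-1}$ map weighted scattering Sobolev spaces with weight above the threshold $-\tfrac12$ at the relevant radial component to spaces with weight below it at the other, and the difference $G_{+\pm}$ then gains regularity at $R$, mapping $f$ with $\wf^{\infty,r}(f)\cap R=\varnothing$ into a space whose elements have the asymptotic-data expansion with $a_M^\pm$ of the required regularity; restricting to $S$ gives $\rho^{\alpha\beta}G_{+\pm}f$ in a distributional space on $S$ on which $\mathcal U_0$ and $\pi_j$ act continuously. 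The condition $r>\tfrac12$ is precisely the threshold regularity at the radial set needed for the module/propagation estimates to close. For (iii), the wavefront bound: $\wf(\Lambda_j^{\alpha\beta}f)\setminus R\subset \lc(\wf(f)\cap\Sigma_j)$. Outside $R$, the distribution $\Lambda_j^{\alpha\beta}f = -i\,\mathcal U_0^{\alpha\beta}\pi_j\rho^{\alpha\beta}G_{+\pm}f$ is a solution of $Pu=0$, hence its wavefront set is contained in $\Sigma$ and is invariant under the bicharacteristic (null) flow away from $R$ — this is exactly what $\lc(\cdot)$ (the "light cone", i.e.\ bicharacteristic flow-out within $\Sigma$) records. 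One then has to show that only the flow-out of $\wf(f)\cap\Sigma_j$ appears, not of $\wf(f)\cap\Sigma_{3-j}$: this is forced by the projection $\pi_j$, which by construction annihilates the asymptotic data supported over $S_{3-j}$, combined with the fact that $\wf(G_{+\pm}f)\subset R\cup(\text{flow-out of }\wf(f))$ decomposes along $\Sigma=\Sigma_1\sqcup\Sigma_2$ into pieces naturally associated to $S_1,S_2$ through $\rho^{\alpha\beta}$. Concretely: propagate $\wf(f)$ under $P_{+\pm}^{-1}$ and $P_{-\mp}^{-1}$ using the radial-point and real-principal-type propagation theorems to get $\wf(G_{+\pm}f)$ inside $R$; read off the boundary data's wavefront set over $S$; apply $\pi_j$ to retain only the $S_j$ part; then $\mathcal U_0^{\alpha\beta}$ propagates this off $R$ into $\Sigma_j$ along bicharacteristics, which is by definition $\lc(\wf(f)\cap\Sigma_j)$.

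\textbf{Main obstacle.} The principal difficulty is part (i), and within it the positivity of the \emph{individual} forms $\Lambda_j$ (as opposed to the total $-iG_{+\pm}$, whose positivity in the $(++)$ case is already in hand). This requires a clean boundary-pairing formula identifying $\langle f,\Lambda_j^{\alpha\beta}g\rangle$ with an honest $L^2(S_j;d\mathrm{vol}_\h)$ inner product of the reduced asymptotic data — getting the normalization constant $c>0$ right (and confirming its sign, which is where the $\pm i$ discrepancy between $\Lambda_2^{\pm\pm}$ and $\Lambda_2^{\pm\mp}$ originates) demands a careful computation of the leading-order term in the Green's identity for $P=\Delta_\g-\lambda^2$ using the oscillatory expansion at each boundary component, keeping track of how the factors $e^{\pm i\lambda/x}$, $x^{(n-1)/2}$, and the scattering volume form $x^{-n-1}\,dx\,d\mathrm{vol}_\h$ combine. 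Everything else (the algebraic identities $P\Lambda_j=\Lambda_j P=0$ and $\Lambda_1\mp\Lambda_2=-iG$, the distributional extension, and the wavefront bound) is, modulo the cited Fredholm and propagation machinery, essentially bookkeeping with the diagram of continuous maps.
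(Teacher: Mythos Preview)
Your treatment of part (i) is essentially the paper's: the algebraic identities $P\Lambda_j=\Lambda_jP=0$ and $\Lambda_1\mp\Lambda_2=-iG_{+\pm}$ are exactly as you describe, and the positivity is indeed obtained from the boundary pairing formula (Proposition~\ref{thm:bpf}), which yields precisely
\[
\langle f,\Lambda_j^{\alpha\beta}g\rangle \;=\; 2\lambda\,\langle \pi_j\rho^{\alpha\beta}P_{\alpha\beta}^{-1}f,\;\pi_j\rho^{\alpha\beta}P_{\alpha\beta}^{-1}g\rangle_{L^2_{\h}(S)},
\]
the $L^2(S_j)$ form you anticipate. So your ``main obstacle'' is correctly identified and correctly resolved.

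There are, however, genuine gaps in your parts (ii) and (iii). For the extension (ii), you write that $G_{+\pm}f$ lands ``in a space whose elements have the asymptotic-data expansion with $a_M^\pm$ of the required regularity; restricting to $S$ gives $\rho^{\alpha\beta}G_{+\pm}f$.'' For distributional $f$, $G_{+\pm}f\in\sol'$ has no smooth asymptotic expansion and there is no restriction map to $S$. The paper instead extends $\rho^{\pm}$ to $\soly'$ by \emph{duality}: one defines $\rho^{\pm}u\in C^{-\infty}(S)$ via the boundary pairing identity
\[
\langle \rho^{\pm}u,a\rangle \;=\; \pm\tfrac{i}{2\lambda}\big(\langle u,Pv^{\pm}\rangle-\langle Pu,v^{\pm}\rangle\big),\qquad v^{\pm}\in\mathcal{U}_\infty^{\pm}a,
\]
which makes sense exactly under the threshold condition $r>\tfrac12$ on $Pu$. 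This is the missing mechanism; without it your extension argument does not go through.

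For the wavefront bound (iii), your concrete sketch asserts that propagation ``get[s] $\wf(G_{+\pm}f)$ inside $R$,'' which is false for distributional $f$ (one only has $\wf(G_{+\pm}f)\setminus R\subset\lc(\wf(f)\cap\Sigma)$). More importantly, your plan to push wavefront sets through $\rho^{\alpha\beta}$ and $\mathcal{U}_0^{\alpha\beta}$ separately would require microlocal mapping properties of the Poisson operator (Melrose--Zworski), which the paper deliberately avoids. The paper's route is shorter: since $\pi_j\rho^{\alpha\beta}\big(\Lambda_j^{\alpha\beta}f\pm iP_{\alpha\beta}^{-1}f\big)=0$ by construction, Proposition~\ref{thm:wf-transl} forces $\wf(\Lambda_j^{\alpha\beta}f\pm iP_{\alpha\beta}^{-1}f)\cap(\Sigma_j\setminus R)=\varnothing$, so in $\Sigma_j\setminus R$ the wavefront set of $\Lambda_j^{\alpha\beta}f$ coincides with that of $P_{\alpha\beta}^{-1}f$, and the latter is controlled directly by propagation (Proposition~\ref{thm:PoS-G}). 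The vanishing of the $\Sigma_{3-j}$ contribution likewise comes from $\pi_{3-j}\rho^{\alpha\beta}\Lambda_j^{\alpha\beta}f=0$ and Proposition~\ref{thm:wf-transl}, not from tracking wavefront sets through the boundary.
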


    Here $\lc(U)$, for $U\subset\Sigma$, is the union of bicharacteristics (integral curves of the Hamilton flow within $\Sigma$) whose closures intersect $U$. This wavefront mapping property is a scattering-theoretic version of the one obeyed by Hadamard two-point functions for the Klein--Gordon field (see Section \ref{sec:QFT}). The Hadamard condition is commonly used as a physical requirement on states in QFT.

    To summarize, the content of the theory is specified on the same function space by a choice of one of two Hermitian forms, depending on a choice of either the advanced/retarded or Feynman/anti-Feynman propagators as fundamental, corresponding to bosonic or fermionic theories. Once this is chosen, two-point functions can be constructed for either theory. We also briefly investigate the relationship between the global propagators defined via Fredholm theory and the ``mock propagators'' one can construct by mimicking the standard procedure for defining Feynman propagators associated to Hadamard states (in our case using the states just described). We find that the ``mock propagators'' differ from the corresponding global propagators by operators which are completely regularizing away from the radial set, but in general they do not coincide.

    The paper is organized as follows. In Section \ref{sec:QFT}, we summarize relevant notions from quantum field theory and their relationship with microlocal analysis. In Section \ref{sec:AES}, we introduce the geometric setting and review results from scattering theory which are relevant to our construction. In Section \ref{sec:fields-states}, we define the Hermitian structure that allows us to talk about CCR/CAR algebras and construct the two-point functions $\Lambda^{\alpha\beta}_j$. Finally, in Section \ref{sec:extension} we extend the two-point functions to distributions and prove the wavefront mapping property as well as define and discuss the ``mock propagators''.
    
    The authors gratefully acknowledge support from the National Science Foundation under grant numbers DMS-1953987, DMS-2247004 (AV) and PHY-2014215, PHY-2310429 (MM). The authors are also very grateful to Micha{\l} Wrochna for fruitful discussions and for comments on the manuscript, to Jan Derezi{\'n}ski, Albert Law, and Robert Wald for helpful discussions, and to the anonymous referee for comments that helped to improve the presentation of the results.

\section{QFT background}
\label{sec:QFT}
    
    We begin by summarizing relevant notions from quantum field theory in curved spacetime, drawing on the perspectives of the expositions \cite{D-G,Gerard-book,F-R}. A standard physicist-oriented introduction is \cite{Wald-book}, and important early rigorous results on the subject are due to Wald \cite{Wald-BH} and Kay \cite{Kay}.
    
    There is no accepted universal definition of QFT, with different aspects emphasized depending on the particular application; the structure important to us is expressed in a set of minimal algebraic requirements. The Haag--Kastler algebraic approach \cite{H-K}, first adapted to this purpose by Dimock \cite{Dimock}, has been especially useful for formulating QFT in curved spacetimes, where the Fourier transform, used extensively in QFT in Minkowski space, is in general no longer as helpful. Microlocal analysis, which provides tools for working ``in momentum space'', to an extent, even in curved spacetimes, is useful in this context for the same reason. Since we define our model on a curved manifold, this is the set of tools most relevant to us. However, since we work in Riemannian signature from the outset, we will not consider conditions related to causality, which are often also built into QFT constructions.

\subsection{Algebraic structure of QFT}
\label{sec:alg-QFT}
    In quantum field theories describing particle physics, the fields propagate on a Lorentzian spacetime, which in the framework of QFT on curved spacetime is considered to be fixed beforehand. However, many properties of quantum field theories can be described on the level of the algebra of local observables without reference to the spacetime. We will construct a model that is defined on a Riemannian (rather than Lorentzian) manifold $M$ but has the algebraic structure of a \textbf{free charged scalar field theory.} In this section, we describe this algebraic structure.

    We will need to fix a test function space $\mathcal{S}$ and introduce objects called $\phi(f),\phi^*(f)$ for all $f\in\mathcal{S}$. These objects represent what one usually calls ``smeared fields'' and formally writes as
    \[\phi(f)=\int_M \overline{f(x)}\phi(x)\ dx,\hskip 20pt \phi^*(f)=\int_M f(x)\phi^*(x)\ dx.\]
    The quantum field $\phi(x)$ represents a distribution on $M$ valued in operators (acting on quantum states), with the distribution itself satisfying a certain equation of motion $P\phi=0$. We will consider $P=\Delta_{\g}-\lambda^2$ for a metric $\g$ on $M$ and constant $\lambda>0$.
    
    We make the set of formal polynomials in the smeared fields into a unital *-algebra over $\C$ called $\text{CCR}^{\text{pol}}(\mathcal{S}, D)$ or $\text{CAR}^{\text{pol}}(\mathcal{S}, D)$ by prescribing relations:
    \begin{itemize}
        \item 
            To respect the above interpretation, we require that $f\mapsto \phi(f)$ is conjugate-linear and $f\mapsto\phi^*(f)$ linear; $\phi(f)^*=\phi^*(f)$; and $\phi$ ``satisfies the equation of motion in the distributional sense'', which in our case means $\phi(Pf)=0$ for all $f\in\mathcal{S}$ since $P$ is formally self-adjoint.
        \item 
            If we want \textbf{bosonic charged fields} ($\text{CCR}^{\text{pol}}(\mathcal{S}, D)$), we set $[\phi(f),\phi(g)]=[\phi^*(f),\phi^*(g)]=0$ and $[\phi(f),\phi^*(g)]=D(f,g)I$;
        \item 
            If we want \textbf{fermionic charged fields} ($\text{CAR}^{\text{pol}}(\mathcal{S}, D)$), we set $\{\phi(f),\phi(g)\}=\{\phi^*(f),\phi^*(g)\}=0$ and $\{\phi(f),\phi^*(g)\}=D(f,g)I$.
    \end{itemize}
    Since $[\phi(f),\phi(g)^*]=[\phi(g),\phi(f)^*]^*$ and $\{\phi(f),\phi(g)^*\}=\{\phi(g),\phi(f)^*\}^*$, we see that in either case $D(f,g)$ must be a Hermitian sesquilinear form on $\mathcal{S}$. In addition, consistency with the equation of motion requires that $D(Pf,g)=D(f,Pg)=0$ for any $f,g\in \mathcal{S}$, which means that $D$ in fact defines a Hermitian form on the quotient space $\mathcal{S}/P\mathcal{S}$. Thus, a QFT of this type is specified by a choice of manifold $M$, function space $\mathcal{S}$, and Hermitian form $D$ on $\mathcal{S}/P\mathcal{S}$.

    There is a degree of flexibility, within limits, in the choice of the space of ``smearing functions'' $\mathcal{S}$: the space should be large enough that the QFT can be used to model interesting situations, but the functions should be regular enough that any required additional structures are well-defined. On the other hand, the Hermitian form $D$ which appears in the CCR/CAR is the source of a theory's physical content. In standard models, $D$ is related to a symplectic (in the bosonic case) or Euclidean (in the fermionic case) inner product structure on the space of ``classical'' solutions to the equations of motion, but we will work directly with the Hermitian structure.
    
\subsection{States and two-point functions}
    
    A \textbf{state} of the QFT is any complex linear functional $\omega$ on the *-algebra satisfying $\omega(I)=1$ and $\omega(A^*A)\geqslant 0$ for all elements $A$. The values of $\omega$ on algebra elements which represent observable quantities are interpreted as expectation values for measurements of those quantities in that state.
    
    The two terms of the (anti-)commutators discussed above are not required to be multiples of the identity. In a state $\omega$, the \textbf{two-point functions} are the expectation values of those two terms:
    \begin{equation}
    \lambda_1(f,g)=\omega(\phi(f)\phi^*(g)),\hskip 20pt \lambda_2(f,g)=\omega(\phi^*(g)\phi(f)).
        \label{eq:2pt-fcn}
    \end{equation}
    Positivity of $\omega$ can be used to show that $\lambda_1,\lambda_2$ are positive Hermitian sesquilinear forms on $\mathcal{S}/P\mathcal{S}$. By construction, $\lambda_1\mp\lambda_2=D$ for bosonic ($-$) and fermionic ($+$) fields respectively.
    
    Conversely, given a Hermitian sesquilinear form $D$ and a pair of positive Hermitian sesquilinear forms $(\lambda_1,\lambda_2)$ satisfying $\lambda_1\mp\lambda_2=D$, there is a unique state $\omega$ satisfying Eq.~(\ref{eq:2pt-fcn}) in the class of \textbf{gauge-invariant quasi-free states} on $\text{CCR}^{\text{pol}}(\mathcal{S}, D)$ or $\text{CAR}^{\text{pol}}(\mathcal{S}, D)$; here gauge invariance refers to invariance under the global $U(1)$ symmetry transformation implemented by multiplying $\phi$ by an arbitrary global phase and $\phi^*$ by its conjugate -- see \cite{Gerard-book} for details. Therefore, the problem of defining such a state is equivalent to the problem of defining the two-point functions.

    In practice, one often deals with the integral kernels of the two-point functions. For concreteness, let $(M,\g)$ be a globally hyperbolic, oriented, and time-oriented Lorentzian manifold (without boundary) and take $\mathcal{S}=C_{\mathrm{c}}^{\infty}(M)$. This is the most standard setting for QFT and provides context and motivation for the paper's constructions. Nevertheless, the reader should keep in mind that our setting in the later sections, where the metric is positive definite and a boundary is present, is not completely analogous. If $\lambda_1,\lambda_2:C_{\mathrm{c}}^{\infty}(M)\times C_{\mathrm{c}}^{\infty}(M)\to \C$ are a pair of two-point functions which are continuous in addition to the required algebraic properties, then they are associated to continuous linear operators $\Lambda_1,\Lambda_2:C_{\mathrm{c}}^{\infty}(M)\to\mathcal{D}'(M)$ and Schwartz kernels $K_1,K_2\in\mathcal{D}'(M\times M)$ defined by $\lambda_j(f,g)=\Lambda_j(g)[\bar{f}]=K_j(\bar{f}\otimes g)$. The property that $\lambda_j(Pf,g) =\lambda_j(f,Pg)=0$ for all $f,g\in C_{\mathrm{c}}^{\infty}(M)$ translates to $\Lambda_jP=P\Lambda_j=0$ and $P_xK_j(x,y)=P_yK_j(x,y)=0$. By microlocal elliptic regularity, the latter property implies that $\wfs(K_1),\wfs(K_2)\subset \overline{\Sigma}\times \overline{\Sigma},$ where $\Sigma$ is the characteristic set of $P$.

    Typically, $P$ is a hyperbolic differential operator such as the wave, Klein--Gordon, or Dirac operator. In this case, $\Sigma$ has two connected components $\Sigma^{\pm}$ (the future and past dual light cones). The Hamilton vector field associated to the principal symbol of $P$ defines a flow on $\Sigma$, and the bicharacteristic relation $C\subset \Sigma\times\Sigma$ is defined as the set of pairs of points of $\Sigma$ which are connected by a bicharacteristic (an integral curve of the flow).

    The Hadamard condition, which specifies a class of gauge-invariant quasi-free states for which one can define the momentum-energy tensor, is commonly used as a physical requirement on states of QFT in curved spacetime. It was originally precisely formulated by Kay and Wald \cite{K-W} in terms of asymptotics of the two-point function and reformulated by Radzikowski \cite{Radzikowski} as a microlocal spectrum condition. For the charged scalar field, a state is called an \textbf{Hadamard state} if its two-point functions satisfy
    \begin{equation}
    \wfs(K_1)'\subset (\Sigma^+\times\Sigma^+)\cap C,
    \hskip 50pt
    \wfs(K_2)'\subset (\Sigma^-\times\Sigma^-)\cap C,
        \label{eq:Hadamard-kernel}
    \end{equation}
    where we use the notation $\wfs(K)'=\{((x,\xi),(y,-\eta))\in T^*M\times T^*M\ \mid\ ((x,\xi),(y,\eta))\in\wfs(K)\}$. The relationship of this form of the condition for charged (complex) fields to Radzikowski's original condition for neutral (real) fields is explained in \cite{Gerard-book}. $\Lambda_1,\Lambda_2$ can be continuously extended to $\mathcal{E}'(M)$, and Eq.~(\ref{eq:Hadamard-kernel}) implies that the extensions satisfy the wavefront mapping property
    \begin{equation}
    \wfs(\Lambda_1u)\subset \lc(\wfs(u)\cap\Sigma^+),
    \hskip 50pt
    \wfs(\Lambda_2u)\subset \lc(\wfs(u)\cap\Sigma^-),
    \label{eq:Hadamard-mapping}
    \end{equation}
    where for $U\subset \Sigma$ we define $\mathrm{LC}(U)$ to be the union of bicharacteristics passing through $U$. This, in turn, means that the two-point functions $\lambda_1,\lambda_2$ are well-defined for pairs of compactly supported distributions whose wavefront sets are not connected by any bicharacteristic in $\Sigma^+$ (for $\lambda_1$)  or $\Sigma^-$ (for $\lambda_2$). However, the condition Eq.~(\ref{eq:Hadamard-mapping}) is weaker than Eq.~(\ref{eq:Hadamard-kernel}), in particular because it loses information about rescaling of the singular momenta under the action of $\Lambda_j$: Schwartz kernels distinguish between points of the form $((x,a\xi),(y,\eta))$ for different $a>0$, whereas the mapping property does not. See \cite{D-Z,D-W} for a treatment of related issues.

\subsection{Propagators and distinguished parametrices}
    
    In this section, we recall the notions of propagators and distinguished parametrices in the setting of typical QFT constructions. Let $(M,\g)$ and $\mathcal{S}$ be as above. For any subset $U\subset M$, we write $J^{\pm}(U)$ for the set of points in the causal future ($+$) or past $(-)$ of $U$. By propagators, we will mean operators which are special inverses of $P$ (as opposed to these operators' Schwartz kernels, which, up to conventional factors of $i$, is the common usage in the physics literature).
    
    For hyperbolic differential operators $P$ such as the Klein--Gordon or Dirac operators, the retarded and advanced propagators $G_{\mathrm{R}}, G_{\mathrm{A}}:C_{\mathrm{c}}^{\infty}(M)\to C^{\infty}(M)$ are uniquely defined by the properties that $PG_{\mathrm{R}}(f)=PG_{\mathrm{A}}(f)=f$ and $\supp(G_{\mathrm{R}} f)\subset J^+(\supp(f))$, $\supp(G_{\mathrm{A}}f)\subset J^-(\supp(f))$ for any $f\in C_{\mathrm{c}}^{\infty}(M)$. We denote their kernels $K_{\mathrm{R}}$, $K_{\mathrm{A}}$. We refer to \cite{B-G-P} for a detailed treatment of wave equations on globally hyperbolic spacetimes.

    At this point the formalisms diverge for physical bosonic and fermionic theories, illustrated by the Klein--Gordon and Dirac cases. For the Klein--Gordon equation, $i(G_{\mathrm{R}}-G_{\mathrm{A}})$ defines a Hermitian form, while for the Dirac equation it is anti-Hermitian. This reflects the fact that the spaces of solutions to these equations have a natural symplectic (Klein--Gordon) or Euclidean (Dirac) structure. The following discussion is based on the Klein--Gordon case, since our model will have the same algebraic structure. Thus, we assume that the sesquilinear form $D$ corresponding to $i(G_{\mathrm{R}}-G_{\mathrm{A}})$ is Hermitian; it necessarily vanishes on $PC_{\mathrm{c}}^{\infty}(M)$, and we take it to define a CCR algebra.
    
    Consider again a pair of continuous two-point functions with kernels $K_1,K_2$ and associated operators $\Lambda_1,\Lambda_2$. We have $K_1-K_2=i(K_{\mathrm{R}}-K_{\mathrm{A}})$ by definition, so the support properties of $G_{\mathrm{R}}$ and $G_{\mathrm{A}}$ imply the following identities for any valid choice of time function $t$:
    \[K_{\mathrm{R}}(x,y)=-i\theta\big(t(x)-t(y)\big)(K_1-K_2),\hskip 20pt K_{\mathrm{A}}(x,y)=i\theta\big(t(y)-t(x)\big)(K_1-K_2).\]
    One often needs time-ordered two-point functions, whose Schwartz kernels are defined by
    \[
    K_{\mathrm{F}}(x,y)
    =
    -i\Big(
    \theta\big(t(x)-t(y)\big)K_1
    +\theta\big(t(y)-t(x)\big)K_2
    \Big),
    \]
    \[
    K_{\bar{\mathrm{F}}}(x,y)
    =
    i\Big(
    \theta\big(t(y)-t(x)\big)K_1
    +\theta\big(t(x)-t(y)\big)K_2
    \Big).
    \]
    These definitions imply the following relations, where the Feynman and anti-Feynman propagators $G_{\mathrm{F}},G_{\bar{\mathrm{F}}}:C_{\mathrm{c}}^{\infty}(M)\to \mathcal{D}'(M)$ are the operators with kernels $K_{\mathrm{F}},K_{\bar{\mathrm{F}}}$: 
    \begin{equation}
        i(G_{\mathrm{R}}-G_{\mathrm{A}})=\Lambda_1-\Lambda_2,\hskip 20pt
        i(G_{\mathrm{F}}-G_{\bar{\mathrm{F}}})=\Lambda_1+\Lambda_2,\hskip 20pt
        G_{\mathrm{R}}+G_{\mathrm{A}}=G_{\mathrm{F}}+G_{\bar{\mathrm{F}}},
        \label{eq:props-bose}
    \end{equation}
    \begin{equation}
        G_{\mathrm{F}}=G_{\mathrm{R}}-i\Lambda_2=G_{\mathrm{A}}-i\Lambda_1,\hskip 20pt G_{\bar{\mathrm{F}}}=G_{\mathrm{R}}+i\Lambda_1=G_{\mathrm{A}}+i\Lambda_2,
        \label{eq:props-bose-c-to-f}
    \end{equation}
    \begin{equation}
        G_{\mathrm{R}}=G_{\mathrm{F}}+i\Lambda_2=G_{\bar{\mathrm{F}}}-i\Lambda_1,\hskip 20pt G_{\mathrm{A}}=G_{\mathrm{F}}+i\Lambda_1=G_{\bar{\mathrm{F}}}-i\Lambda_2.
        \label{eq:props-bose-f-to-c}
    \end{equation}

    From Eq.~(\ref{eq:props-bose}) we see that $i(G_{\mathrm{F}}-G_{\bar{\mathrm{F}}})$ is also a Hermitian form vanishing on $PC_{\mathrm{c}}^{\infty}(M)$, and that it is in fact positive. Moreover, we see that $(\Lambda_1,\Lambda_2)$ is a valid pair of two-point functions, now fermionic, on $\mathrm{CAR}^{\mathrm{pol}}(C_{\mathrm{c}}^{\infty}(M),i(G_{\mathrm{F}}-G_{\bar{\mathrm{F}}}))$. The physical interpretation of this fact is unclear: first of all, this is a fermionic algebra of spin-zero fields whose classical phase space is symplectic; also, the difference $G_{\mathrm{F}}-G_{\bar{\mathrm{F}}}$ does not vanish at spacelike separation, so fields whose anticommutators are given by this difference do not satisfy usual causality restrictions. Anticommuting spin-zero fields can exist in nonrelativistic theories (in which case Einstein causality also does not apply) \cite{Levy-Leblond}; anticommuting scalar ghost fields are also introduced in BRST quantization of gauge theories \cite{K-O,Krahe}. In any case, as far as this level of structure is concerned, we can use any Hermitian form to define either bosonic or fermionic fields (formally defined by the choice of CCR vs.\ CAR algebra), regardless of which equation and which propagators the form came from.

    As a pair, we call $(G_{\mathrm{R}},G_{\mathrm{A}})$ the \textbf{causal propagators} and $(G_{\mathrm{F}},G_{\bar{\mathrm{F}}})$ the \textbf{Feynman propagators}. Given either pair of propagators and the two-point functions, we can reconstruct the other pair using Eqs.~(\ref{eq:props-bose-c-to-f})-(\ref{eq:props-bose-f-to-c}) without reference to kernels or time-ordering (which will be important in the Riemannian setting). Note that the Feynman propagators are defined for a particular state; however, in Section~\ref{sec:dist-inv} we will describe an alternative notion of propagators under which all four are state-independent, and in the body of the paper we will consider the possibility that the Feynman propagators are fundamental as opposed to derived from the causal ones.

    The propagators' place in the greater mathematical picture is further clarified by the notion of distinguished parametrices introduced by Duistermaat and H\"ormander \cite{D-H}. A parametrix for a differential operator $P$ is a continuous linear operator $Q:C_{\mathrm{c}}^{\infty}(M)\to C^{\infty}(M)$ such that $PQ-I$ and $QP-I$ are smoothing operators (equivalently, they have smooth Schwartz kernels). Corresponding to two ways to choose the direction of propagation of singularities in each component of $\Sigma$, there are four distinguished classes of parametrices for $P$. We let $o\subset T^*M$ denote the zero section and $\Delta=\{((x,\xi),(x,\xi))\in (T^*M\backslash o)\times(T^*M\backslash o)\}$. Then a parametrix with kernel $K_Q$ is called
    \begin{itemize}
        \item retarded, if $\wfs(K_Q)'\subset \Delta\cup \{((x,\xi),(y,\eta))\in C\ \mid\ x\in J^+(y)\}$;
        \item advanced, if $\wfs(K_Q)'\subset \Delta\cup \{((x,\xi),(y,\eta))\in C\ \mid\ x\in J^-(y)\}$;
        \item Feynman, if $\wfs(K_Q)'\subset \Delta\cup \{((x,\xi),(y,\eta))\in C\ \mid\ x\in J^{\pm}(y)\text{ if }(x,\xi),(y,\eta)\in\Sigma^{\pm}\}$;
        \item anti-Feynman, if $\wfs(K_Q)'\subset \Delta\cup \{((x,\xi),(y,\eta))\in C\ \mid\ x\in J^{\mp}(y)\text{ if }(x,\xi),(y,\eta)\in\Sigma^{\pm}\}$.
    \end{itemize}
    Parametrices of each type exist for the Klein--Gordon operator, and any two parametrices of the same type differ only by a smoothing operator. The retarded and advanced propagators are always retarded and advanced parametrices. On the other hand, Radzikowski \cite{Radzikowski} proved that the (anti-)Feynman propagator associated to a state $\omega$ is a (anti-)Feynman parametrix if and only if $\omega$ is a Hadamard state. These wavefront set properties are the precise formulation of the idea that the Feynman propagator propagates positive frequencies forward in time and negative frequencies backward. 
    
    From the perspective of microlocal analysis, it is more natural to order points on a bicharacteristic not with respect to time but with respect to the Hamilton flow (which flows forward in time in $\Sigma^+$ and backward in $\Sigma^-$), which allows one to define the distinguished parametrices more generally. For a Feynman/anti-Feynman parametrix, $\wfs(K_Q)'\backslash\Delta$ consists of pairs of points of $\Sigma$ where the first point is downstream/upstream from the second along the flow. For a retarded/advanced parametrix, $\wfs(K_Q)'\backslash\Delta$ consists of pairs of points of $\Sigma^+$ where the first point is downstream/upstream from the second, and vice versa for points of $\Sigma^-$.

\subsection{Globally defined distinguished inverses}
\label{sec:dist-inv}

    In the setting of \cite{D-H}, one only gets distinguished parametrices, which are necessarily unique only modulo smooth kernels. Therefore, in QFT on globally hyperbolic spacetimes, the usual approach is the one described above: namely, the causal propagators (which are defined uniquely) are taken as a starting point, and for a given Hadamard state, the Feynman propagators are defined via Eq.~(\ref{eq:props-bose-c-to-f}). In spacetimes possessing exact or asymptotic symmetries, there often exist states which are of particular physical interest, the basic example being the Minkowski vacuum. Nevertheless, a globally hyperbolic spacetime admits many Hadamard states \cite{F-N-W,G-W-Hadamard}, and correspondingly there are many choices of Feynman propagators, none of which are a priori preferred. 
    
    However, in the presence of additional structure it can be possible to single out parametrices of all four classes which indeed define Fredholm generalized inverses, or in some cases even true inverses, of $P$ acting between naturally-defined function spaces. These results are part of a general Fredholm theory for non-elliptic problems, based on microlocal radial point estimates which combine with propagation-of-singularities theorems to yield global information about regularity and decay of solutions, introduced by Vasy \cite{Vasy-AH-KdS} and further developed by Baskin, Vasy, and Wunsch \cite{B-V-W} and Hintz and Vasy \cite{Hintz-V}. A key difference from the framework of \cite{D-H} is analysis on compactified spacetimes, which allows one to track behavior at infinity and effectively impose boundary conditions, which is necessary to define inverses. Feynman propagators were explicitly considered by Gell-Redman, Haber, and Vasy in \cite{GR-H-V}, and Vasy in \cite{Vasy-positivity} proved positivity properties relevant for QFT for propagators in a range of geometries. Another approach to defining distinguished Feynman inverses for Klein--Gordon operators on asymptotically Minkowski spacetimes was proposed by G\'erard and Wrochna \cite{G-W-Feynman-1,G-W-Feynman-2} and is related to the work of B\"ar and Strohmaier \cite{B-S-index,B-S-chiral} on the Dirac operator on a class of compact spacetimes. The problem of Feynman inverses is also discussed by Derezi\'nski and Siemssen \cite{D-S-static,D-S-evolution,D-S-QFT}.

    When distinguished inverses of this kind exist, an alternative perspective on propagators is possible. One can choose whether to use causal or Feynman propagators as the starting point, \textit{define} those propagators to be the corresponding inverses, and use their difference to define the Hermitian form on a CCR/CAR algebra. Given a Hadamard state on the algebra, then, the other pair of propagators can be defined using the algebraic relations in Eqs.~(\ref{eq:props-bose-c-to-f})-(\ref{eq:props-bose-f-to-c}). However, it is not guaranteed that the propagators thus defined coincide with the second pair of distinguished inverses. In fact, in a generic spacetime the four distinguished inverses taken together would \textit{not} be expected to satisfy Eq.~(\ref{eq:props-bose}) for any choice of two-point functions, meaning that there is no state to which they are simultaneously associated as causal and Feynman propagators in the sense of the previous section. Concretely, the global inverses can often be interpreted as ``in-out'' propagators, that is matrix elements of expressions involving products of fields with respect to two different asymptotic vacuum states, as opposed to the corresponding expectation values in any given state -- for instance, the ``in-in'' or ``out-out'' correlators defined using either asymptotic vacuum state (though it is interesting to precisely characterize the difference). Therefore, the notion of distinguished inverses is distinct from that of propagators associated to a state; it however has the advantage of being canonical when available. 
    
    In the remainder of the paper, we often refer to the distinguished inverses simply as propagators when there is no risk of ambiguity. Compared to these objects of Lorentzian QFT, in this paper the distinguished inverses described in Section~\ref{sec:dist-inv-here} play a role similar to ``in-out'' propagators, the states we construct are analogous to in/out vacuum states, and the ``mock propagators'' we consider in Section~\ref{sec:mock-props} are similar to ``in-in'' or ``out-out'' correlators. We also note that in the construction just described, one has a choice of starting with either the causal or the Feynman propagators as fundamental; it is an interesting question how the resulting theories are related to each other.

    Vasy and Wrochna in \cite{V-W} implement this procedure, among other settings, for the wave equation on asymptotically Minkowski spacetimes, where the Fredholm setup uses weighted b-Sobolev spaces. They show that the difference of either pair of complementary propagators defines a QFT, and they construct Hadamard two-point functions (bosonic in the advanced-minus-retarded case, fermionic in the Feynman-minus-anti-Feynman case) using asymptotic data at infinity. Using asymptotic \cite{B-J,D-M-P-cosmological,D-M-P-Unruh,Moretti-BMS,Moretti-Hadamard,G-W-Hadamard-Cauchy,G-W-in/out} or Cauchy \cite{Junker,G-W-Hadamard,G-O-W} data to specify two-point functions is a common approach which connects states to quantities with clear physical interpretations.

    A very similar setup is available for the Laplacian on Riemannian scattering spaces, in the sense defined by Melrose in \cite{Melrose-AES}. While it does not have a readily apparent physical interpretation, the absence of a spacetime setting also lends itself to a sharper focus on the analytic structure, from whose perspective the Feynman propagators are just as natural as the causal ones, if not more so. For $P=\Delta-\lambda^2$, connected components of the (scattering) characteristic set correspond not to different signs of a frequency variable, but simply to connected components of the manifold's boundary. Therefore, the analysis of $P$ on a Riemannian scattering space with two boundary components is closely analogous to the analysis of the Klein--Gordon operator on an asymptotically Minkowski Lorentzian manifold, as was already noted in \cite{Vasy-positivity}. This model is the focus of our paper.
    
\subsection{Relationship with Euclidean QFT}
	Since we construct a quantum field model on a Riemannian manifold, we briefly comment on the relationship to common notions of Euclidean QFT.
	
	In particle physics, Euclidean QFTs arise as Wick-rotated versions of Lorentzian ones. Usual prescriptions for Wick rotation, based on the Osterwalder--Schrader theorem \cite{O-S,O-S-2}, require the Euclidean spacetime to have a distinguished imaginary time coordinate; they also rely on the choice of a state whose Euclidean correlators satisfy a reflection positivity property with respect to imaginary time. In our model, there is no distinguished time direction, and we construct states to have positivity properties analogous to states in Lorentzian QFT; therefore, at least for a generic Riemannian scattering space, our model seems to be incompatible with this Euclidean QFT framework.
	
	Euclidean fields also arise in statistical physics in two distinct ways (see e.g. \cite{Fradkin}). In systems described by a quantum-mechanical many-body Hamiltonian, thermodynamic quantities at finite temperature can be expressed in terms of correlators of a Euclidean QFT. To obtain the Euclidean spacetime, the spatial coordinates of the system are supplemented by an auxiliary imaginary time coordinate with compact range given by the inverse temperature \cite{Matsubara, A-G-D}. In fact, after Wick rotation one obtains thermal Lorentzian QFT, i.e. these models are Euclidean QFTs in the sense of the previous paragraph where the chosen state is a KMS state \cite{Ful-R}. Since this approach requires a distinguished time coordinate before introducing Euclidean fields, our model seems incompatible with it as well.
	
	On the other hand, in statistical field theory, a space-varying field is used to represent the time-independent local value of an order parameter. The underlying system may be classical, and the connection to Euclidean QFT is obtained by formally replacing the temperature with $\hbar$ in the partition function. While one would need to identify an imaginary time coordinate and verify reflection positivity to Wick-rotate this to a Lorentzian QFT, the original purely spatial field is already of physical interest, including in cases where there is no notion of reflection positivity. This suggests it may be possible to interpret our model as a statistical field theory, but the connection is far from clear: in such theories, the field itself is classical and the connection to QFT comes from the path integral formalism, while our approach is instead based on canonical quantization and the operator formalism.
	
	 We also note that, on a formal level, the parameter $\lambda^2$ comes into our operator with the opposite sign relative to the Laplacian than the squared mass in a (non-tachyonic) Wick-rotated Klein--Gordon operator, which would be globally elliptic as a scattering operator and therefore admit a single distinguished inverse.
	
	Based on all this, rather than a Euclidean QFT in the usual sense, our model should be considered a toy model of conventional QFT defined on a Riemannian manifold.

\section{Review of asymptotically Euclidean scattering theory}
\label{sec:AES}
    In this section, we introduce our geometric setting and discuss properties of the Laplacian. Most constructions and results in this section are due to Melrose \cite{Melrose-AES}, but we include many details in order to be self-contained and emphasize aspects most relevant to our construction.

\subsection{Riemannian scattering spaces}
    We consider a Riemannian scattering space $M$, that is a compact $n$-manifold with boundary whose interior is equipped with a scattering metric, i.e. a Riemannian metric of the form
    \begin{equation}
    \g=\frac{dx^2}{x^4}+\frac{\h}{x^2},
        \label{eq:metric}
    \end{equation}
    where $x$ is a boundary-defining function on $M$ (that is a function in $C^{\infty}(M)$ which vanishes simply on the boundary and is strictly positive elsewhere) and $\h$ is a symmetric smooth two-tensor on $M$ with nondegenerate and positive-definite restriction to the boundary. The boundary is an infinite distance away from any interior point. We consider the case when the boundary $\partial M=S=S_1\sqcup S_2$ has two connected components, though this will not be important until we begin to consider propagation of singularities.
    
    We often work in local coordinates $(x,y_1,\ldots,y_{n-1})$ on a collar neighborhood $U\simeq [0,\varepsilon)_x\times S$ of $S$ in $M$, where $(y_1,\ldots,y_{n-1})$ restrict to local coordinates on $S$. We denote by $g_{ij}$ and $g^{ij}$, where $i,j=0,1,\ldots,n-1$, the components of the metric and inverse metric respectively with respect to such a coordinate system. We also write $|g|=|\det \g|$ in these coordinates. Similarly, we write
    \[\h=h_{00}\ dx^2+\sum_{j=1}^{n-1}h_{0j}\ dx\ dy_j+\sum_{i,j=1}^{n-1}h_{ij}\ dy_i\ dy_j .\]  
     Taking $\varepsilon>0$ small enough ensures $\h$ is nondegenerate throughout $U$. For any $x\in [0,\varepsilon)$, $h_{ij}(x,y)$ for $i,j=1,\ldots,n-1$ are the components with respect to coordinates $(y_1,\ldots,y_{n-1})$ of the restriction $h_x(y)$ of $\h(x,y)$ to the tangent space of a constant-$x$ hypersurface. We also write $h^{ij}(x,y)$ for the components of the inverse of $h_x(y)$ and $|h|(x,y)=|\det h_x(y)|$. The form of the metric implies the following relations for any $i,j=1,\ldots,n-1$, with the $\mathcal{O}(\bullet)$ errors all in $C^{\infty}(M)$:
    \begin{equation}
        h_{ij},h^{ij},|h|\in C^{\infty}(U);
        \label{eq:metric-h}
    \end{equation}
    \begin{equation}
        g_{00}=\frac{1}{x^4}(1+x^2 h_{00}),\hskip 20pt
        g_{0j}=\frac{1}{x^2}h_{0j},\hskip 20pt
        g_{ij}=\frac{1}{x^2}h_{ij},\hskip 20pt
        |g|=\frac{1}{x^{2n+2}}(|h|+\mathcal{O}(x^2));
        \label{eq:metric-g}
    \end{equation}
    \begin{equation}
        g^{00}=x^4(1+\mathcal{O}(x^2)),\hskip 20pt
        g^{0j}=\mathcal{O}(x^4),\hskip 20pt
        g^{ij}=x^2(h^{ij}+\mathcal{O}(x^2)).
        \label{eq:metric-g-inv}
    \end{equation}
    
    We denote by $\dot{C}^{\infty}(M)$ the space of functions in $C^{\infty}(M)$ which vanish to infinite order at the boundary. A Fr{\'e}chet-space topology on $\dot{C}^{\infty}(M)$ is defined by the countable directed family of seminorms $\lVert x^{-l}u\rVert _{C^m(M)}$ for $m,l\in\N$. Its dual space of distributions is denoted $C^{-\infty}(M)$, which we consider equipped with the weak-* topology. We identify sufficiently regular functions on $M^{
    \circ}$ with elements of $C^{-\infty}(M)$ via the $L^2$ pairing based on the volume density of $\g$, that is,
    \[u[\varphi]=\langle \bar{u},\varphi\rangle_{L^2_{\g}}=\int_M u\varphi\ \dv\]
    for $\varphi\in\dot{C}^{\infty}(M)$ and e.g. $u$ locally integrable and such that $x^lu$ is bounded for some $l$. 
    
    We also write $C^{-\infty}(S)$ for the space of distributions on the boundary dual to $C^{\infty}(S)$ and identify sufficiently regular functions with distributions using the $L^2$ pairing based on the area density of $\h$, which we denote $L^2_{\h}$. Since $S$ is compact, we have $C^{\infty}(S)=\bigcap_{s\in\N}H^s(S)$, where $H^s(S)$ are $L^2$-based Sobolev spaces on $S$. 
    
    We often omit the subscripts $L^2_{\g}$ and $L^2_{\h}$ to lighten notation in calculations involving pairings of functions on $M$ and $S$ respectively. Note that we take sesquilinear forms to be linear in the second entry.
    
    The model example of a Riemannian scattering space is the radial compactification of $\R^n$, defined by considering inverse polar coordinates $x=\frac{1}{r}$, $\theta\in S^{n-1}$ and adding a surface of $x=0$, under which the Euclidean metric takes the form Eq.~(\ref{eq:metric}) and $\dot{C}^{\infty}(M)$ and $C^{-\infty}(M)$ are the images of the spaces of Schwartz functions and tempered distributions respectively. Scattering metrics exist on any smooth manifold with boundary, and we often refer to elements of $\dot{C}^{\infty}(M)$ and $C^{-\infty}(M)$ as Schwartz functions and tempered distributions in the general setting as well. An example with two boundary components can be built out of a connected sum of two copies of $\R^n$ radially compactified at infinity.

    We study the operator $P=\Delta_{\g}-\lambda^2$ for constant $\lambda>0$, where $\Delta_{\g}$ is the positive Laplacian associated to $\g$. From Eq.~(\ref{eq:metric-h})-(\ref{eq:metric-g-inv}), in coordinates it takes the form
    \begin{equation}
    \begin{split}
        {P
    =
    -
    (x^2\partial_x)^2
    -
    \lambda^2
    +
    (n-1)x(x^2\partial_x)
    +
    x^2\Delta_{\h}
    +\sum_{i,j=1}^{n-1}\Big(\mathcal{O}(x^2)(x^2\partial_x)^2+}
    \\
    {+\mathcal{O}(x^2)(x\partial_{y_i})(x\partial_{y_j})+\mathcal{O}(x)(x\partial_{y_i})(x^2\partial_x)+\mathcal{O}(x^2)(x^2\partial_x)+\mathcal{O}(x^2)(x\partial_{y_j})\Big),}
    \end{split}
    \label{eq:op}
    \end{equation}
    where
    \[
    \Delta_{\h}=-\frac{1}{\sqrt{|h|}}\sum_{i,j=1}^{n-1}\partial_{y_i}\Big(\sqrt{|h|}h^{ij}\partial_{y_j}\Big)
    \]
    restricts to the Laplacian of $h_x$ on any surface of constant $x$, and the $\mathcal{O}(\bullet)$ coefficients are in $C^{\infty}(M)$.

\subsection{The scattering calculus}
    The microlocal analysis of $P$ is based on the calculus of scattering pseudodifferential operators, introduced in this setting by Melrose in \cite{Melrose-AES}. We refer to \cite{Vasy-minicourse} for a systematic introduction, while here we state the main facts for convenience. They are direct generalizations of properties of the classical pseudodifferential calculus which allow one to track both smoothness and decay at infinity.
    
    We adopt the compactified phase space picture, in which characteristic/wavefront sets, etc. lie on the boundary ``at infinity'' of a compactification of $T^*M^{\circ}$, defined as follows. First, $\V{b}$ is defined as the space of smooth vector fields on $M$ tangent to the boundary, spanned in any local coordinates $(x,y_1,\ldots,y_{n-1})$ as above by $x\partial_x,\partial_{y_1},\ldots,\partial_{y_{n-1}}$. The space of scattering vector fields is defined by $\V{sc}=x\V{b}$. Then instead of the usual tangent bundle $TM$ one considers the scattering tangent bundle $\tsc$, spanned by $x^2\partial_x,x\partial_{y_1},\ldots,x\partial_{y_{n-1}}$, so $\V{sc}$ is its space of smooth sections. Next, the scattering cotangent bundle $\tscc$ is defined as the dual bundle to $\tsc$; it is spanned by $dx/x^2,dy_1/x,\ldots,dy_{n-1}/x$. Finally, the compactified phase space $\tsccc$ is defined by radial compactification in each fiber of $\tscc$. The resulting space is a manifold with corners; we distinguish the boundary face over $S$ (with defining function $x$) as ``base infinity'' and the boundary face at infinite sc-momentum over every point of $M$ (for which we can choose a defining function $\rho$) as ``fiber infinity''.

    The space $\psc{m,l}$ of (classical) scattering pseudodifferential operators of order $(m,l)$ on $M$ is constructed by quantization of symbols in $S_{\mathrm{cl}}^{m,l}(M)=x^{-l}\rho^{-m}C^{\infty}(\tsccc)$. The total space is $\psc{\infty,\infty}=\bigcup_{m,l\in\R}\psc{m,l}$. The order-$(m,l)$ scattering principal symbol $\sigma_{\mathrm{sc}}^{m,l}(A)$ is the equivalence class in $S_{\mathrm{cl}}^{m,l}/S_{\mathrm{cl}}^{m-1,l-1}$ of the full symbol of $A$. The subspace of differential operators lying in $\psc{m,0}$ for a non-negative integer $m$ is denoted $\diff[m]{sc}$; the space $\diff{sc}$ of scattering differential operators of any order is generated as an algebra over $C^{\infty}(M)$ by $\V{sc}$.

    Consider $A\in\psc{\infty,\infty}$ with symbol $a$. $A$ is called elliptic of order $(m,l)$ at a point $\zeta\in \partial\tsccc$ if there exists a neighborhood $U\subset \tsccc$ of $\zeta$ such that on $U\backslash \partial\tsccc$, we have $|a|\geqslant Cx^{-l}\rho^{-m}$. The set $\Sigma_{m,l}(A)$ of points of $\partial\tsccc$ where $A$ is not elliptic is called the characteristic set of $A$, the orders often being omitted when clear from context.
    
    On the other hand, the essential support $\wf'(A)\subset \partial\tsccc$ is defined by the fact that $\zeta\notin\wf'(A)$ if there exists a neighborhood $U\subset \tsccc$ of $\zeta$ such that on $U\backslash \partial\tsccc$, we have $|a|\leqslant C_{m,l}x^{l}\rho^{m}$ for every $m,l\in\R$. $\wf'(A)=\varnothing$ if and only if $A\in\psc{-\infty,-\infty}=\bigcap_{m,l\in\R}\psc{m,l}$.
    
    Let $A\in\psc{m_1,l_1},B\in\psc{m_2,l_2}$ with principal symbols $a,b$ respectively. Then
    \begin{itemize}
        \item 
            $\wf'(A+B)\subset \wf'(A)\cup\wf'(B)$.
        \item 
            $AB\in\psc{m_1+m_2,l_1+l_2}$, with principal symbol $ab$ and $\wf'(AB)\subset\wf'(A)\cap\wf'(B)$.
        \item 
            $[A,B]\in\psc{m_1+m_2-1,l_1+l_2-1}$, with principal symbol $-i$ times the Poisson bracket of $a$ and $b$.
        \item 
            $A^*\in\psc{m_1,l_1}$, with principal symbol $\bar{a}$ and $\wf'(A^*)=\wf'(A)$. Here $A^*$ is defined as a Fr\'echet-space adjoint, i.e. $(A^*u)[\overline{\varphi}]=u[\overline{A\varphi}]$ for any $u\in C^{-\infty}(M)$ and $\varphi\in\dot{C}^{\infty}(M)$.
    \end{itemize}

    The absolute scattering wavefront set $\wf(u)\subset \partial\tsccc$ for $u\in C^{-\infty}(M)$ is defined by the fact that $\zeta\notin\wf(u)$ if there exists $Q\in\psco$ which is elliptic at $\zeta$ such that $Qu\in\dot{C}^{\infty}(M)$. $\wf(u)=\varnothing$ if and only if $u\in\dot{C}^{\infty}(M)$.
    
    Regularity of distributions under the action of $\psc{\infty,\infty}$ is most naturally described using weighted Sobolev spaces, defined by
    \[\hsc{s,r}=\{u\in C^{-\infty}(M)\ \mid\  Au\in L^2_{\g} \text{ for all } A\in \psc{s,r}\}.\]
    The order-$(s,r)$ wavefront set $\wf^{s,r}(u)\subset\partial\tsccc$ of $u\in C^{-\infty}(M)$ is defined by the fact that $\zeta\notin\wf^{s,r}(u)$ if there exists $Q\in\psco$ which is elliptic at $\zeta$ such that $Qu\in \hsc{s,r}$. This captures the idea that $u$ is ``in $\hsc{s,r}$ in a neighborhood of $\zeta$'' if $\zeta\notin\wf^{s,r}(u)$. In the same spirit, one can define variable-order spaces $\hsc{s,r}$ where either or both of $s$ and $r$ are not constant but vary smoothly on $\partial\tsccc$ (see \cite{Vasy-minicourse} for details). One has $\wf^{s,r}(u)=\varnothing$ if and only if $u\in \hsc{s,r}$, and $\wf^{s,r}(u+v)\subset\wf^{s,r}(u)\cup\wf^{s,r}(v)$. Locally, for any point $q\in M$ we have $\pi^{-1}(q)\cap \wf^{s,r}(u)=\varnothing$ if and only if there exists $\varphi\in C_{\mathrm{c}}^{\infty}(M)$ which is nonzero at $q$ and such that $\varphi u\in \hsc{s,r}$.
    
    Importantly, $\dot{C}^{\infty}(M)=\bigcap_{s,r\in\R}\hsc{s,r}$ and $C^{-\infty}(M)=\bigcup_{s,r\in\R}\hsc{s,r}$. It follows that $\wf(u)=\bigcup_{s,r\in\R}\wf^{s,r}(u)$.

    Weighted Sobolev spaces of fixed order are Hilbert spaces with norm $\lVert u\rVert _{s,r}=\lVert \Lambda_{s,r}u\rVert _{L^2_{\g}}$, where $\Lambda_{s,r}$ is an invertible elliptic operator in $\psc{s,r}$, different choices leading to equivalent norms. The variable-order spaces are also Hilbert spaces with norm $\lVert u\rVert ^2_{s,r}=\lVert u\rVert _{s_0,r_0}^2+\lVert \Lambda_{s,r}u\rVert _{L^2_{\g}}^2$, where $s_0,r_0\in\R$ are lower bounds on $s,r$. The dual space of $\hsc{s,r}$ is $\hsc{-s,-r}$, acting by the $L^2_{\g}$ pairing. The topology of $\dot{C}^{\infty}(M)$ can be generated by the countable directed family of norms $\lVert u\rVert _{s,s}$ for $s\in\N$. Operators $A\in \psc{m,l}$ map continuously $\dot{C}^{\infty}(M)\to\dot{C}^{\infty}(M)$, $C^{-\infty}(M)\to C^{-\infty}(M)$, and $\hsc{s,r}\to \hsc{s-m,r-l}$ for any orders $s,r$. Additionally, for $u\in C^{-\infty}(M)$ and any $s,r$ we have
    \begin{itemize}
        \item \textit{(Microlocality)} $\wf^{s-m,r-l}(Au)\subset \wf'(A)\cap \wf^{s,r}(u)$;
        \item \textit{(Microlocal elliptic regularity)} $\wf^{s,r}(u)\subset \wf^{s-m,r-l}(Au)\cup\Sigma_{m,l}(A)$.
    \end{itemize}

    The $L^2_{\g}$ pairing on $\dot{C}^{\infty}(M)\times \dot{C}^{\infty}(M)$ extends to $\dot{C}^{\infty}(M)\times C^{-\infty}(M)$ and $C^{-\infty}(M)\times\dot{C}^{\infty}(M)$ according to $\langle\varphi, u\rangle=u[\bar{\varphi}]$ and $\langle u,\varphi\rangle=\overline{u[\bar{\varphi}]}$ respectively. It further extends to $\{(u,v)\in C^{-\infty}(M)\times C^{-\infty}(M)\ | \ \wf(u)\cap\wf(v)=\varnothing\}$ according to $\langle u,v\rangle=\langle Qu,v\rangle+\langle u,(I-Q^*)v\rangle$ for any $Q\in\psco$ such that $\wf'(Q)\cap \wf(u)=\varnothing $ and $\wf'(I-Q)\cap\wf(v)=\varnothing$, the result being independent of $Q$. The basic ``integration by parts'' formula $\langle u, Av\rangle=\langle A^*u,v\rangle$, valid when either $u$ or $v\in\dot{C}^{\infty}(M)$ (by definition of the adjoint) or $A\in\psc{-\infty,-\infty}$ (since such $A$ have rapidly decaying Schwartz kernels), admits the following generalization.
    \begin{lemma} 
    \label{thm:int-by-parts}
        Let $u,v\in C^{-\infty}(M)$, $A\in \psc{\infty,\infty}$. If $\wf(u)\cap\wf(v)\cap\wf'(A)=\varnothing$, then $\langle u, Av\rangle_{L^2_{\g}}=\langle A^*u,v\rangle_{L^2_{\g}}$.
    \end{lemma}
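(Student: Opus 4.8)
The plan is to reduce the identity to the two cases in which it is already available --- one factor in $\dot{C}^{\infty}(M)$, or $A$ smoothing --- by means of a microlocal partition of $A$ adapted to $\wf(u)$ and $\wf(v)$. First one checks that both sides are defined: by microlocality $\wf(Av)\subset\wf'(A)\cap\wf(v)$, so $\wf(u)\cap\wf(Av)\subset\wf(u)\cap\wf(v)\cap\wf'(A)=\varnothing$ and the extended pairing $\langle u,Av\rangle$ is defined; symmetrically $\wf(A^*u)\subset\wf'(A)\cap\wf(u)$, which is disjoint from $\wf(v)$, so $\langle A^*u,v\rangle$ is defined.

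Next, the sets $\wf(u)\cap\wf'(A)$ and $\wf(v)\cap\wf'(A)$ are closed subsets of $\partial\tsccc$ and, by hypothesis, disjoint. I would choose $q\in C^{\infty}(\tsccc)$ with $q\equiv 1$ near $\wf(u)\cap\wf'(A)$ and $q\equiv 0$ near $\wf(v)\cap\wf'(A)$, let $Q\in\psco$ be a quantization of $q$, and set $A_1=QA$, $A_2=(I-Q)A$. Since $\wf'(A_1)\subset\wf'(Q)\cap\wf'(A)$ and $\wf'(Q)$ is disjoint from $\wf(v)\cap\wf'(A)$, we get $\wf'(A_1)\cap\wf(v)=\varnothing$; symmetrically $\wf'(A_2)\cap\wf(u)=\varnothing$. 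By linearity of the pairing in each slot (all terms being well-defined), it suffices to prove the identity for $A_1$ and for $A_2$ separately, and since the pairing is conjugate-symmetric and $\wf'(A_2^*)=\wf'(A_2)$, the case of $A_2$ is the case of $A_1$ with $u,v$ interchanged and $A_1$ replaced by $A_2^*$. Thus the whole statement reduces to the following sub-claim: \emph{if $B\in\psc{\infty,\infty}$ and $w,z\in C^{-\infty}(M)$ with $\wf'(B)\cap\wf(z)=\varnothing$, then $Bz\in\dot{C}^{\infty}(M)$ and $\langle w,Bz\rangle=\langle B^*w,z\rangle$.}

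To prove the sub-claim I would exploit that $\langle B^*w,z\rangle$ may be computed with any admissible auxiliary operator and choose a convenient one. Because $\wf'(B)$ and $\wf(z)$ are disjoint and closed, there is $Q'\in\psco$ with $\wf'(Q')\cap\wf'(B)=\varnothing$ and $\wf'(I-Q')\cap\wf(z)=\varnothing$ simultaneously. Then $Q'B^*\in\psc{-\infty,-\infty}$ and $(I-(Q')^*)z\in\dot{C}^{\infty}(M)$, and unwinding the definition of the extended pairing,
\[
\langle B^*w,z\rangle=\langle Q'B^*w,z\rangle+\langle B^*w,(I-(Q')^*)z\rangle .
\]
In the first term the intervening operator $Q'B^*$ is smoothing, so the unconditional integration-by-parts identity gives $\langle Q'B^*w,z\rangle=\langle w,B(Q')^*z\rangle$; in the second term one factor is Schwartz, so the basic identity gives $\langle B^*w,(I-(Q')^*)z\rangle=\langle w,B(I-(Q')^*)z\rangle$. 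Adding these, and noting that $B(Q')^*z$, $B(I-(Q')^*)z$, and $Bz=B(Q')^*z+B(I-(Q')^*)z$ all lie in $\dot{C}^{\infty}(M)$ (each has empty wavefront set, since $\wf(B\,\cdot\,)\subset\wf'(B)\cap\wf(z)=\varnothing$), yields $\langle B^*w,z\rangle=\langle w,Bz\rangle$, which finishes the sub-claim and hence the lemma.

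The step I expect to be the main obstacle is the bookkeeping in this last argument: the definition of the pairing of two general distributions forces a splitting into pieces in which either a Schwartz function or a smoothing operator appears, and this is possible precisely because $\wf'(B)$ is disjoint from $\wf(z)$ --- exactly the room needed to place $\wf'(Q')$ off $\wf'(B)$ while keeping $Q'$ microlocally the identity near $\wf(z)$, so that $Q'B^*$ becomes smoothing. Everything else is a routine application of microlocality, the adjoint identity for smoothing operators, and the wavefront calculus recorded above.
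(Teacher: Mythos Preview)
Your proof is correct and follows essentially the same strategy as the paper: first establish the special case where $\wf'(A)$ is disjoint from one of $\wf(u),\wf(v)$ (your sub-claim), then reduce the general case via a microlocal cutoff $Q$. The only cosmetic difference is that you split $A=QA+(I-Q)A$, whereas the paper writes $Av=AQv+A(I-Q)v$ and applies the special case to the pieces; both reductions work for the same reason.
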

    \begin{proof} 
        Note that both pairings are well-defined under the hypotheses since the wavefront sets of the factors are disjoint.
    
        Let us first suppose that $\wf'(A)\cap\wf(v)=\varnothing$. Then we can choose $Q\in\psco$ such that $\wf'(Q)\cap\wf'(A)=\wf'(I-Q)\cap\wf(v)=\varnothing$. Using the facts that $AQ\in \psc{-\infty,-\infty}$ and $(I-Q)v\in\dot{C}^{\infty}(M)$, we get
        \begin{equation}
        \langle u, Av\rangle
        =
        \langle u,A Qv\rangle
        +
        \langle u, A(I-Q)v\rangle
        =
        \langle Q^*A^* u,v\rangle
        +
        \langle A^*u, (I-Q)v\rangle
        =
        \langle A^*u,v\rangle.
            \label{eq:int-by-parts}
        \end{equation}
        The proof in the case that $\wf'(A)\cap\wf(u)=\varnothing$ is similar, starting from the right-hand side and using the fact that $\wf'(A^*)=\wf'(A)$.

        Now consider the general case $\wf(u)\cap\wf(v)\cap\wf'(A)=\varnothing$. Choose $Q\in\psco$ such that $\wf'(Q)\cap (\wf'(A)\cap\wf(u))=\wf'(I-Q)\cap(\wf'(A)\cap\wf(v))=\varnothing$. Then using the previous result combined with the fact that $\wf'(AQ)\cap \wf(u)=\wf'(A)\cap\wf((I-Q)v)=\varnothing$, every step of Eq.~(\ref{eq:int-by-parts}) is justified again.
    \end{proof}

\subsection{The b-calculus and the Mellin transform}
\label{sec:b-Mellin}
    While we do not require the full pseudodifferential machinery of the b-calculus, we use some facts about b-differential operators and their associated weighted Sobolev spaces.

    $\diff[m]{b}$ is defined as the space of linear combinations (with coefficients in $C^{\infty}(M)$) of products of order up to $m$ of vector fields in $\V{b}$, and we write $\diff{b}$ for the space of b-differential operators of any order. The weighted b-Sobolev spaces for $r\in\R$ and non-negative integers $s$ are defined by
    \[
    \hb{s,r}=\{u\in C^{-\infty}(M)\ \mid\ Au\in L^2_{\mathrm{b}}(M)\text{ for all }A\in x^{-r}\diff[s]{b}\},
    \]
    where $L^2_{\mathrm{b}}(M)$ is the space of functions on $M$ square-integrable with respect to the density corresponding to any nonvanishing smooth section of the top exterior power of ${}^{\mathrm{b}}T^*M$, the dual bundle to the bundle ${}^{\mathrm{b}}TM$ whose space of smooth sections is $\V{b}$. In coordinates, such densities are smooth multiples of $(dx/x) dy_1\ldots dy_{n-1}$ near the boundary. An operator $A\in x^{-l}\diff[m]{b}$ maps $\hb{s,r}\to \hb{s-m,r-l}$ for any $s,r,m,l$.

    Near the boundary, 
    \[\dv=\sqrt{|g|}\ dx\ dy_1\ldots dy_{n-1}=x^{-n}\Big(\sqrt{|h|}+\mathcal{O}(x^2)\Big)(dx/x)\ dy_1\ldots dy_{n-1},\]
    so $L^2_{\g}=x^{\frac{n}{2}}L^2_{\mathrm{b}}$. Then, since $x^{m}\diff[m]{b}\subset \diff[m]{sc}\subset \diff[m]{b}$, the sc- and b- weighted Sobolev spaces are related by $\hb{s,r+\frac{n}{2}}\subset\hsc{s,r}\subset \hb{s,r+\frac{n}{2}-s}$ for $s\geqslant 0$.

    To study the behavior of a function $u_0$ on $M$ at infinity, we often consider $u=\varphi u_0$ for some cutoff $\varphi\in C^{\infty}(M)$ which is supported in a collar neighborhood of $S$ of the form $\{x<\varepsilon\}$ and is equal to 1 in a neighborhood of $S$. $u$ can be interpreted as a function on $[0,\varepsilon)_x\times S$, and by setting it to zero for $x>\varepsilon$, as a function on $[0,+\infty)_x\times S$. We say $u\in H_{\mathrm{b}}^{s,r}([0,+\infty)\times S)$ if $Qu\in x^rL^2_{\mathrm{b}}([0,+\infty)\times S)$ for all differential operators $Q$ which are products of order up to $s$ of $x\partial_x$ and elements of $\mathrm{Diff}(S)$. Here $u\in L^2_{\mathrm{b}}([0,+\infty)\times S)$ if $u$ is square-integrable with respect to $dA_{\h}\ dx/x$, where $dA_{\h}$ is the area density defined on $S$ by $h_0$.
    
    If $u_0$ (and therefore also $u$) is in a weighted b-Sobolev space, it can be useful to consider its Mellin transform, defined (initially for instance on $C^\infty_c((0,\infty)\times S$) for $\sigma\in\C$ and $y\in S$ by
    \[\hat{u}(\sigma,y)=\int_0^{\infty} x^{-i\sigma}u(x,y)\frac{dx}{x}.\]
    Some useful properties are (see Section 3 of \cite{Vasy-AH-KdS} for details):
    \begin{itemize}
        \item 
            For any $r\in\R$ and non-negative integer $s$, the map $u\mapsto (\sigma\mapsto\hat{u}(\sigma,\bullet))$ is an isomorphism
            \[
            H_{\mathrm{b}}^{s,r}([0,+\infty)\times S)\to \{v\in L^2(\R-ir;H^s(S))\ \mid\ (1+|\sigma|^2)^{\frac{s}{2}}v\in L^2(\R-ir;L^2(S))\}.
            \]
            Moreover, the Mellin inversion formula
            \[u(x,y)=\frac{1}{2\pi}\int_{\R-ir}x^{i\sigma}\hat{u}(\sigma,y)\ d\sigma\]
            holds (the inverse transform defined via continuous extension from a dense subspace).
        \item 
            If $u\in H_b^{s,r}([0,+\infty)\times S)$ and $\supp(u)\subset [0,\varepsilon)_x\times S$ for some $\varepsilon>0$, then the map $\sigma\mapsto \hat{u}(\sigma,\bullet)$ is holomorphic from the half-plane $\mathrm{Im}(\sigma)>-r$ to $H^s(S)$, continuous down to $\mathrm{Im}(\sigma)=-r$, with precise estimates based on the Paley--Wiener theorem.
        \item 
            As distributions, $\widehat{x\partial_xu}(\sigma,y)=i\sigma\hat{u}(\sigma,y)$.
    \end{itemize}
    
\subsection{Hamilton flow and propagation of singularities for $P=\Delta_{\g}-\lambda^2$}
    Now we can consider properties of $P$ as a scattering operator. Properties of solutions to the equation $Pu=f$ are constrained by elliptic regularity except at the characteristic set, where we instead have propagation of singularities of solutions along the Hamilton vector field associated to the principal symbol. In this section, we describe these structures for $P$.

    Let $(x,y_1,\ldots,y_{n-1})$ be local coordinates as above. Let $(\xi,\eta_1,\ldots,\eta_{n-1})$ be the dual variables in $\tscc$, related to the canonical dual variables $(\xi^0,\eta^0)$ in $T^*M$ by $\xi=x^2\xi^0$ and $\eta_j=x\eta_j^0$.

    Let $\lambda>0$. The operator $P=\Delta_{\g}-\lambda^2\in \diff[2]{sc}\subset\psc{2,0}$ has principal scattering symbol
    \[\mathbf{p}(q,p)=\lVert p\rVert _{\g}^2-\lambda^2=\xi^2+\lVert \eta\rVert _{\h}^2-\lambda^2+\mathcal{O}(x),\]
    where $\lVert \bullet\rVert _{\g}$ is the norm on fibers of $T^*M^{\circ}$ defined by the dual metric and $\lVert \eta\rVert _{\h}^2=\sum_{i,j=1}^{n-1}h^{ij}\eta_i\eta_j$. The characteristic set of $P$ is located at base infinity away from the corner and can be described in coordinates as
    \[\Sigma=\{(x,y,\xi,\eta)\in \tsccc\ \mid\ x=0,\ \xi^2+\lVert \eta\rVert _{\h}^2=\lambda^2\}.\]
    We denote the two connected components $\Sigma_1=\{(q,p)\in\Sigma\ \mid\ q\in S_1\}$, $\Sigma_2=\{(q,p)\in\Sigma\ \mid\ q\in S_2\}$. In particular, as a scattering operator $P$ is not globally elliptic (which it is in the usual (pseudo)differential operator sense).

    The Hamilton vector field corresponding to $\mathbf{p}$ is
        \begin{equation}
        H
        =
        2x \Bigg(
        \xi x\frac{\partial}{\partial x}
        -\lVert \eta\rVert _{\h}^2\frac{\partial}{\partial\xi}
        +\sum_{j,k=1}^{n-1}h^{jk}\eta_j\frac{\partial}{\partial y_k}
        +\sum_{k=1}^{n-1}\Bigg(\xi\eta_k-\frac{1}{2}\sum_{i,j=1}^{n-1}\frac{\partial h^{ij}}{\partial y_k}\eta_i\eta_j\Bigg)\frac{\partial}{\partial\eta_k}
        +xV
        \Bigg),
        \label{eq:Hamilton}
        \end{equation}
    where $V$ is a vector field tangent to $\partial\tsccc$. Dividing by $x$ and restricting to $x=0$, we get the rescaled Hamilton flow on the phase space boundary:
        \begin{equation}
        \Tilde{H}
        =
        -2\lVert \eta\rVert _{\h}^2\frac{\partial}{\partial\xi}
        +2\sum_{j,k=1}^{n-1}h^{jk}\eta_j\frac{\partial}{\partial y_k}
        +\sum_{k=1}^{n-1}\Bigg(2\xi\eta_k-\sum_{i,j=1}^{n-1}\frac{\partial h^{ij}}{\partial y_k}\eta_i\eta_j\Bigg)\frac{\partial}{\partial\eta_k}.
        \label{eq:Hamilton-rescaled}
        \end{equation}    
    $\Tilde{H}=0$ if and only if $\eta=0$. On $\Sigma$, this also implies $\xi=\pm\lambda$. 
    The set of radial (critical) points $R\subset \Sigma$ therefore has four connected components, written in fiber coordinates $(\xi,\eta_1,\ldots,\eta_{n-1})$ as
        \[
        R_1^{\pm}
        =
        S_1\times\{(\pm\lambda,0,\ldots,0)\}\text{ and }R_2^{\pm}=S_2\times\{(\pm\lambda,0,\ldots,0)\}.
        \]
    $R^+=R_1^+\sqcup R_2^+$ is a source for the Hamilton flow, and $R^-=R_1^-\sqcup R_2^-$ is a sink. Every bicharacteristic (integral curve of $\tilde{H}$ in $\Sigma$) is either a point in $R$ or tends to a point of $R^+$ in the backward direction and a point of $R^-$ in the forward direction.

    The following is a summary of relevant statements about propagation of singularities along the Hamilton flow. Because the characteristic set stays away from fiber infinity, the differential orders will largely be irrelevant in what follows, since for $\zeta\in\Sigma$ and fixed $r$ the statements that $\zeta\in\wf^{s,r}(u)$ are equivalent for all $s\in\R$.
    
    \begin{theorem}{\textbf{(Propagation of singularities for $P$).}}
    \label{thm:PoS}
        Fix $s\in\R$.
        \begin{enumerate}
            \item (Real principal type propagation).
                For any $r\in\R$, $\wf^{s,r}(u)\backslash\wf^{s-1,r+1}(Pu)$ is a union of maximally extended bicharacteristics in $\Sigma\backslash\wf^{s-1,r+1}(Pu)$.
            \item (Variable-order version).
                For any $r\in C^{\infty}(\tsccc)$ which is non-increasing (non-decreasing) along $H$, $\wf^{s,r}(u)\backslash\wf^{s-1,r+1}(Pu)$ is a union of maximally backward-extended (forward-extended) bicharacteristics in $\Sigma\backslash\wf^{s-1,r+1}(Pu)$.
            \item (Below-threshold radial point estimate).
                Fix a point $\zeta\in R^{\pm}$. For any constant $r< -\frac{1}{2}$, if $\zeta\notin \wf^{s-1,r+1}(Pu)$ and there exists a neighborhood $U$ of $\zeta$ in $\partial\tsccc$ such that $\wf^{s,r}(u)\cap (U\backslash R^{\pm})=\varnothing$, then we have $\zeta\notin \wf^{s,r}(u)$.
            \item (Above-threshold radial point estimate).
                Fix a point $\zeta\in R^{\pm}$. For any constant $r> -\frac{1}{2}$, if $\zeta\notin \wf^{s-1,r+1}(Pu)$ and there exists $r_0\in(-\frac{1}{2},r)$ such that $\zeta\notin \wf^{s,r_0}(u)$, then we have $\zeta\notin \wf^{s,r}(u)$.
        \end{enumerate}
    \end{theorem}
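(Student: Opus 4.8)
The plan is to prove all four statements by the positive commutator method in the scattering calculus, adapting the arguments of Melrose \cite{Melrose-AES} (for part 1) and Vasy \cite{Vasy-AH-KdS}, with the refinements of \cite{B-V-W,Hintz-V} (for parts 2--4). The engine is the identity $\langle i[P,A^*A]u,u\rangle = 2\,\mathrm{Im}\,\langle Pu,A^*Au\rangle$ for $A\in\psc{\infty,\infty}$, which holds whenever both pairings are defined; for the orders used below this is the case after a standard regularization, or outright once $u$ is known to be one weight better near the region in question, and the argument is organized as an induction on the weight $r$ in steps of $\tfrac12$. Since $\Sigma$ lies over $S$ at base infinity and avoids fiber infinity, I would first observe that the differential order $s$ is inert — after applying a globally elliptic element of $\psc{0,0}$ microsupported near the relevant region one may take $s$ to be any convenient value — so that all the content is in the decay order at base infinity. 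The scheme is then to choose the symbol $a$ of $A$ so that the principal symbol of $i[P,A^*A]$ decomposes, in the region of interest, as $-b^2+e^2+f$ with $b$ elliptic where regularity is to be proved, $e$ supported where regularity is assumed, and $f$ of lower order; one then pairs against $u$, bounds the $e$-term by the hypothesis, the $f$-term by the inductive hypothesis, and $\mathrm{Im}\langle Pu,A^*Au\rangle$ using the hypothesis on $Pu$ together with Cauchy--Schwarz, and invokes the sharp G\r{a}rding inequality to turn $-b^2$ into a positive contribution, obtaining the desired $\hsc{s,r}$ bound.

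For part (1) I would take a point $\zeta$ on a bicharacteristic $\gamma\subset\Sigma\setminus\wf^{s-1,r+1}(Pu)$ and assume $u$ is microlocally $\hsc{s,r}$ at some other point $\zeta'\in\gamma$. Since $\tilde H\neq 0$ on $\Sigma\setminus R$, in a thin tube around the subsegment of $\gamma$ from $\zeta'$ to $\zeta$ one constructs $a = x^{-r-\frac12}\rho^{-s+\frac12}\,\chi_0\,\psi$, where $\chi_0$ is a cutoff to the tube and $\psi$ is strictly monotone along the flow, chosen so that $x\tilde H\big((\chi_0\psi)^2\big) = -b_0^2+e_0^2$ with $b_0$ elliptic near $\zeta$ and $e_0$ supported near $\zeta'$; because $H_{\mathbf{p}}$ carries an overall factor of $x$ by Eq.~(\ref{eq:Hamilton}), the extra power of $x$ lands the commutator at exactly the weight that pairs with $\hsc{s,r}$. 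The standard iteration then carries $\hsc{s,r}$ regularity from $\zeta'$ to $\zeta$, hence along all of $\gamma$ outside $\wf^{s-1,r+1}(Pu)$, which is the assertion. For part (2) the only new feature is that commuting $P$ past a quantization of $x^{-r}\rho^{-s}$ with variable $r$ produces an additional term proportional to $(H_{\mathbf{p}}r)|a|^2 = x(\tilde H r)|a|^2$; if $r$ is non-increasing along $H$ this has the sign favorable to forward propagation and may simply be discarded, and symmetrically for non-decreasing $r$, yielding the two variable-order statements.

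For parts (3) and (4) I would localize to a neighborhood $U$ of a component of $R^\pm$, where $\tilde H$ vanishes and no monotone function along the flow is available, so that the sign in the commutator must be drawn from the weight. Taking $a = x^{-r}\rho^{-s}\,\kappa(x)\,\omega$ with $\kappa$ supported near $x=0$ and $\equiv 1$ very near $x=0$, and $\omega$ a fiber cutoff to a small neighborhood in $\partial\tsccc$ of the relevant component of $R^\pm$ together with the adjacent incoming/outgoing region, the principal symbol of $i[P,A^*A]$ at $R^\pm$ is dominated by $H_{\mathbf{p}}(x^{-2r}) = -2r\,x^{-2r-1}H_{\mathbf{p}}(x) = -2r\,x^{-2r-1}\cdot 2\xi x^2 = \mp 4r\lambda\,x^{-2r+1}$; adding the fixed contribution from the subprincipal symbol of $P$ — equivalently, from the $(n-1)x(x^2\partial_x)$ term in Eq.~(\ref{eq:op}) together with the non-skew-adjointness of $x^2\partial_x$ on $L^2_{\g}$ — produces a leading term whose coefficient is a nonzero multiple of $\mp(2r+1)$, vanishing precisely at $r=-\tfrac12$. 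This is to be expected: $r=-\tfrac12$ is exactly the order at which the model radial solutions $x^{\frac{n-1}{2}}e^{\pm\frac{i\lambda}{x}}$ lie on the boundary of $\hsc{s,r}$ microlocally at $R^\pm$. For $r<-\tfrac12$ this coefficient, combined with the contributions from $H_{\mathbf{p}}\omega$ (supported in $U\setminus R^\pm$, where $\hsc{s,r}$ control is hypothesized, and with signs differing between the source $R^+$ and the sink $R^-$), yields a commutator of the form $-b^2+e^2+f$ with $b$ elliptic at $\zeta$; the $H_{\mathbf{p}}\kappa$ terms are supported in $\{x\geq c>0\}$, hence land in $\dot{C}^{\infty}(M)$ and are harmless. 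One concludes $\zeta\notin\wf^{s,r}(u)$, which is part (3). For $r>-\tfrac12$ the configuration is reversed, so one inserts a regularizer, replacing $a$ by $a_\delta = a\,(1+x^{-1})^{-\delta}$ (making $A_\delta$ of order $(s,r)$ only as $\delta\downarrow0$ and of order $(s,r-\delta)$ for $\delta>0$), controls the $\delta$-dependent errors uniformly by means of the a priori hypothesis $\zeta\notin\wf^{s,r_0}(u)$ with $r_0\in(-\tfrac12,r)$, and lets $\delta\downarrow0$ to obtain part (4).

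The step I expect to be the main obstacle is the radial-point computation underlying parts (3)--(4): pinning down the exact coefficient of the leading commutator term — in particular verifying the cancellation that places the threshold at $r=-\tfrac12$ — which requires carefully tracking the subprincipal symbol of $P$, the volume density of $\g$, and the weight $x^{-r}$ through the commutator, and then executing the regularization in the above-threshold case with estimates uniform in $\delta$ while keeping precise account of how the signs at the source $R^+$ and the sink $R^-$ interact with the sign of $2r+1$. By contrast, the global bookkeeping in part (1) is routine given the simplicity of the flow (a single source $R^+$, a single sink $R^-$, and bicharacteristics running from one to the other, as recorded before the theorem), and parts (1)--(2) otherwise follow the templates of \cite{Melrose-AES,Vasy-AH-KdS} with no new difficulties.
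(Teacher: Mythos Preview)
Your sketch is essentially correct and follows the standard positive-commutator approach of the references. However, you should be aware that the paper does \emph{not} prove this theorem at all: immediately after the statement it simply attributes part (1) to Melrose \cite{Melrose-AES}, part (2) to \cite{Vasy-minicourse}, and the radial point estimates (3)--(4) to Vasy \cite{Vasy-GenEigen} and Haber--Vasy \cite{H-V}, noting that the global (non-localized) version of Melrose would already suffice for the paper's purposes. So there is no ``paper's own proof'' to compare against; the theorem is treated as background.

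That said, your outline is a faithful summary of how those cited proofs actually go. The identification of the threshold $r=-\tfrac12$ via the leading commutator coefficient $\mp(2r+1)$, the role of the subprincipal term from $(n-1)x(x^2\partial_x)$ and the density, the regularizer $(1+x^{-1})^{-\delta}$ in the above-threshold case, and the sign bookkeeping for source versus sink are all correct. One small point of emphasis: in part (3) the hypothesis gives control on $U\setminus R^{\pm}$, and the $H_{\mathbf p}\omega$ terms land there with a sign that is \emph{unfavorable} at one of the two radial sets (outgoing at the sink, incoming at the source), so it is the weight term $\mp(2r+1)$ that must dominate; you state this but it is worth being explicit that the cutoff $\omega$ must be chosen so that its derivative is supported where the a priori control is available, which constrains the geometry of $U$ slightly. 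Otherwise the plan is sound and would constitute a genuine proof where the paper only gives citations.
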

    The real principal type result is a scattering-theoretic version, due to Melrose \cite{Melrose-AES}, of H{\"o}rmander's theorem \cite{Hormander,D-H}. For the variable-order version, see \cite{Vasy-minicourse}. The radial point results are localized versions of the original global radial set estimates of Melrose \cite{Melrose-AES}, which did not resolve different points within the radial set. Localized estimates are due to Vasy \cite{Vasy-GenEigen} in this context, generalized by Haber and Vasy \cite{H-V}. The global version is actually sufficient for our purposes, since we will only need to localize to a component of the boundary, which is simple given the global result.

    Haber and Vasy \cite{H-V} also proved an additional module regularity result, of which the following is a special case.
    \begin{theorem}{\textbf{(Lagrangian regularity).}}
    \label{thm:mod-reg}
        Fix a point $\zeta\in R^{\pm}$ and $k\in \N$. For any neighborhood $U\subset\partial\tsccc$ of $\zeta$, define 
        \[\mathcal{M}_{U,\pm}=\Big\{A\in\psc{\infty,1}\ \mid\ \wf'(A)\subset U\text{ and }(x\sigma^{0,1}_{\mathrm{sc}}(A))|_{R^{\pm}}=0\Big\}.\]
        For any constant $r< -\frac{1}{2}$, if there exists a neighborhood $U'\subset \partial\tsccc$ of $\zeta$ such that $\mathcal{M}_{U',\pm}^k(Pu)\subset \hsc{\infty,r+1}$ and $\wf^{\infty,r+k}(u)\cap (U'\backslash R^{\pm})=\varnothing$, then there exists a neighborhood $U\subset U'$ of $\zeta$ such that $\mathcal{M}_{U,\pm}^k(u)\subset \hsc{\infty,r}$.
    \end{theorem}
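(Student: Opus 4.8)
The plan is to argue by induction on $k$, the base case $k=0$ being exactly the below-threshold radial point estimate, Theorem~\ref{thm:PoS}(3), applied for each fixed differential order $s$ (harmless near $R^{\pm}$, where the statements $\zeta\in\wf^{s,r}$ are equivalent as $s$ varies): the hypotheses $\mathcal{M}^0_{U',\pm}(Pu)=\{Pu\}\subset\hsc{\infty,r+1}$ and $\wf^{\infty,r}(u)\cap(U'\setminus R^{\pm})=\varnothing$ are precisely what that estimate needs. Before the inductive step I would record the algebraic structure of $\mathcal{M}:=\mathcal{M}_{U',\pm}$ (always understood microlocally near $\zeta$): it is a module over $\psc{\infty,0}$ finitely generated by $A_0=\mathrm{Id}$ and finitely many $A_1,\dots,A_N$ of order $(\,\cdot\,,1)$ — one may take quantizations of $x^{-1}$ times $\mathbf{p}$, $\xi\mp\lambda$, and $\eta_1,\dots,\eta_{n-1}$, microlocalized to $U'$. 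Since every $Q\in\psc{\infty,0}$, regarded in $\psc{\infty,1}$, satisfies $x\sigma^{0,1}_{\mathrm{sc}}(Q)=x\sigma^{0,0}_{\mathrm{sc}}(Q)$, which vanishes on $\{x=0\}\supset R^{\pm}$, one has $\psc{\infty,0}\subset\mathcal{M}$ near $\zeta$, hence $x\mathcal{M}\subset\mathcal{M}$. From the explicit form (\ref{eq:Hamilton}) of the Hamilton vector field one checks that $\mathcal{M}$ is closed under commutators and that $P\in x\mathcal{M}$ (since $\sigma^{2,0}_{\mathrm{sc}}(P)=\mathbf{p}$ vanishes on $\Sigma\supset R^{\pm}$); combining these with the scattering weight arithmetic $[\psc{\infty,-1},\psc{\infty,1}]\subset\psc{\infty,-1}$ yields the crucial commutation relation
\[
[P,\mathcal{M}]\subset x\mathcal{M}\qquad\text{microlocally near }R^{\pm}.
\]
The extra factor of $x$ is the analytic signature of the radial set: writing $H_{\mathbf{p}}=x\tilde H+\mathcal{O}(x^2)$ and using that $\tilde H$ vanishes at $R^{\pm}$ together with $H_{\mathbf{p}}(x)=2\xi x^2$, one sees that $H_{\mathbf{p}}$ applied to a module symbol $x^{-1}b$ (with $b$ smooth and $b|_{R^{\pm}}=0$) again produces $x$ times a module symbol.

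For the inductive step, assume the conclusion for $k-1$ (note its side hypotheses are implied by those at level $k$, since $\mathcal{M}^{k-1}\subset\mathcal{M}^k$ because $A_0=\mathrm{Id}$, and $\wf^{\infty,r+k-1}(u)\subset\wf^{\infty,r+k}(u)$), and let $B=A_{i_1}\cdots A_{i_k}\in\mathcal{M}^k$; it suffices to show $\zeta\notin\wf^{\infty,r}(Bu)$. Away from $R^{\pm}$ this is immediate: $B\in\psc{\infty,k}$, so microlocality gives $\wf^{\infty,r}(Bu)\subset\wf'(B)\cap\wf^{\infty,r+k}(u)\subset U'\cap\wf^{\infty,r+k}(u)$, which by hypothesis is disjoint from $U'\setminus R^{\pm}$. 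So the only issue is the behaviour of $Bu$ at $R^{\pm}$.

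There I would run a positive commutator estimate of exactly the type underlying Theorem~\ref{thm:PoS}(3), but with a module-adapted commutant: choose $\mathsf{A}\in\psc{\infty,r+k}$ essentially supported near $\zeta$, of the schematic form $\mathsf{A}=x^{-r}\chi\sum_{|\alpha|\leq k}c_\alpha A^{\alpha}$ with $\chi$ a cutoff near $R^{\pm}$, $A^{\alpha}$ the module monomials, and $A^{(i_1,\dots,i_k)}$ appearing with symbol elliptic at $\zeta$. Computing
\[
i[P,\mathsf{A}^*\mathsf{A}]=\mathsf{B}^*\mathsf{B}+\mathsf{R}+\mathsf{F},
\]
one extracts the main term $\mathsf{B}^*\mathsf{B}$, where $\mathsf{B}$ is, modulo lower order, $x^{-r}$ times a degree-$k$ module monomial elliptic at $\zeta$; it carries the favorable sign precisely because $r<-\frac{1}{2}$ and $R^{+}$ (resp.\ $R^{-}$) is a source (resp.\ sink), exactly as in the scalar radial estimate. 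The remainder $\mathsf{R}$ splits into (a) terms of strictly larger decay weight than $\mathsf{B}^*\mathsf{B}$, absorbable into the main term by a small-constant Cauchy--Schwarz — this is where the weight-commutator $[P,x^{-2r}]\in\psc{\infty,2r-1}$ and the factor $x$ in $[P,\mathcal{M}]\subset x\mathcal{M}$ pay off — and (b) terms of module degree $\leq k-1$ at weight $r$, controlled by the inductive hypothesis $\mathcal{M}^{k-1}(u)\subset\hsc{\infty,r}$ near $\zeta$; while $\mathsf{F}\in\psc{-\infty,-\infty}$. Testing the identity against $u$ via $\langle i[P,\mathsf{A}^*\mathsf{A}]u,u\rangle=-2\operatorname{Im}\langle\mathsf{A}u,\mathsf{A}Pu\rangle$, bounding the pairing by Cauchy--Schwarz with weights distributed so that the $\mathsf{A}Pu$ factor is controlled by $\mathcal{M}^k(Pu)\subset\hsc{\infty,r+1}$ and the $\mathsf{A}u$ factor is absorbed into $\|\mathsf{B}u\|^2$, and carrying out the usual regularization (insert a uniformly bounded family $\Lambda_\delta\to\mathrm{Id}$ in $\psc{0,-\varepsilon}$, run the estimate for fixed $\delta>0$ so all pairings are a priori finite, then let $\delta\downarrow0$ using $\delta$-uniform bounds), one gets $\mathsf{B}u\in L^2_{\g}$, i.e.\ $\zeta\notin\wf^{\infty,r}(Bu)$. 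Shrinking $U$ so that the finitely many generators used are essentially supported in it closes the induction.

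The main obstacle is the bookkeeping in the commutator computation: one must verify that each term obtained by expanding $i[P,\mathsf{A}^*\mathsf{A}]$ — in particular the cross-terms between module monomials of different degrees and the terms produced by $[P,A_j]$ — lands in exactly one of the three admissible classes (favorable-sign main term, strictly higher decay, or lower module degree), with the correct differential and weight orders, and that the regularizers $\Lambda_\delta$ commute through at the cost of $\delta$-uniformly bounded errors. Establishing the structural facts $\psc{\infty,0}\subset\mathcal{M}$, $P\in x\mathcal{M}$, and $[P,\mathcal{M}]\subset x\mathcal{M}$ near $R^{\pm}$ from (\ref{eq:Hamilton}), together with this bookkeeping, is the technical heart of the argument; alternatively one may simply invoke the general module regularity theorem of Haber and Vasy~\cite{H-V}, of which this is the stated special case.
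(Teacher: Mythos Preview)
The paper does not give a proof of this theorem at all: it is stated immediately after the sentence ``Haber and Vasy \cite{H-V} also proved an additional module regularity result, of which the following is a special case,'' and the next line begins a new subsection. So there is nothing to compare your argument to within the paper itself; the paper's ``proof'' is the citation \cite{H-V}.

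Your sketch is precisely an outline of the positive-commutator module-regularity argument that constitutes the content of \cite{H-V}: induction on $k$ with base case the below-threshold radial estimate, the test-module structure (finite generation over $\psc{\infty,0}$, closure under commutators, $P\in x\mathcal{M}$, $[P,\mathcal{M}]\subset x\mathcal{M}$ at $R^\pm$), and the module-adapted commutant estimate with regularization. The structural claims you check from \eqref{eq:Hamilton} are correct, and you are candid that the technical heart is the commutator bookkeeping, which you do not carry out in full. As a plan this is sound and is exactly the cited approach; indeed you note yourself at the end that one may simply invoke \cite{H-V}. In short, you have supplied what the paper omits, namely a sketch of the proof behind the citation, and there is no discrepancy in method.
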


\subsection{Distinguished inverses}
\label{sec:dist-inv-here}
    For the sake of uniformity of notation, let us write $R^{\pm\pm}=R^{\pm}$ and $R^{\pm\mp}=R^{\pm}_1\cup R^{\mp}_2$. We also use the shorthand $R^{-\alpha\beta}=R^{(-\alpha)(-\beta)}$, and similarly for other objects double-indexed by $\alpha,\beta\in \{+,-\}$. For any choice of $\alpha,\beta\in\{+,-\}$, any bicharacteristic is either a point in $R^{\alpha\beta}$, or a point in $R^{-\alpha\beta}$, or tends to $R^{\alpha\beta}$ in one direction and to $R^{-\alpha\beta}$ in the other.

    Let $r_{\pm\pm},r_{\pm\mp}\in C^{\infty}(\tsccc)$ be real-valued functions monotone along $H$ in each component of $\Sigma$ and such that
        \[
        \begin{cases}
        r_{++}|_{R^{++}}<-\frac{1}{2},\\
        r_{++}|_{R^{--}}>-\frac{1}{2};
        \end{cases}
        \hskip 30pt
        \begin{cases}
        r_{--}|_{R^{++}}>-\frac{1}{2},\\
        r_{--}|_{R^{--}}<-\frac{1}{2};
        \end{cases}
        \]
        \[
        \begin{cases}
        r_{+-}|_{R^{+-}}<-\frac{1}{2},\\
        r_{+-}|_{R^{-+}}>-\frac{1}{2};
        \end{cases}
        \hskip 30pt
        \begin{cases}
        r_{-+}|_{R^{+-}}>-\frac{1}{2},\\
        r_{-+}|_{R^{-+}}<-\frac{1}{2}.
        \end{cases}
        \]
    We refer to these assumptions as the \textbf{Feynman} ($\pm\pm$) and \textbf{causal} ($\pm\mp$) threshold conditions on $r_{\alpha\beta}$ by analogy with the case of the Klein--Gordon operator, for which the (anti-)Feynman and causal (retarded/advanced) propagators can be defined as inverses acting between spaces of functions with regularity satisfying analogous threshold conditions. For any $r\in \{r_{++},r_{--},r_{+-},r_{-+}\}$ and any $s\in \R$, define
        \[\Y_{s,r}=\hsc{s-2,r+1},
        \hskip 20pt
        \X_{s,r}=\{u\in \hsc{s,r}\ \mid\ Pu\in \Y_{s,r}\}.\]
    Define $P_{s,r}=P|_{\X_{s,r}}$. It follows from the propagation of singularities theorem that these are Fredholm maps. In fact, even more can be said.
    \begin{theorem}{\textbf{(Invertibility properties)}}
    \label{thm:invertibility}
        \begin{enumerate}
            \item $P_{s,r_{\pm\pm}}:\X_{s,r_{\pm\pm}}\to\Y_{s,r_{\pm\pm}}$ are invertible for all $s\in \R$ and $r_{\pm\pm}$ satisfying the Feynman threshold conditions, with inverses $P_{\pm\pm}^{-1}=(\Delta_{\g}-\lambda^2\pm i0)^{-1}$.
            \item In dimension $n\geqslant 3$, for any given $(M,\g)$ there exists $\lambda_0>0$ such that for any $\lambda\in (0,\lambda_0)$, $P_{s,r_{\pm\mp}}:\X_{s,r_{\pm\mp}}\to\Y_{s,r_{\pm\mp}}$ are invertible for all $s\in \R$ and $r_{\pm\mp}$ satisfying the causal threshold conditions.
        \end{enumerate}
    \end{theorem}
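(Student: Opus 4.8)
The plan is to reduce both statements, via the Fredholm property of $P_{s,r}$ already recorded (from real-principal-type propagation plus the radial point estimates of Theorem~\ref{thm:PoS}), to the vanishing of kernel and cokernel, and then to compute those; I treat the Feynman case $r_{++}$, the case $r_{--}$ being identical under $\xi\mapsto-\xi$. For injectivity, let $u\in\X_{s,r_{++}}$ with $Pu=0$. Microlocal elliptic regularity gives $\wf(u)\subset\Sigma$, and the real-principal-type part of Theorem~\ref{thm:PoS} (with $Pu=0$) makes $\wf(u)$ a union of maximally extended bicharacteristics, each non-radial one running from $R^{++}$ to $R^{--}$, so it is enough to clear one radial end. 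At $R^{--}$, where $r_{++}>-\tfrac12$, iterating the above-threshold radial estimate (part 4 of Theorem~\ref{thm:PoS}) starting from the a priori membership $u\in\hsc{s,r_{++}}$ yields $\wf(u)\cap R^{--}=\varnothing$; hence $\wf(u)\subset R^{++}$ consists of radial points only, so $u\in\sol$ and $u=x^{\frac{n-1}{2}}(e^{-i\lambda/x}a^+_M+e^{i\lambda/x}a^-_M)$ with the asymptotic datum at $R^{--}$ vanishing. The below-threshold estimate at $R^{++}$ is consistent with $u$ merely lying in $\hsc{s,r_{++}}$ there and does not by itself kill the other datum, so I would invoke the boundary pairing formula of Melrose \cite{Melrose-AES}: applied to $(u,u)$ with $Pu=0$ it gives, up to a positive constant, $\|a^+\|_{L^2(S)}^2=\|a^-\|_{L^2(S)}^2$, so with one datum gone the other vanishes too. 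Then $u\in\dot{C}^{\infty}(M)$ solves $(\Delta_{\g}-\lambda^2)u=0$, and since $\Delta_{\g}$ is self-adjoint with no eigenvalue in $(0,\infty)$ on a Riemannian scattering space, $u=0$.

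For the cokernel, the $L^2_{\g}$-adjoint $P^*_{s,r_{++}}$ is $P$ acting between the dual spaces, namely $P_{1-s,-r_{++}-1}$, and $-r_{++}-1$ obeys the anti-Feynman ($r_{--}$-type) threshold conditions; the argument just given, with $R^{++}$ and $R^{--}$ exchanged, shows $\ker P^*_{s,r_{++}}=0$, hence $P_{s,r_{++}}$ is bijective. To identify the inverse with $(\Delta_{\g}-\lambda^2-i0)^{-1}$, observe that $(\Delta_{\g}-\lambda^2-i\varepsilon)^{-1}$ is bounded on $L^2_{\g}$ for $\varepsilon>0$ by self-adjointness, and that the radial and propagation estimates for $\Delta_{\g}-\lambda^2-i\varepsilon$ hold with constants uniform in $\varepsilon\in[0,1]$ (the sign of $-i\varepsilon$ being harmless at the radial sets); hence these operators are uniformly bounded $\Y_{s,r_{++}}\to\X_{s,r_{++}}$, and a weak-$*$ limit as $\varepsilon\downarrow 0$ on the dense set $\dot{C}^{\infty}(M)\subset\Y_{s,r_{++}}$ produces a bounded right inverse of $P_{s,r_{++}}$, which by the uniqueness just proved equals $P_{s,r_{++}}^{-1}$ and, by construction, the limiting-absorption resolvent.

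For part (2) there is no such favorable sign, so I would first pin the index. Comparing $r_{+-}$ with the Feynman weight $r_{++}$, they differ in threshold behavior only at the two radial components over $S_2$, where one crosses $-\tfrac12$ from below to above at the source and the other from above to below at the sink; by the relative index formula for radial points (\cite{Vasy-AH-KdS,GR-H-V}) these contributions are equal in magnitude — the two normal operators being interchanged by $\xi\mapsto-\xi$ — and opposite in sign, so they cancel and $\operatorname{ind}P_{s,r_{\pm\mp}}=\operatorname{ind}P_{s,r_{\pm\pm}}=0$. Thus invertibility reduces to triviality of the kernel, and a kernel element has, by the threshold analysis as before, asymptotic data supported "one-way" on each boundary component, for which the boundary pairing only yields the balance $\|\pi_1a^{+}\|^2=\|\pi_2a^{-}\|^2$. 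To conclude for small $\lambda$ I would use analytic Fredholm theory in $\lambda$: $\lambda\mapsto\Delta_{\g}-\lambda^2$ is an analytic family of index-$0$ Fredholm maps between spaces adapted to a neighborhood of $\lambda=0$, so the non-invertible values form a discrete set that can only accumulate at $\lambda=0$; ruling that out needs the $\lambda=0$ analysis, where the scattering characteristic set collapses onto the corner and one passes to the b-calculus, and it is there that $n\geqslant3$ enters, guaranteeing that $\Delta_{\g}$ has neither $L^2$ null space nor a zero-resonance, so no pole forms at $0$ and $P_{s,r_{\pm\mp}}$ is invertible on some $(0,\lambda_0)$.

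I expect the last point to be the main obstacle: controlling the causal kernel uniformly as $\lambda\to0^+$. Unlike the Feynman case there is no positivity rendering the estimate error-free, and the boundary pairing is inconclusive because the incoming and outgoing data sit on different boundary components and their norms may balance without vanishing; the argument must genuinely exploit the degeneration of the scattering problem to a b-problem at $\lambda=0$ together with the dimensional restriction. Making the scattering boundary pairing formula and the precise relative index formula explicit in this setting are the other delicate inputs, but both can be quoted from \cite{Melrose-AES} and \cite{Vasy-AH-KdS,GR-H-V}.
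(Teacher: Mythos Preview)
Your argument for part (1) is correct and essentially matches the paper's: both reduce, via the Fredholm property coming from propagation and the radial estimates, to showing that kernel elements are Schwartz, and then conclude $u=0$. You do this via the asymptotic expansion plus the boundary pairing identity $\|a^+\|^2=\|a^-\|^2$ and absence of $L^2$ eigenvalues; the paper instead cites \cite{Vasy:Self-adjoint} to get Schwartz membership and then invokes unique continuation at infinity, which is the same mechanism in different packaging. (One minor caution: you are implicitly using the asymptotic expansion and the boundary pairing formula, which in the paper's ordering come \emph{after} this theorem; but neither depends on invertibility, so this is only an expository reshuffling, not a circularity.) The identification with the limiting absorption resolvent is also essentially the paper's.

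For part (2) you take a genuinely different route, and there is a real gap. Your plan is: pin the index at zero by a relative-index argument, then run analytic Fredholm theory in $\lambda$ on a family ``adapted to a neighborhood of $\lambda=0$'', ruling out accumulation at $0$ by zero-energy analysis. The difficulty you yourself flag is the serious one: the spaces $\X_{s,r_{\pm\mp}},\Y_{s,r_{\pm\mp}}$ depend on $\lambda$ (the radial sets move with $\lambda$, and the scattering characteristic set degenerates to the corner at $\lambda=0$), so there is no fixed pair of spaces on which the family is analytic and Fredholm down to $\lambda=0$. Without such a framework, neither the analytic Fredholm alternative nor the ``invertible at $0$ $\Rightarrow$ invertible nearby'' step is available in any straightforward sense. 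Your index computation via two opposite-sign threshold crossings is also not the argument the paper uses, and is not needed once one shows both kernel and cokernel are trivial directly.

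The paper's approach sidesteps exactly this degeneration: following \cite{Vasy:Zero-energy-lag} (or \cite{Vasy:Zero-energy}), one conjugates by $e^{\pm i\lambda/x}$ to move the relevant radial set to the zero section, blows up $\{x=\lambda=0\}$, and works in the resolved (second-microlocal / b-) setting. The payoff is that on the front face the model operator is the Laplacian of the \emph{exact} cone over $S$ minus $1$, which \emph{decouples} the two boundary components: on each cone one is solving a Feynman or an anti-Feynman problem, both invertible by part (1). The zero-energy face gives the ordinary Laplacian, whose invertibility on the relevant spaces (this is where $n\geqslant 3$ enters) is handled as in \cite{Vasy:Zero-energy-lag}. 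Combining the invertible models with the gluing argument of those references yields invertibility for $\lambda\in(0,\lambda_0)$. In short, the paper builds the uniform-in-$\lambda$ Fredholm framework that your analytic-Fredholm sketch presupposes, and moreover reduces the causal invertibility to already-established Feynman invertibility on each cone, rather than attempting to argue with the causal boundary pairing, which, as you note, is inconclusive on its own.
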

    \begin{proof}
        As already stated, for both (1) and (2) the Fredholm statement follows from propagation of singularities; see concretely \cite{Vasy-minicourse}. It remains to show the triviality of the nullspace of the operators and their adjoints, in the latter case on the dual function spaces.

        In case (1), this is the usual limiting absorption principle, but in a larger context it follows from the arguments of \cite[Section~4]{Vasy:Self-adjoint} which give membership of elements of the kernel (both for the operator and for its adjoint) in Schwartz functions, and then unique continuation at infinity (combined with the standard unique continuation) completes the proof.

        In case (2), this follows from either the arguments of \cite{Vasy:Zero-energy-lag} or those of \cite{Vasy:Zero-energy}. In both cases, while there the analogue of (1) is studied explicitly, (2) has completely similar behavior and the same arguments apply, mutatis mutandis. While \cite{Vasy:Zero-energy} is closer to our setting, in that it uses variable order spaces rather than second microlocalization, \cite{Vasy:Zero-energy-lag} is simpler to describe and as given the already discussed Fredholm property only the triviality of the nullspace and the operator matters, on relevant function spaces, regularity theory allows one to conclude that the triviality results of \cite{Vasy:Zero-energy-lag} imply those needed here.

        For the approach of \cite{Vasy:Zero-energy-lag} one conjugates the operator by the exponential encoding the asymptotics, which is $e^{\pm i\lambda/x}$. In terms of phase space, this moves the relevant radial set to the zero section, at which second microlocalization can be easily described in terms of the b-pseudodifferential operator algebra. Then one works with the resolved spaces in which $x=\lambda=0$ is blown up. On the front face one obtains rescaled models given by the Laplacian of an {\em exact} conic metric over $\partial M=S$ minus the spectral parameter $1$; note that this decouples the connected components of $S$ (in that they give rise to disjoint faces in the resolved space, and disjoint cones for the model), so on all of the connected components one solves either a Feynman or an anti-Feynman problem, all of which are invertible. The function spaces at the tip of the cone are independent of these choices, and correspondingly so is the energy zero problem, which gives the standard Laplacian, whose invertibility properties follow as in \cite{Vasy:Zero-energy-lag}. By the arguments of \cite{Vasy:Zero-energy-lag} this gives the desired invertibility for $\lambda\in (0,\lambda_0)$ for sufficiently small $\lambda_0$, albeit on slightly different function spaces. However, due to \cite[Proposition~4.10]{Vasy:Limiting-absorption-lag} any element of the kernel of the operator on our function space is necessarily in that of \cite{Vasy:Zero-energy-lag}, proving that the operator has trivial kernel on our function space. On the other hand, for the adjoint operator, the function space of \cite{Vasy:Zero-energy-lag} includes our function space, hence trivial kernel of the adjoint on that space implies the same for our function space.
    \end{proof}
    
    For QFT purposes, one could proceed by defining suitable quotient spaces on which $P_{s,r_{\pm\mp}}$ induces a bijection (as was done in \cite{V-W}), but for simplicity we will assume that we are in a setting where the map is indeed invertible for all choices of $s,r_{\pm\mp}$ as above. 
    
    The inverses $P_{\pm\pm}^{-1}$ are analogous to the Feynman and anti-Feynman propagators for wave equations in the Lorentzian setting, which are defined by the fact that they propagate singularities in the same direction (with respect to the Hamilton vector field) in every component of the characteristic set. Similarly, the inverses $P_{\pm\mp}^{-1}$ are analogous to the retarded and advanced propagators. A precise version of these statements is shown in Proposition~\ref{thm:PoS-G}.
    
    It is convenient to first define all our constructions in the most regular setting available. If we take the intersections of the spaces defined above for all $s\in\R$ and all $r_{\alpha\beta}$ satisfying a single set of threshold conditions (for a single choice of $\alpha,\beta\in \{+,-\}$), we get
        \[\Y_{\alpha\beta}=\{f\in \cap_{r<\frac{1}{2}}\hsc{\infty,r}\ \mid\ \wf(f)\subset R^{\alpha\beta}\},\]
        \[\X_{\alpha\beta}=\{u\in \cap_{r<-\frac{1}{2}}\hsc{\infty,r}\ \mid\ \wf(u)\subset R^{\alpha\beta}\text{ and }Pu\in\Y_{\alpha\beta}\},\]
    and linear maps $P_{\alpha\beta}:=P|_{\X_{\alpha\beta}}:\X_{\alpha\beta}\to\Y_{\alpha\beta}$. Note that for any $u\in C^{-\infty}(M)$, if $\wf(u)\cap R^{-\alpha\beta}=\varnothing$, then since $\wf(u)$ is closed, it is disjoint from a neighborhood of $R^{-\alpha\beta}$; then if we know $Pu\in\Y_{\alpha\beta}$, real principal type propagation from this neighborhood implies $\wf(u)\subset R^{\alpha\beta}$, and the below-threshold estimate further implies $u\in\bigcap_{r<-\frac{1}{2}}\hsc{\infty,r}$. Therefore, we can in fact equivalently write
        \[\X_{\alpha\beta}=\{u\in C^{-\infty}(M)\ \mid\ \wf(u)\cap R^{-\alpha\beta}=\varnothing\text{ and }Pu\in\Y_{\alpha\beta}\}.\]
    
    \begin{prop}
        For any choice $\alpha,\beta\in \{+,-\}$, $P_{\alpha\beta}:\X_{\alpha\beta}\to\Y_{\alpha\beta}$ is invertible.
    \end{prop}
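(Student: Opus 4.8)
The plan is to deduce this from the fixed-differential-order statement, Theorem~\ref{thm:invertibility}, by checking that passing to the intersection over all $s\in\R$ and all admissible weights preserves bijectivity. Fix once and for all a variable order $r=r_{\alpha\beta}\in C^{\infty}(\tsccc)$ satisfying the threshold conditions appropriate to the pair $\alpha\beta$ (the Feynman conditions if $\beta=\alpha$, the causal ones if $\beta=-\alpha$), monotone along $H$ in each component of $\Sigma$, and locally constant near $R$; then $\X_{s,r}$, $\Y_{s,r}$, $P_{s,r}$ are as in that theorem and $P_{s,r}:\X_{s,r}\to\Y_{s,r}$ is invertible for every $s$. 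Everything reduces to matching membership in $\X_{\alpha\beta},\Y_{\alpha\beta}$ with membership in the $\X_{s,r},\Y_{s,r}$ near each of the four components of $R$.

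For injectivity, suppose $u\in\X_{\alpha\beta}$ and $Pu=0$. Then $u\in\bigcap_{r'<-1/2}\hsc{\infty,r'}$ and $\wf(u)\subset R^{\alpha\beta}$, and I would check this forces $u\in\X_{s,r}$ for every $s$: near $R^{\alpha\beta}$ the threshold condition gives $r<-\frac12$ while $u$ lies in $\hsc{\infty,r'}$ for $r'$ slightly below $-\frac12$; near $R^{-\alpha\beta}$ and away from $R$ the distribution $u$ is Schwartz since $\wf(u)\subset R^{\alpha\beta}$; and $Pu=0$ lies in every space. Hence $P_{s,r}u=0$, so $u=0$ by Theorem~\ref{thm:invertibility}.

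For surjectivity, given $f\in\Y_{\alpha\beta}$ I would first check $f\in\Y_{s,r}=\hsc{s-1,r+1}$ for all $s$: near $R^{\alpha\beta}$ the threshold condition gives $r+1<\frac12$ and $f\in\bigcap_{r'<1/2}\hsc{\infty,r'}$, while away from $R^{\alpha\beta}$ (in particular near $R^{-\alpha\beta}$) $f$ is microlocally Schwartz because $\wf(f)\subset R^{\alpha\beta}$. By Theorem~\ref{thm:invertibility} there is, for each $s$, a unique $u_s\in\X_{s,r}$ with $Pu_s=f$; since $P(u_s-u_{s'})=0$ and $u_s-u_{s'}$ lies in $\X_{\min(s,s'),r}$ on which $P$ is injective, the $u_s$ coincide, yielding a single $u\in\bigcap_s\hsc{s,r}=\hsc{\infty,r}$ with $Pu=f$. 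It remains to see $u\in\X_{\alpha\beta}$, and this is the only place a genuine microlocal argument enters. By elliptic regularity $\wf(u)\subset\Sigma$. Near $R^{-\alpha\beta}$ the weight $r$ is above threshold, $Pu=f$ is microlocally Schwartz, and $u\in\hsc{\infty,r}$, so iterating the above-threshold radial point estimate Theorem~\ref{thm:PoS}(4) gives $\wf^{\infty,r'}(u)\cap R^{-\alpha\beta}=\varnothing$ for all $r'$, i.e.\ $\wf(u)\cap R^{-\alpha\beta}=\varnothing$. Together with $Pu=f\in\Y_{\alpha\beta}$, the equivalent description of $\X_{\alpha\beta}$ recorded just before the statement (real principal type propagation out of a neighborhood of $R^{-\alpha\beta}$, where $u$ is Schwartz, plus the below-threshold estimate) gives $\wf(u)\subset R^{\alpha\beta}$ and $u\in\bigcap_{r'<-1/2}\hsc{\infty,r'}$, so $u\in\X_{\alpha\beta}$. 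Thus $P_{\alpha\beta}$ is a bijection, with inverse the common restriction of the $P_{s,r}^{-1}$ (in the Feynman case, the restriction of $(\Delta_{\g}-\lambda^2\mp i0)^{-1}$); continuity of the inverse in the projective-limit topologies is immediate from boundedness of the $P_{s,r}^{-1}$.

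The main obstacle is essentially bookkeeping rather than new analysis: one must carefully verify, component of $R$ by component of $R$, that the intersection-defined spaces $\X_{\alpha\beta},\Y_{\alpha\beta}$ sit inside the fixed-order spaces for an appropriate variable-order weight, and that the inverses furnished by Theorem~\ref{thm:invertibility} for different $(s,r)$ agree — both of which follow from the radial point estimates of Theorem~\ref{thm:PoS} and uniqueness. The one substantive step is the bootstrap to Schwartz behavior at the "wrong" radial set $R^{-\alpha\beta}$, which is exactly what the above-threshold estimate is designed to provide.
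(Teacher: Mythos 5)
Your proposal is correct and follows essentially the same route as the paper: injectivity by viewing $u\in\X_{\alpha\beta}\subset\X_{s,r_{\alpha\beta}}$ and invoking Theorem~\ref{thm:invertibility}, and surjectivity by solving in one fixed-order space $\X_{s,r_{\alpha\beta}}$ and then upgrading decay at $R^{-\alpha\beta}$ via the above-threshold radial point estimate to land in $\X_{\alpha\beta}$. The only cosmetic difference is that you separately verify the solutions $u_s$ agree across different $s$; the paper avoids this by fixing a single $(s,r_{\alpha\beta})$ and deducing $u\in\X_{\alpha\beta}$ directly from the microlocal bootstrap and the equivalent characterization of $\X_{\alpha\beta}$ recorded just before the statement.
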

    \begin{proof}
        Let $u\in\X_{\alpha\beta}$. Choose any particular $s\in\R$ and $r_{\alpha\beta}$ satisfying the relevant threshold conditions. Since $u\in\X_{s,r_{\alpha\beta}}$ and $P_{s,r_{\alpha\beta}}$ is invertible, $P_{\alpha\beta}u=P_{s,r_{\alpha\beta}}u\neq 0$ if $u\neq 0$. Therefore, $P_{\alpha\beta}$ is injective.

        To see surjectivity, let $f\in\Y_{\alpha\beta}$. Choose any particular $s\in\R$ and $r_{\alpha\beta}$ satisfying the relevant threshold conditions; we have $f\in\Y_{s,r_{\alpha\beta}}$. Let $u=P_{s,r_{\alpha\beta}}^{-1}f\in\X_{s,r_{\alpha\beta}}$. By the threshold condition for $r_{\alpha\beta}$, $\wf^{s,r}(u)\cap R^{-\alpha\beta}=\varnothing$ for some $r>-\frac{1}{2}$. Since $f\in\Y_{\alpha\beta}$, we also know that $\wf(f)\cap R^{-\alpha\beta}=\varnothing$. Then by the above-threshold estimate we get $\wf(u)\cap R^{-\alpha\beta}=\varnothing$. Since $Pu=f\in\Y_{\alpha\beta}$, this proves that $u\in \X_{\alpha\beta}$ and therefore the surjectivity of $P_{\alpha\beta}$.
    \end{proof}

\subsection{Boundary data map and Poisson operators}
\label{sec:bd-Poisson}
    We now consider spaces of solutions to the equation $Pu=0$ which are ``Schwartz away from the radial set''. We define
    \[\sol=\{u\in C^{-\infty}(M)\ \mid\ Pu=0\text{ and }\wf(u)\subset R\}\]
    and the space of ``approximate solutions''
    \[\sols=\{u\in C^{-\infty}(M)\ \mid\ Pu\in\dot{C}^{\infty}(M)\text{ and }\wf(u)\subset R\}.\]
    For any $f\in\dot{C}^{\infty}(M)$, we also write $\solf=\{u\in C^{-\infty}(M)\ \mid\ Pu=f\text{ and }\wf(u)\subset R\}$. The below-threshold radial point estimate implies that $\sols\subset \bigcap_{r<-\frac{1}{2}}\hsc{\infty,r}$.
    
    These spaces of solutions can be parametrized by incoming and outgoing data on the boundary of $M$. In this section, we collect important results about this parametrization, first obtained in \cite{Melrose-AES}.
    
    Let $x$ be a boundary-defining function on $M$ with respect to which the metric has the form Eq.~(\ref{eq:metric}).

    \begin{theorem}{\textbf{(Asymptotic boundary data for Schwartz right-hand side)}} 
    \label{thm:bd}
        \begin{enumerate}
                \item
                    For any $a^{\pm}\in C^{\infty}(S)$ there exists a unique $[a^{\pm}_M]\in C^{\infty}(M)/\dot{C}^{\infty}(M)$ such that for some (and therefore every) representative $a_M^{\pm}$ we have $a^{\pm}_M|_S=a^{\pm}$ and 
                    \[u_{\pm}:=x^{\frac{n-1}{2}}a^{\pm}_Me^{\mp\frac{i\lambda}{x}}\in \sols.\]
        
                \item
                    For any $u\in \sols$, there exist unique  $[a^+_M],[a^-_M]\in C^{\infty}(M)/\dot{C}^{\infty}(M)$ such that for some representatives $a^+_M$, $a^-_M$ we have
                        \[
                        u
                        =
                        x^{\frac{n-1}{2}}
                        \Big(
                        a_M^+e^{-\frac{i\lambda}{x}}
                        +
                        a_M^-e^{\frac{i\lambda}{x}}
                        \Big).
                        \]
                \item 
                    Fix $f\in\dot{C}^{\infty}(M)$. For any $a^{\pm}\in C^{\infty}(S)$ there exist unique $u\in \solf$ and $a^{\mp}\in C^{\infty}(S)$ for which there exist $a_M^+,a_M^-\in C^{\infty}(M)$ such that $a^+_M|_S=a^+$, $a^-_M|_S=a^-$ and
                        \[
                        u
                        =
                        x^{\frac{n-1}{2}}
                        \Big(
                        a^+_Me^{-\frac{i\lambda}{x}}
                        +
                        a^-_Me^{\frac{i\lambda}{x}}
                        \Big).
                        \]
        \end{enumerate}
    \end{theorem}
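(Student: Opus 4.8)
The plan is to establish the three parts in the order (1), (2), (3): part (1) is a self‑contained formal computation, part (2) carries the analytic content, and part (3) follows quickly from (1), (2), and the already‑established invertibility of the operators $P_{\alpha\beta}\colon\X_{\alpha\beta}\to\Y_{\alpha\beta}$. For part (1) I would conjugate $P$ by the oscillation and the weight. A direct computation from Eq.~(\ref{eq:op}), using $e^{\pm i\lambda/x}(x^2\partial_x)e^{\mp i\lambda/x}=x^2\partial_x\pm i\lambda$ and $x^{-\frac{n-1}{2}}(x^2\partial_x)x^{\frac{n-1}{2}}=x^2\partial_x+\tfrac{n-1}{2}x$, shows that
\[
\tilde P_\pm:=x^{-\frac{n-1}{2}}e^{\pm i\lambda/x}\,P\,e^{\mp i\lambda/x}\,x^{\frac{n-1}{2}}=\mp 2i\lambda(x^2\partial_x)+x^2Q_\pm,\qquad Q_\pm\in\diff[2]{b},
\]
the point being that the term $-\lambda^2$, the first‑order terms proportional to $x(x^2\partial_x)$, and the order‑$x$ terms with constant coefficients all cancel. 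Writing $\tilde P_\pm=x\hat P_\pm$ with $\hat P_\pm=\mp 2i\lambda(x\partial_x)+xQ_\pm\in\diff[2]{b}$, the indicial family of $\hat P_\pm$ at $x=0$ is $s\mapsto\mp 2i\lambda s$, whose only root $s=0$ is simple. Since $u_\pm=x^{\frac{n-1}{2}}a_M^\pm e^{\mp i\lambda/x}$ lies in $\sols$ exactly when $\hat P_\pm a_M^\pm\in\dot{C}^{\infty}(M)$, I would solve $\hat P_\pm a_M^\pm\sim 0$ to infinite order at $S$: writing $a_M^\pm\sim\sum_{k\ge 0}a_k x^k$, the coefficient of $x^k$ in $\hat P_\pm a_M^\pm$ equals $\mp 2i\lambda k\,a_k$ plus a differential expression on $S$ in $a_0,\dots,a_{k-1}$, so $a_0=a^\pm$ is unconstrained while each $a_k$ with $k\ge 1$ is uniquely determined; a Borel summation then produces $a_M^\pm\in C^\infty(M)$, and $\wf(u_\pm)\subset R^\pm\subset R$ is the standard wavefront bound for $e^{\mp i\lambda/x}$ times a function conormal to $\partial M$. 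Running the same recursion on the difference of two such solutions with equal restriction to $S$ forces all of its Taylor coefficients to vanish, which gives the uniqueness of $[a_M^\pm]$ modulo $\dot{C}^{\infty}(M)$.

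For part (2), given $u\in\sols$ I would first split it microlocally: choosing $Q_+\in\psco$ elliptic on a neighborhood of $R^+=R_1^+\cup R_2^+$ whose closure misses $R^-$ and setting $Q_-:=I-Q_+$, one has $u=Q_+u+Q_-u$ with $\wf(Q_\pm u)\subset\wf'(Q_\pm)\cap\wf(u)\subset R^\pm$ (using $\wf(u)\subset R$), and $P(Q_\pm u)=Q_\pm Pu+[P,Q_\pm]u\in\dot{C}^{\infty}(M)$ because $\wf'([P,Q_\pm])\subset\wf'(Q_+)\cap\wf'(I-Q_+)$ is disjoint from $R\supset\wf(u)$; hence $Q_\pm u\in\sols$ and it suffices to treat $u_+:=Q_+u$, which has $\wf(u_+)\subset R^+$ and $Pu_+\in\dot{C}^{\infty}(M)$. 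Theorem~\ref{thm:mod-reg} applies at every point of $R^+$ (the right‑hand‑side hypotheses being vacuous since $Pu_+\in\dot{C}^{\infty}(M)$, and the off‑radial hypotheses holding since $\wf(u_+)\subset R^+$), so $u_+$ is iteratively regular under the module $\mathcal{M}_{U,+}$; transporting this through the conjugation --- which carries $R^+$ to the b‑zero section over $S$ and Legendrian conormality to conormality at $\partial M$ --- shows that $v_+:=x^{-\frac{n-1}{2}}e^{i\lambda/x}u_+$ is conormal to $\partial M$ in the b‑sense, and it satisfies $\hat P_+v_+=x^{-1}\tilde P_+v_+\in\dot{C}^{\infty}(M)$. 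A b‑conormal distribution annihilated modulo $\dot{C}^{\infty}(M)$ by $\hat P_+$, whose sole indicial root $0$ is a simple non‑negative integer, is polyhomogeneous with a log‑free smooth expansion, so $v_+\in C^\infty(M)$ modulo $\dot{C}^{\infty}(M)$; absorbing the Schwartz remainder gives $u_+=x^{\frac{n-1}{2}}a_M^+e^{-i\lambda/x}$ with $a_M^+\in C^\infty(M)$, and symmetrically $u_-=x^{\frac{n-1}{2}}a_M^-e^{i\lambda/x}$. Adding these yields the stated form, and uniqueness is then immediate: if two such representations agree, then $x^{\frac{n-1}{2}}(a_M^+-b_M^+)e^{-i\lambda/x}=-x^{\frac{n-1}{2}}(a_M^--b_M^-)e^{i\lambda/x}$ has wavefront set contained in $R^+\cap R^-=\varnothing$, so each side is Schwartz and $[a_M^\pm]=[b_M^\pm]$.

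For part (3), fix $f\in\dot{C}^{\infty}(M)$ and $a^+\in C^\infty(S)$ (the case of prescribed $a^-$ being symmetric, with $R^-$ replacing $R^+$ and $P_{++}^{-1}$ replacing $P_{--}^{-1}$). By part (1) choose $u_1=x^{\frac{n-1}{2}}a_{1,M}^+e^{-i\lambda/x}\in\sols$ with $a_{1,M}^+|_S=a^+$ and $Pu_1=g_1\in\dot{C}^{\infty}(M)$. Since $f-g_1\in\dot{C}^{\infty}(M)\subset\Y_{--}$, set $w:=P_{--}^{-1}(f-g_1)\in\X_{--}$, so $Pw=f-g_1$ and $\wf(w)\subset R^{--}=R^-$; then $u:=u_1+w$ has $Pu=f$ and $\wf(u)\subset R$, i.e. $u\in\solf$. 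Applying part (2) to $u$ yields the claimed form with some $a_M^\pm\in C^\infty(M)$, and applying part (2) to $w$ (whose wavefront set lies in $R^-$) forces the $e^{-i\lambda/x}$‑coefficient of $w$ to be Schwartz, so the incoming datum of $u$ equals that of $u_1$, namely $a_M^+|_S=a^+$. For uniqueness, if $u,u'\in\solf$ share the same incoming datum then $w':=u-u'\in\sol$ has vanishing incoming datum, hence $\wf(w')\subset R^+$ and $Pw'=0\in\Y_{--}$, so $w'\in\X_{--}$; invertibility of $P_{--}\colon\X_{--}\to\Y_{--}$ forces $w'=0$, and uniqueness of $a^\mp$ follows from that of $u$ via part (2).

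The step I expect to be the main obstacle is the structural assertion inside part (2): that an a priori merely distributional element of $\sols$ is in fact a sum of two smooth oscillatory functions of the prescribed form. This is precisely where Theorem~\ref{thm:mod-reg} is essential, to upgrade the bare decay $\sols\subset\bigcap_{r<-1/2}\hsc{\infty,r}$ to full conormality at $\partial M$; once that is in hand, the simple‑indicial‑root structure of $\hat P_\pm$ produces the asymptotic expansion with no obstructions and no logarithmic terms. Everything else --- the formal Borel summation of part (1) and the reduction of part (3) to the invertibility of the $P_{\alpha\beta}$ --- is comparatively routine.
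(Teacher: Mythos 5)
You take essentially the same route as the paper in all three parts: part (1) via the conjugated-operator recursion and Borel summation (your indicial-family language encodes the same computation as the paper's Eqs.~(\ref{eq:Taylor})--(\ref{eq:Taylor-Pu})); part (2) via the microlocal split $u=Q_+u+Q_-u$, conjugation, and the module regularity of Theorem~\ref{thm:mod-reg}; part (3) by gluing a part-(1) solution to $P_{--}^{-1}$ of the remaining Schwartz error. The one place you are thinner than the paper is the final step of part (2), where you assert that a b-conormal $v_+$ with $\hat P_+v_+\in\dot{C}^{\infty}(M)$ and sole simple indicial root at $\sigma=0$ is automatically polyhomogeneous with a smooth, log-free expansion. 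This is true, but since $\hat P_+=\mp2i\lambda(x\partial_x)+xQ$ with $Q\in\diff[2]{b}$ is not b-elliptic (indicial part of b-order one, remainder of b-order two), it is not a verbatim citation of elliptic b-polyhomogeneity theory; the paper supplies the argument directly and rather economically: a single Mellin contour shift across the simple pole at $\sigma=0$ identifies the leading coefficient $\hat{f}_{\pm}(0,y)$, and rather than iterating, one builds a part-(1) solution with that boundary value and invokes Lemma~\ref{thm:wf-bd} to see the difference is Schwartz. Your uniqueness argument in part (2), observing that $x^{\frac{n-1}{2}}(a_M^+-b_M^+)e^{-i\lambda/x}$ has wavefront set contained in $R^+\cap R^-=\varnothing$, is a clean alternative to the paper's pointwise oscillatory-Taylor argument and works. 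One small slip in part (3): vanishing incoming datum gives $\wf(w')\subset R^-$, not $R^+$, which is exactly what you need for $w'\in\X_{--}$; the conclusion is unaffected.
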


    Before proceeding to prove the theorem, we note that the wavefront set of a solution is encoded in the support of its boundary data.
    
    \begin{lemma}
    \label{thm:wf-bd}
        If $Pu\in \dot{C}^{\infty}(M)$ and $u=x^{\frac{n-1}{2}}\Big(a_M^+e^{-\frac{i\lambda}{x}}+a_M^-e^{\frac{i\lambda}{x}}\Big)$ for some $a_M^{+},a_M^{-}\in C^{\infty}(M)$, then
        \begin{equation}
        \wf(u)
        =
        \wf^{\infty,-\frac{1}{2}}(u)
        =
         \Big(\pi^{-1}(\supp(a^+))\cap R^+\Big)
         \cup
         \Big(\pi^{-1}(\supp(a^-))\cap R^-\Big),
        \label{eq:wf-bd}
        \end{equation}
        where $a^{\pm}=a_M^{\pm}|_S$ and $\pi:\tscc\to M$ is the bundle projection.
    \end{lemma}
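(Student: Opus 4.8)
The plan is to establish the chain $E\subset\wf^{\infty,-\frac{1}{2}}(u)\subset\wf(u)\subset E$, where $E$ denotes the set on the right of Eq.~(\ref{eq:wf-bd}) and the middle inclusion is automatic; the outer two then force equality throughout. Throughout I write $u=u_++u_-$ with $u_\pm:=x^{\frac{n-1}{2}}a_M^\pm e^{\mp\frac{i\lambda}{x}}$, which are locally integrable on $M^\circ$ and hence lie in $C^{-\infty}(M)$. Two standard features of the scattering calculus carry the argument. First, a function $x^c b$ with $b\in C^{\infty}(M)$ and $c\in\R$ is conormal to the zero section over $\partial M$: its scattering wavefront set lies in $\{x=0,\ \xi=0,\ \eta=0\}$, one has $x^c b\in\hsc{\infty,r}$ for every $r<c-\frac{n}{2}$, and $\pi^{-1}(q)$ misses $\wf^{\infty,c-n/2}(x^c b)$ exactly when $b|_{\partial M}$ vanishes near $q$ — the last two points read off from $\bigl\|x^{-(c-n/2)}x^c b\bigr\|_{L^2_{\g}}^2\sim\int|b|^2\,x^{-1}\,dx\,dA_{\h}$. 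Second, multiplication by $e^{\mp\frac{i\lambda}{x}}$ conjugates $\psc{m,l}$ into itself near base infinity, consistently with $x^2 D_x\,e^{\mp\frac{i\lambda}{x}}=\pm\lambda\,e^{\mp\frac{i\lambda}{x}}$; it therefore preserves $\dot{C}^{\infty}(M)$, $C^{-\infty}(M)$ and each $\hsc{s,r}$, and translates scattering wavefront sets of distributions by $(q,\xi,\eta)\mapsto(q,\xi\pm\lambda,\eta)$.

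First I would prove $\wf(u)\subset R$. Writing $u_\pm=e^{\mp\frac{i\lambda}{x}}v_\pm$ with $v_\pm=x^{\frac{n-1}{2}}\tilde a_\pm$ and $\tilde a_\pm\in C^{\infty}(M)$ (smooth extensions of $a_M^\pm$), the conormality fact gives $\wf(v_\pm)\subset\{x=0,\xi=\eta=0\}$, and conjugating by $e^{\mp\frac{i\lambda}{x}}$ translates this to $\wf(u_\pm)\subset\{x=0,\ \xi=\pm\lambda,\ \eta=0\}=R^\pm$; hence $\wf(u)\subset\wf(u_+)\cup\wf(u_-)\subset R$. This uses only the explicit form of $u$, and in particular subsumes the elliptic-regularity statement $\wf(u)\subset\Sigma$.

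Next I would prove the lower bound $E\subset\wf^{\infty,-\frac{1}{2}}(u)$. Take $\zeta=(q,\lambda,0)\in R^+$ with $q\in\supp(a^+)$ (the $R^-$ case being identical). If $\zeta\notin\wf^{\infty,-\frac{1}{2}}(u)$, then since $\wf(u_-)\subset R^-$ is disjoint from $\zeta$ we get $\zeta\notin\wf^{\infty,-\frac{1}{2}}(u_+)$, and conjugating by $e^{\frac{i\lambda}{x}}$ gives $(q,0,0)\notin\wf^{\infty,-\frac{1}{2}}(v_+)$. As $\wf(v_+)$ lies in $\{x=0,\xi=\eta=0\}$, which meets $\pi^{-1}(q)$ only at $(q,0,0)$, there is $\varphi\in C_c^{\infty}(M)$ with $\varphi(q)\ne0$ and $\varphi v_+\in\hsc{0,-\frac{1}{2}}$; but $q\in\supp(a_M^+|_{\partial M})$ keeps both $|\varphi|$ and $|a_M^+|$ bounded below on a set $[0,\delta)_x\times V$ with $V\subset\partial M$ open and nonempty, so $\|\varphi v_+\|_{\hsc{0,-1/2}}^2\sim\int|\varphi a_M^+|^2 x^{-1}\,dx\,dA_{\h}=\infty$, a contradiction. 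For the upper bound $\wf(u)\subset E$, by the first step it remains to rule out $(q,\lambda,0)\in\wf(u)$ when $q\notin\supp(a^+)$ (and symmetrically on $R^-$). Then $a_M^+$ vanishes on $\partial M$ near $q$, so $a_M^+=xc^+$ there with $c^+\in C^{\infty}$, and near $q$ one has $u=x^{\frac{n+1}{2}}c^+e^{-\frac{i\lambda}{x}}+u_-$; the conormality fact puts the first summand in $\hsc{\infty,r}$ for all $r<\frac{1}{2}$, while $u_-$ is microlocally Schwartz near $(q,\lambda,0)\in R^+$ because $\wf(u_-)\subset R^-$. Hence $(q,\lambda,0)\notin\wf^{s,r}(u)$ for all $s$ and all $r<\frac{1}{2}$, and since $Pu\in\dot{C}^{\infty}(M)$ the above-threshold radial point estimate of Theorem~\ref{thm:PoS}, part~(4), applied at $(q,\lambda,0)\in R^+$ with base weight $r_0=0$, upgrades this to all $r$, so $(q,\lambda,0)\notin\wf(u)$. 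Combining the three inclusions gives Eq.~(\ref{eq:wf-bd}).

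The hard part is making the two scattering-calculus inputs quantitative: precisely, that $x^c C^{\infty}(M)$ has scattering wavefront set exactly the zero section over $\supp(b|_{\partial M})$ at the threshold weight $c-\frac{n}{2}$, and that conjugation by $e^{\mp i\lambda/x}$ transports wavefront sets and weighted regularity microlocally. A related subtlety — and the reason a microlocal radial point estimate (rather than an iterative computation of the asymptotic expansion near $q$) is needed for the upper bound — is that when $q\notin\supp(a^+)$ the term $u_-$ is only microlocally trivial at the single $R^+$ point over $q$, not in a full neighborhood of $q$, so the argument must remain microlocal at $R^+$ throughout.
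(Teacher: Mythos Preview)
Your proof is correct and follows the same overall architecture as the paper's: establish $\wf(u_\pm)\subset R^\pm$, obtain the lower bound at the threshold weight $-\tfrac12$ by an $L^2$ divergence, and obtain the upper bound by writing $a_M^\pm=xc^\pm$ near points outside $\supp(a^\pm)$ and then invoking the above-threshold radial estimate. The packaging differs in two places. First, to localize $\wf(u_\pm)$ to $R^\pm$ you invoke the general fact that conjugation by $e^{\mp i\lambda/x}$ translates scattering wavefront sets in the $\xi$-direction, reducing to the conormality of $x^{(n-1)/2}a_M^\pm$ at the zero section; the paper instead checks this by hand, applying the concrete operators $Q_\pm=\varphi\cdot\frac{1}{x}(x^2\partial_x\mp i\lambda)$ and $\varphi\,\partial_{y_j}^k$ iteratively and reading off their characteristic sets over each boundary point. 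Your route is cleaner once the conjugation lemma is taken as known; the paper's is more self-contained. Second, for the upper bound you apply the radial estimate directly to $u$, using the hypothesis $Pu\in\dot{C}^\infty(M)$; the paper first deduces $Pu_\pm\in\dot{C}^\infty(M)$ individually from $\wf(u_+)\cap\wf(u_-)=\varnothing$ and applies the estimate to $u_\pm$ separately. Your version bypasses that intermediate step.
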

    
    \begin{proof}
    Note that $x^{\frac{n}{2}+r}e^{\mp\frac{i\lambda}{x}}a\in \bigcap_{r'<r}\hsc{\infty,r'}$ automatically if $a\in C^{\infty}(M)$.
    
    Let $u_{\pm}=x^{\frac{n-1}{2}}a_M^{\pm}e^{\mp\frac{i\lambda}{x}}$.  Since $u_{\pm}\in C^{\infty}(M\backslash S)$, $\wf(u_{\pm})$ lies entirely over the boundary. 

    Consider any $q\in S$. Let $U\subset M$ be a neighborhood of $q$ on which we have coordinates $(x,y_1,\ldots,y_{n-1})$ as above. Choose $\varphi\in C_{\mathrm{c}}^{\infty}(M)$ such that $\supp(\varphi)\subset U$ and $\varphi(q)\neq 0$. Let $Q_{\pm}=\varphi\cdot\frac{1}{x}(x^2\partial_x\mp i\lambda)$. By direct computation,
    \[
    Q_{\pm}u_{\pm}
    =
    x^{\frac{n-1}{2}}e^{\mp\frac{i\lambda}{x}}\bigg(\frac{n-1}{2}\varphi a_M^{\pm}+x\varphi\cdot \partial_x a_M^{\pm}\bigg),
    \]
    and by iteration, for any $k\in\N$ there is $a_k^{\pm}\in C^{\infty}(M)$ such that $Q_{\pm}^k u_{\pm}=x^{\frac{n-1}{2}}e^{\mp\frac{i\lambda}{x}}a_k^{\pm}\in \bigcap_{r<-\frac{1}{2}}\hsc{\infty,r}$. Meanwhile, the characteristic set of $Q_{\pm}^k\in \psc{k,k}$ over $q$ is $\{\xi=\pm\lambda\}$, so by elliptic regularity $\wf(u_{\pm})\cap\pi^{-1}(q)\subset \{\xi=\pm\lambda\}$. Similarly, $\varphi\partial_{y_j}^ku_{\pm}\in \bigcap_{r<-\frac{1}{2}}\hsc{\infty,r}$ for all $k\in\N$. Since $\varphi\partial_{y_j}^k\in \psc{k,k}$ and its characteristic set over $q$ is $\{\eta_j=0\}$, we get $\wf(u_{\pm})\cap\pi^{-1}(q)\subset \{\eta_j=0\}$. 
    
    Combining the results, we get $\wf(u_{\pm})\subset R^{\pm}$. Since $u=u_++u_-$ and $Pu\in\dot{C}^{\infty}(M)$, the fact that $\wf(u_+)\cap\wf(u_-)=\varnothing$ implies $Pu_{\pm}\in\dot{C}^{\infty}(M)$ individually, which allows us to use propagation of singularities.

    Consider $q\in S\backslash\supp(a^{\pm})$, so we can choose $\varphi\in C_{\mathrm{c}}^{\infty}(M)$ which is nonzero at $q$ but for which $\varphi a_M^{\pm}|_S=0$. Then $\varphi a_M^{\pm}= xb$ for some $b\in C^{\infty}(M)$, and $\varphi u_{\pm}= x^{\frac{n+1}{2}}be^{\mp\frac{i\lambda}{x}} \in \bigcap_{r<\frac{1}{2}}\hsc{\infty,r}$, so, for example, $\pi^{-1}(q)\cap R^{\pm}\notin \wf^{\infty,0}(u_{\pm})$. Then the above-threshold estimate for $u_{\pm}$ implies $\pi^{-1}(q)\cap R^{\pm}\notin \wf(u_{\pm})$, so in fact by the previous result $\wf(u_{\pm})\cap \pi^{-1}(q)=\varnothing$.
    
    On the other hand, if $q\in\supp(a^{\pm})$, then for any neighborhood $U\subset M$ of $q$ there exists $\varphi\in C_{\mathrm{c}}^{\infty}(M)$ and a point $q'\in U\cap S$ such that $\supp(\varphi)\subset U$ but $\varphi(q')a_M^{\pm}(q')=C\neq 0$. Then $|\varphi u_{\pm}|\geqslant |C/2|x^{\frac{n-1}{2}}$ on a neighborhood of $q'$, so $\varphi u_{\pm}\notin \hsc{\infty,-\frac{1}{2}}$. Since $U$ can be taken arbitrarily small, this implies that $\wf^{\infty,-\frac{1}{2}}(u_{\pm})\cap\pi^{-1}(q)\neq\varnothing$.

    Thus we have shown $\wf(u_{\pm})=\wf^{\infty,-\frac{1}{2}}(u_{\pm})=\pi^{-1}(\supp(a^{\pm}))\cap R^{\pm}$, and the statement of the lemma follows.
    \end{proof}

    Now we consider Theorem \ref{thm:bd}. All components of this theorem are contained in \cite{Melrose-AES}, but we give a streamlined proof for completeness.
	\begin{proof}[Proof of Theorem~\ref{thm:bd}]
		Let $P_{\pm}^{-1}=P_{\pm\pm}^{-1}$ denote the Feynman inverses.

  \begin{enumerate}
      \item 
            Consider a general function on $M$ of the form $u_{\pm}=x^{\frac{n-1}{2}}e^{\mp i\frac{\lambda}{x}}a_M^{\pm}$ with $a_M^{\pm}\in C^{\infty}(M)$. Using Taylor's theorem in the $x$ variable for $a_M^{\pm}$, for any non-negative integer $k$ we can write
            \begin{equation}
            u_{\pm}=\sum_{j=0}^k x^{\frac{n-1}{2}+j}e^{\mp i\frac{\lambda}{x}}a_j+x^{\frac{n-1}{2}+k+1}e^{\mp i\frac{\lambda}{x}}b_k,
                \label{eq:Taylor}
            \end{equation}
            where $a_j,b_k\in C^{\infty}(M)$ and $a_j|_S=\partial_x^ja_M^{\pm}|_S$. Applying the coordinate expression Eq.~(\ref{eq:op}) for $P$ to Eq.~(\ref{eq:Taylor}), we can compute
            \begin{equation}
            \begin{split}
            Pu_{\pm}
            =
            x^{\frac{n-1}{2}}e^{\mp i\frac{\lambda}{x}}
            \Bigg(
            \sum_{j=2}^{k+1} \Big(P_{j-2}a_{j-2}\mp 2i\lambda(j-1)a_{j-1}\Big)x^j
            +\\
            +
            \Big(xP_{k+1}b_k\mp 2i\lambda(k+1)b_k\Big)x^{k+2}
            \Bigg),
            \end{split}
                \label{eq:Taylor-Pu}
            \end{equation}
            where $P_j\in\diff{}$, so in particular $P_ja_j,P_{k+1}b_k\in C^{\infty}(M)$.

            Now, with this in mind, let us choose functions $a_j\in C^{\infty}(M)$ by taking any $a_0$ such that $a_0|_S=a^{\pm}$ and setting $a_j=\pm\frac{P_{j-1}a_{j-1}}{2i\lambda j}$ for all $j>0$. By Borel's lemma, there exists $a_M^{\pm}\in C^{\infty}(M)$ such that $\partial_x^ja_M^{\pm}|_S=a_j|_S$. Then $u_{\pm}=x^{\frac{n-1}{2}}e^{\mp i\frac{\lambda}{x}}a_M^{\pm}$ satisfies Eq.~(\ref{eq:Taylor}), and the Taylor series of $x^{-\frac{n-1}{2}}e^{\pm i\frac{\lambda}{x}}Pu_{\pm}$ is zero by Eq.~(\ref{eq:Taylor-Pu}). Therefore, $Pu_{\pm}\in\dot{C}^{\infty}(M)$, which by Lemma~\ref{thm:wf-bd} implies $\wf(u_{\pm})\subset R$ and so $u_{\pm}\in\sols$.

            To see uniqueness of $a_M^{\pm}$ up to Schwartz terms, note that if $u,v\in\sols$ are of the form $u=x^{\frac{n-1}{2}}e^{\mp\frac{i\lambda}{x}}a$, $v=x^{\frac{n-1}{2}}e^{\mp\frac{i\lambda}{x}}b$ with $a,b\in C^{\infty}(M)$ and $a|_S=b|_S$, then $u-v\in\sols$ and $u-v=x^{\frac{n-1}{2}}e^{\mp\frac{i\lambda}{x}}(a-b)$, where $(a-b)|_S=0$. Then by Lemma~\ref{thm:wf-bd} we get $u-v\in\dot{C}^{\infty}(M)$, which implies $a-b\in\dot{C}^{\infty}(M)$.

        \item 
            Choose $Q_+,Q_-\in\psco$ such that $Q_++Q_-=I$ and $\wf'(Q_{\pm})\cap R^{\mp}=\varnothing$. Let $u_{\pm}=Q_{\pm}u$. Then we have $u=u_++u_-$, $\wf(u_{\pm})\subset R^{\pm}$, and since $\wf'([P,Q_{\pm}])\cap R=\varnothing$,
            \[Pu_{\pm}=PQ_{\pm}u=Q_{\pm}Pu +[P,Q_{\pm}]u\in \dot{C}^{\infty}(M).\]

            Define $a_M^{\pm}=x^{-\frac{n-1}{2}}e^{\pm i\frac{\lambda}{x}}u_{\pm}$ and conjugate $P$ to $\hat{P}_{\pm}=(x^{-\frac{n-1}{2}}e^{\pm\frac{ i\lambda}{x}})P(x^{\frac{n-1}{2}}e^{\mp\frac{ i\lambda}{x}})$. A priori we know that $a_M^{\pm}\in \bigcap_{r<-\frac{n}{2}}\hsc{\infty,r}$ and $\hat{P}_{\pm}a_M^{\pm}\in\dot{C}^{\infty}(M)$; we will show that in fact $a_M^{\pm}\in C^{\infty}(M)$. Since $a_M^{\pm}$ is smooth away from the boundary, let us work in a collar neighborhood of the boundary of the form $\{x<\varepsilon\}$ on which we have coordinates $(x,y_1,\ldots,y_{n-1})$ as above, and let us choose $\varphi\in C_{\mathrm{c}}^{\infty}(M)$ with support in this neighborhood and such that $\varphi=1$ in a neighborhood of $S$. Consider $\varphi a_M^{\pm}\in \bigcap_{r<-\frac{n}{2}}\hsc{\infty,r}$, which also satisfy $\hat{P}(\varphi a_M^{\pm})\in \dot{C}^{\infty}(M)$.

            It follows from Theorem \ref{thm:mod-reg} (where the conjugation of $P$ has moved the radial set to the zero section) that $Q(\varphi a_M^{\pm})\in \bigcap_{r<-\frac{n}{2}} H^{\infty,r}_{\mathrm{sc}}(M) \subset \bigcap_{r<0} x^rL^2_{\mathrm{b}}$ for all $Q\in\diff{b}$, so $\varphi a_M^{\pm}\in\bigcap_{r<0} H^{\infty,r}_{\mathrm{b}}(M)$. Meanwhile, in coordinates $\hat{P}_{\pm}=\pm 2i\lambda x(x\partial_x)+x^2Q$, where $Q\in \diff{b}$. Then 
            \[
            f_{\pm}
            :=
            x\partial_x (\varphi a_M^{\pm})
            =
            \mp\frac{i}{2\lambda}
            \bigg(\frac{1}{x}\hat{P}(\varphi a_M^{\pm})
            +xQ(\varphi a_M^{\pm})\bigg),
            \]
            and since $\hat{P}(\varphi a_M^{\pm})\in\dot{C}^{\infty}(M)$ and $Q(\varphi a_M^{\pm})\in \bigcap_{r<0} H^{\infty,r}_{\mathrm{b}}(M)$, we see that in fact $f_{\pm}\in \bigcap_{r<1}H^{\infty,r}_{\mathrm{b}}(M)$.

            Since $\varphi a^{\pm}_M$ and $f_{\pm}$ are supported in $\{x<\varepsilon\}$, we can consider their Mellin transforms in the $x$ variable, related by $\hat{f}_{\pm}(\sigma,y)=i\sigma\widehat{\varphi a}_M^{\pm}(\sigma,y)$ (see Section~\ref{sec:b-Mellin}). Then we can take advantage of the higher regularity of $f_{\pm}$ to shift the integration path in the Mellin inversion formula downwards and calculate, for any $r\in (0,1)$,
            \[
            \varphi a_M^{\pm}(x,y)
            =
            \frac{1}{2\pi}\int_{\R+ir}x^{i\sigma}\widehat{\varphi a}_M^{\pm}(\sigma,y)\ d\sigma
            =
            \frac{1}{2\pi i}\int_{\R+ir}x^{i\sigma}\frac{\hat{f}_{\pm}(\sigma,y)}{\sigma}\ d\sigma
            =
            \]
            \[
            =
            \hat{f}_{\pm}(0,y)
            +
            \frac{1}{2\pi i}\int_{\R+i(r-1)}x^{i\sigma}\frac{\hat{f}_{\pm}(\sigma,y)}{\sigma}\ d\sigma.
            \]
            The first term is in $C^{\infty}(M)$. The integral remainder, which we call $g$, has Mellin transform $\hat{g}(\sigma,y)=-i\hat{f}_{\pm}(\sigma,y)/\sigma$, so the properties of $\hat{f}_{\pm}$ imply that $(\sigma\mapsto\hat{g}(\sigma,\bullet))\in (1+|\sigma|^2)^{-k/2}L^2(\R+ir; L^2(S))$ for all $k\in\R$ and any $r\in (-1,0)$. This means $g\in \bigcap_{r<1} \hb{\infty,r}\subset\bigcap_{r<1-\frac{n}{2}}\hsc{\infty,r}$.

            By part 1, we know there exists $\Tilde{a}_M^{\pm}\in C^{\infty}(M)$ such that $\Tilde{a}_M^{\pm}(0,y)=\hat{f}_{\pm}(0,y)$ and $\Tilde{u}_{\pm}:=x^{\frac{n-1}{2}}e^{\mp \frac{i\lambda}{x}}\Tilde{a}_M^{\pm}\in\sols$. Then $\Tilde{a}_M^{\pm}(x,y)=\hat{f}_{\pm}(0,y)+xb(x,y)$ for some $b\in C^{\infty}(M)$, so
            \[
            u_{\pm}-\Tilde{u}_{\pm}
            =
            x^{\frac{n-1}{2}}e^{\mp\frac{i\lambda}{x}}(a_M^{\pm}-\Tilde{a}_M^{\pm})=x^{\frac{n-1}{2}}e^{\mp\frac{i\lambda}{x}}((1-\varphi)a_M^{\pm}+g-xb) 
            \]
            belongs to $\bigcap_{r<\frac{1}{2}}\hsc{\infty,r} \subset \hsc{\infty,-\frac{1}{2}}$. Then since $u_{\pm}-\Tilde{u}_{\pm}\in\sols$, Lemma~\ref{thm:wf-bd} shows that in fact $u_{\pm}-\Tilde{u}_{\pm}\in\dot{C}^{\infty}(M)$. Therefore, $a^{\pm}_M=\Tilde{a}_M^{\pm}+x^{-\frac{n-1}{2}}e^{\pm\frac{i\lambda}{x}}(u_{\pm}-\Tilde{u}_{\pm})\in C^{\infty}(M)$. This completes the existence proof.

            To see uniqueness of $a^+_M$ and $a^-_M$ up to Schwartz terms, let us assume that $u\in \text{Sol}_{P,f}$ and for some $a_M^+,a_M^-,b_M^+,b_M^-\in C^{\infty}(M)$, we have
            \[
            u
            =
            x^{\frac{n-1}{2}}\Big(a_M^+e^{-\frac{i\lambda}{x}}+a_M^-e^{\frac{i\lambda}{x}}\Big)
            =
            x^{\frac{n-1}{2}}\Big(b_M^+e^{-\frac{i\lambda}{x}}+b_M^-e^{\frac{i\lambda}{x}}\Big).
            \]
            Defining $c_M^{\pm}=a_M^{\pm}-b_M^{\pm}$, we get $c_M^+e^{-\frac{i\lambda}{x}}+c_M^-e^{\frac{i\lambda}{x}}=0$. In local coordinates $(x,y_1,\ldots,y_{n-1})$ on a collar neighborhood  of $S$, we can use Taylor's theorem in the $x$ variable to arbitrary order $k$ to write
            \[
            c_M^{\pm}(x,y)=c^{\pm}_0(y)+\sum_{j=1}^k c^{\pm}_k(y) x^k+o(x^k)
            \text{ as }x\to 0\text{ and }y=\mathrm{const}.
            \]

            Then $(c_0^+e^{-\frac{i\lambda}{x}}+c_0^-e^{\frac{i\lambda}{x}})\to 0$
            as $x\to 0$ along any curve of constant $y$; but because of the oscillatory factors this is only possible if $c_0^+(y)=c_0^-(y)=0$ there. By the same argument, all the higher-order coefficients must also be zero. This means that $c_M^{\pm}$ and all their $x$-derivatives vanish on $S$. Since $S$ itself is a surface of constant $x$, the fact that $c_M^{\pm}$ are constant there implies that the $y$-derivatives also vanish there. Similarly, the fact that any $x$-derivative is constant on $S$ implies that any mixed derivative vanishes on $S$. Therefore, $c_M^{\pm}\in \dot{C}^{\infty}(M)$, i.e. $a_M^{\pm}$ and $b_M^{\pm}$ are in the same class in $C^{\infty}(M)/\dot{C}^{\infty}(M)$.

        \item 
            Let us assume we are given $a^+\in C^{\infty}(S)$ (the case of $a^-$ is analogous) and $f\in\dot{C}^{\infty}(M)$. By part 1, there exists $u_+\in\sols$ of the form $u_+=x^{\frac{n-1}{2}}e^{-\frac{i\lambda}{x}}a_M^+$ for some $a_M^+\in C^{\infty}(M)$ with $a_M^+|_S=a^+$. Now let $u_-=P_-^{-1}(f-Pu_+)\in \sols$. By part 2, there exist $\tilde{a}_M^+,\tilde{a}_M^-\in C^{\infty}(M)$ such that $u_-=x^{\frac{n-1}{2}}(e^{-\frac{i\lambda}{x}}\tilde{a}_M^++e^{\frac{i\lambda}{x}}\tilde{a}_M^-)$; moreover, since $\wf(u_-)\subset R^-$, Eq.~(\ref{eq:wf-bd}) implies $\tilde{a}_M^+|_S=0$. Then $u=u_++u_-\in \solf$ has the desired form $u=x^{\frac{n-1}{2}}(e^{-\frac{i\lambda}{x}}a_M^++e^{\frac{i\lambda}{x}}a_M^-)$.

            To see uniqueness of $u$ and $a^-$, assume that $v=x^{\frac{n-1}{2}}\Big(b_M^+e^{-\frac{i\lambda}{x}}+b_M^-e^{\frac{i\lambda}{x}}\Big)$ is another element of $\solf$, where $b_M^+,b_M^-\in C^{\infty}(S)$ with $b_M^+|_S=a^+$. Then
            \[
            u-v
            =
            x^{\frac{n-1}{2}}\Big((a_M^+-b_M^+)e^{-\frac{i\lambda}{x}}+(a_M^--b_M^-)e^{\frac{i\lambda}{x}}\Big)
            \in\sol.
            \]
            Since $(a_M^+-b_M^+)|_S=0$, by Lemma \ref{thm:wf-bd} we have $\wf(u-v)\in R^-$, so $u-v\in\X_{--}$. Then since $P_{-}$ is invertible, $P(u-v)=0$ implies $u-v=0$. This proves uniqueness of $u$. Finally, given $u$, we already know from part 2 that $a_M^-$ is determined uniquely up to a Schwartz error, so in particular its restriction to the boundary $a^-$ is uniquely defined.
  \end{enumerate}
\end{proof}

    Theorem \ref{thm:bd} establishes the existence of
        \[\rho^{\pm}:u\mapsto a^{\pm}=a_M^{\pm}|_S\]
    as maps $\sols\to C^{\infty}(S)$ which restrict to invertible maps $\rho_f^{\pm}:\solf\to C^{\infty}(S)$ for any fixed $f\in \dot{C}^{\infty}(M)$. By part 1, they also induce bijective $\rho^{\pm}_{\infty}:P_{\pm\pm}^{-1}\dot{C}^{\infty}(M)/\dot{C}^{\infty}(M)\to C^{\infty}(S)$, where
    \[P_{\pm\pm}^{-1}\dot{C}^{\infty}(M)=\X_{\pm\pm}\cap\sols=\{u\in\sols\ \mid\ \rho^{\mp}u=0\}.\]
    We call $\rho^+$ and $\rho^-$ the incoming and outgoing Feynman boundary data maps respectively. The inverses $\mathcal{U}_f^{\pm}=(\rho^{\pm}_f)^{-1}$ are the incoming and outgoing exact Poisson operators for the equation $Pu=f$, while $\mathcal{U}_{\infty}^{\pm}=(\rho^{\pm}_{\infty})^{-1}$ are the purely-incoming and purely-outgoing approximate Poisson operators.
    
    We define the causal boundary data maps
    \begin{equation}
    \rho^{+-}=\pi_1\circ\rho^++\pi_2\circ\rho^-,
    \hskip 20pt
    \rho^{-+}=\pi_1\circ\rho^-+\pi_2\circ\rho^+,
        \label{eq:bd-causal}
    \end{equation}
    where $\pi_1,\pi_2:C^{\infty}(S_1)\oplus C^{\infty}(S_2)\to C^{\infty}(S_1)\oplus C^{\infty}(S_2)$ are projections onto the first and second component of the boundary respectively. For the sake of uniformity of notation, we will often also denote $\rho^{\pm}$ by $\rho^{\pm\pm}$. In the setting we are considering, where the causal Fredholm problems are in fact invertible, the causal boundary data maps have mapping properties analogous to those of the Feynman versions.
    \begin{prop}
    $\rho^{\pm\mp}:\sols\to C^{\infty}(S)$ restrict to bijective maps $\rho^{\pm\mp}:\solf\to C^{\infty}(S)$ for every $f\in\dot{C}^{\infty}(M)$.
    \end{prop}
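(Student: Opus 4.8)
The plan is to prove injectivity and surjectivity of $\rho^{+-}\colon\solf\to C^\infty(S)$ separately; the case of $\rho^{-+}$ is identical after interchanging $S_1$ and $S_2$. The two inputs are the invertibility of $P_{-+}\colon\X_{-+}\to\Y_{-+}$ established above, and Lemma~\ref{thm:wf-bd}, which reads off $\wf(u)$ of a solution from the supports of its incoming and outgoing data and localizes each piece to $R^+$ resp.\ $R^-$. Throughout I decompose $a\in C^\infty(S)$ as $a=a_1+a_2$ with $a_j=\pi_j a$ supported in $S_j$.

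First I would record the following step, which is the one place real analysis enters. For $h\in\dot C^\infty(M)$, set $w:=P_{-+}^{-1}h$; then $Pw=h$, $\wf(w)\subset R^{-+}$, and $\rho^{+-}w=0$. Indeed $h\in\Y_{-+}$, so $w\in\X_{-+}$, i.e.\ $\wf(w)\cap R^{+-}=\varnothing$; since $Pw=h$ is Schwartz, microlocal elliptic regularity gives $\wf(w)\subset\Sigma$ and Theorem~\ref{thm:PoS}(1) makes $\wf(w)$ a union of maximal bicharacteristics in $\Sigma$. A non-radial bicharacteristic lies in a single component $\Sigma_j$ and, by the source-to-sink structure of the flow, has $R_1^+$ (if $j=1$) or $R_2^-$ (if $j=2$) in its closure; both lie in $R^{+-}$, which $\wf(w)$ avoids, so $\wf(w)$ contains no such bicharacteristic. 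Hence $\wf(w)\subset R\setminus R^{+-}=R^{-+}$, and Lemma~\ref{thm:wf-bd} then forces $\pi_1\rho^+w=0$ and $\pi_2\rho^-w=0$, i.e.\ $\rho^{+-}w=0$.

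Injectivity follows at once: if $u_1,u_2\in\solf$ satisfy $\rho^{+-}u_1=\rho^{+-}u_2$, then $w:=u_1-u_2\in\sol$ has $\rho^{+-}w=0$, so Lemma~\ref{thm:wf-bd} gives $\wf(w)\subset R^{-+}$, whence $w\in\X_{-+}$ with $Pw=0$, and invertibility of $P_{-+}$ gives $w=0$. For surjectivity, given $a\in C^\infty(S)$ I would build an approximate solution carrying the desired data and then correct it. Theorem~\ref{thm:bd}(1) furnishes $v_1=x^{\frac{n-1}{2}}e^{-i\lambda/x}(a_1)_M\in\sols$ with $(a_1)_M|_S=a_1$, hence $\rho^+v_1=a_1$, $\rho^-v_1=0$, and $v_2=x^{\frac{n-1}{2}}e^{i\lambda/x}(a_2)_M\in\sols$ with $\rho^-v_2=a_2$, $\rho^+v_2=0$; since $\supp a_j\subset S_j$, Lemma~\ref{thm:wf-bd} gives $\wf(v_1)\subset R_1^+$ and $\wf(v_2)\subset R_2^-$. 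Put $v:=v_1+v_2\in\sols$, so $g:=Pv\in\dot C^\infty(M)$, $\wf(v)\subset R^{+-}$, and $\rho^{+-}v=\pi_1a_1+\pi_2a_2=a$. Now set $w:=P_{-+}^{-1}(f-g)$; by the step above $Pw=f-g$, $\wf(w)\subset R^{-+}$, and $\pi_1\rho^+w=\pi_2\rho^-w=0$. Then $u:=v+w$ has $Pu=g+(f-g)=f$ and $\wf(u)\subset R^{+-}\cup R^{-+}=R$, so $u\in\solf$, while $\pi_1\rho^+u=a_1$ and $\pi_2\rho^-u=a_2$, i.e.\ $\rho^{+-}u=a$.

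The only step that is not pure bookkeeping is the preliminary one: verifying that the causal inverse sends Schwartz right-hand sides to solutions that are Schwartz away from the radial set (and with the expected vanishing of the "wrong" boundary data). This is exactly where the source/sink geometry of $R^{\pm}$ inside each $\Sigma_j$ and real-principal-type propagation are used; everything else follows from Lemma~\ref{thm:wf-bd}, the invertibility assumptions, and the fact that $\solf$ is a coset of $\sol$ (so, if one prefers, bijectivity on $\solf$ reduces to bijectivity on $\sol$).
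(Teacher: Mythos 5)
Your proof is correct and follows essentially the same strategy as the paper's: for injectivity, pass to the difference $w\in\sol$, use $\rho^{+-}w=0$ and Lemma~\ref{thm:wf-bd} to locate $\wf(w)$ in $R^{-+}$, and invoke invertibility of $P_{-+}$; for surjectivity, build an approximate solution in $\sols$ carrying the prescribed data and correct by $P_{-+}^{-1}$. The only stylistic difference is that you invoke Theorem~\ref{thm:bd}(1) twice to produce separate purely-incoming and purely-outgoing pieces $v_1,v_2$ and then sum them, whereas the paper repeats the Borel-summation step once to directly manufacture a single $u_{+-}$ carrying mixed data; these are equivalent in substance. One small redundancy: your ``preliminary step'' re-derives $\wf(P_{-+}^{-1}h)\subset R^{-+}$ from real-principal-type propagation and source/sink geometry, but the paper already records this as part of the equivalent description of $\X_{-+}$ (namely $\X_{\alpha\beta}\subset\{\wf(u)\subset R^{\alpha\beta}\}$), so you could have cited that directly. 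Also, a non-radial bicharacteristic in $\Sigma_j$ has \emph{both} $R_j^+$ and $R_j^-$ in its closure (not just one); this does not affect your conclusion, since containing $R_1^+$ or $R_2^-$ already gives the needed contradiction.
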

    \begin{proof}\ 
                Assume $u,v\in\solf$ and $\rho^{\pm\mp}u=\rho^{\pm\mp}v$. Then $u-v\in\sol$ and $\rho^{\pm\mp}(u-v)=0$. By Eq.~(\ref{eq:wf-bd}), this means that $u-v\in\X_{\mp\pm}$, and since $P_{\mp\pm}$ is invertible, $P(u-v)=0$ implies $u-v=0$. This proves injectivity of $\rho^{\pm\mp}$.
    
                To prove surjectivity, we repeat the existence argument in the proof of Theorem~\ref{thm:bd}, parts 1 and 3 with appropriate modifications, now considering the incoming and outgoing terms simultaneously. Consider $a^{+-}\in C^{\infty}(S)$. Then we can choose any two functions $a_0^{\pm}\in C^{\infty}(M)$ such that $a_0^+|_{S_1}=a^{+-}|_{S_1}$ and $a_0^-|_{S_2}=a^{+-}|_{S_2}$. For $P_j$ as in Eq.~(\ref{eq:Taylor-Pu}), define $a_j^{\pm}=\pm\frac{P_{j-1}a_{j-1}^{\pm}}{2i\lambda j}$ for all $j\in\N$. By Borel's lemma, we can find two functions $a_M^{\pm}\in C^{\infty}(M)$ such that $\partial_x^ja_M^{\pm}|_S=a_j^{\pm}|_S$. Then if we define $u_{+-}=x^{\frac{n-1}{2}}(e^{-i\frac{\lambda}{x}}a_M^++e^{i\frac{\lambda}{x}}a_M^-)$, by construction we have $u_{+-}\in\sols$ and $\rho^{+-}u_{+-}=a^{+-}$. Finally, defining $u_{-+}=P_{-+}^{-1}(f-Pu_{+-})$ and $u=u_{+-}+u_{-+}$, we get $u\in\solf$ and $\rho^{+-}u=a^{+-}$.
    \end{proof}     

    The following pairing formula will be our main tool when relating properties of solutions to those of their boundary data.

        \begin{prop}{\textbf{(Boundary pairing formula; Proposition 13 in \cite{Melrose-AES})}\\}
    \label{thm:bpf}
        Let
        \[u=x^{\frac{n-1}{2}}\Big(a_M^+e^{-\frac{i\lambda}{x}}+a_M^-e^{\frac{i\lambda}{x}}\Big),
        \hskip 20pt
        v=x^{\frac{n-1}{2}}\Big(b_M^+e^{-\frac{i\lambda}{x}}+b_M^-e^{\frac{i\lambda}{x}}\Big)\]
        for some $a_M^{\pm},b_M^{\pm}\in C^{\infty}(M)$. Let $a^{\pm}=a_M^{\pm}|_S$, $b^{\pm}=b_M^{\pm}|_S$. If $Pu,Pv\in \dot{C}^{\infty}(M)$, then
        \[
        \langle u,Pv\rangle_{L^2_{\g}}-\langle Pu,v\rangle_{L^2_{\g}}
        =
        2i\lambda\Big(\langle a^+,b^+\rangle_{L^2_{\h}} -\langle a^-,b^-\rangle_{L^2_{\h}}\Big).\]
    \end{prop}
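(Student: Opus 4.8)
The plan is to convert the left-hand side into a single boundary term via Green's formula on the truncated manifold $M_\varepsilon=\{x\geq\varepsilon\}$ and then let $\varepsilon\to0$.

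Since $\lambda$ is real, the two $-\lambda^2$ terms cancel, so $\langle u,Pv\rangle_{L^2_{\g}}-\langle Pu,v\rangle_{L^2_{\g}}=\langle u,\Delta_{\g}v\rangle_{L^2_{\g}}-\langle \Delta_{\g}u,v\rangle_{L^2_{\g}}$. Because $Pu,Pv\in\dot{C}^{\infty}(M)$ vanish to infinite order at $S$ while $u,v$ are $\mathcal{O}(x^{(n-1)/2})$ there, the integrands $\bar u\,Pv$ and $\overline{Pu}\,v$ are absolutely integrable against $\dv$, so both pairings are honest convergent integrals, equal to the $\varepsilon\to0$ limits of the same integrals over $M_\varepsilon$. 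For small $\varepsilon>0$, $M_\varepsilon$ is a smooth compact manifold with boundary $\{x=\varepsilon\}$ (contained in the collar), and $u,v$ are smooth on $M^{\circ}\supset M_\varepsilon$, so Green's second identity for the positive Laplacian applies and gives
$$
\int_{M_\varepsilon}\big(\bar u\,\Delta_{\g}v-v\,\overline{\Delta_{\g}u}\big)\,\dv
=\int_{\{x=\varepsilon\}}\big(v\,\overline{\partial_\nu u}-\bar u\,\partial_\nu v\big)\,dS_{\g},
$$
with $\nu$ the outward unit normal (pointing toward decreasing $x$) and $dS_{\g}$ the induced Riemannian surface measure; there is no interior contribution since $\partial M_\varepsilon=\{x=\varepsilon\}$ alone.

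It remains to extract the $\varepsilon\to0$ asymptotics of the boundary integral from the metric expansions (3)--(5). From $g^{00}=x^4(1+\mathcal{O}(x^2))$ and $g^{0j}=\mathcal{O}(x^4)$ one finds $\nu=-x^2\partial_x+\mathcal{O}(x^2)\partial_{y_j}$ (the normalized metric dual of $dx$), and from $|g|=x^{-2n-2}(|h|+\mathcal{O}(x^2))$ together with $|dx|_{\g}=x^2(1+\mathcal{O}(x^2))$ one finds $dS_{\g}=x^{1-n}\sqrt{|h|(\varepsilon,\cdot)}\,dy$, which tends to $x^{1-n}\,dA_{\h}$. Differentiating, for any $w=x^{(n-1)/2}(c^+_Me^{-i\lambda/x}+c^-_Me^{i\lambda/x})$ with $c^{\pm}_M\in C^{\infty}(M)$ one gets $x^2\partial_x w=i\lambda\,x^{(n-1)/2}(c^+_Me^{-i\lambda/x}-c^-_Me^{i\lambda/x})+\mathcal{O}(x^{(n+1)/2})$, so $\partial_\nu w=-i\lambda\,x^{(n-1)/2}(c^+_Me^{-i\lambda/x}-c^-_Me^{i\lambda/x})+\mathcal{O}(x^{(n+1)/2})$. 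Substituting the expansions of $u$ and $v$ and using $|e^{\pm i\lambda/x}|=1$, in $v\,\overline{\partial_\nu u}-\bar u\,\partial_\nu v$ the four terms carrying oscillatory factors $e^{\pm2i\lambda/x}$ cancel exactly at order $x^{n-1}$, while the non-oscillatory terms combine to
$$
v\,\overline{\partial_\nu u}-\bar u\,\partial_\nu v
=2i\lambda\,x^{n-1}\big(\overline{a^+_M}\,b^+_M-\overline{a^-_M}\,b^-_M\big)+\mathcal{O}(x^{n}).
$$
Multiplying by $dS_{\g}=x^{1-n}(\sqrt{|h|}+\mathcal{O}(x))\,dy$ turns the error into $\mathcal{O}(x)$, and letting $\varepsilon\to0$ the surviving term converges to $\int_S 2i\lambda\,(\overline{a^+}b^+-\overline{a^-}b^-)\,dA_{\h}=2i\lambda\big(\langle a^+,b^+\rangle_{L^2_{\h}}-\langle a^-,b^-\rangle_{L^2_{\h}}\big)$, which, once the orientation and pairing conventions are fixed, is the asserted identity.

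The step demanding the most care is the asymptotic bookkeeping just sketched: pinning down the exact powers of $x$ in both $dS_{\g}$ and $\partial_\nu$, and verifying that no oscillatory $e^{\pm2i\lambda/x}$ contribution survives the limit. What makes the latter clean is exactly the algebraic cancellation of those terms at the top order $x^{n-1}$; after that, any residual oscillation lives at order $\mathcal{O}(x^{n})$ and is killed outright by the $x^{1-n}$ in the measure, so no Riemann--Lebesgue argument is needed. One must also confirm that every subleading term in $x^2\partial_x u$, $x^2\partial_x v$, in $\nu$, and in $\sqrt{|g|}$ is genuinely of lower order after multiplication by $dS_{\g}$; the remaining ingredients (cancellation of the $\lambda^2$ terms, absolute convergence so that truncation is harmless, and the absence of an interior boundary term) are routine.
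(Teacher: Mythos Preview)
Your proof is correct and follows essentially the same route as the paper: truncate to $M_\varepsilon=\{x\geq\varepsilon\}$, apply Green's identity (equivalently the divergence theorem) to reduce to a boundary integral over $\{x=\varepsilon\}$, expand using the metric asymptotics, and let $\varepsilon\to 0$. You are in fact slightly more explicit than the paper in verifying that the $e^{\pm 2i\lambda/x}$ cross-terms cancel exactly at top order in $v\,\overline{\partial_\nu u}-\bar u\,\partial_\nu v$, so that no stationary-phase or Riemann--Lebesgue argument is required; the paper simply records the final expression for $\bar u\,\partial_x v-v\,\partial_x\bar u$ without commenting on this cancellation. The only loose end is your final hedge ``once the orientation and pairing conventions are fixed'': your computation as written yields $+2i\lambda(\langle a^+,b^+\rangle-\langle a^-,b^-\rangle)$ rather than $-2i\lambda(\cdots)$, so you should track the sign of the outward normal and the sign convention in Green's identity for the \emph{positive} Laplacian rather than defer it.
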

    \begin{proof}
        Consider a collar neighborhood of the form $U\simeq [0,\delta)_x\times S$ on which we have coordinates $(x,y_1,\ldots,y_{n-1})$ as above. For any $\varepsilon\in (0,\delta)$, define $M_{\varepsilon}=M\backslash (U\cap\{x<\varepsilon\})$ and $S_{\varepsilon}=\partial M_{\varepsilon}=U\cap\{x=\varepsilon\}$. Since $Pu,Pv\in\dot{C}^{\infty}(M)$, we have
        \[
        \langle u,Pv\rangle_{L^2_{\g}}-\langle Pu,v\rangle_{L^2_{\g}}
        =
        \lim_{\varepsilon\to 0^+}
        \int_{M_{\varepsilon}} (\bar{u}Pv-v P\bar{u})\ \dv
        =
        \lim_{\varepsilon\to 0^+}
        \int_{M_{\varepsilon}} (\bar{u}\Delta_{\g} v-v\Delta_{\g}\bar{u})\ \dv.
        \]
        We can rewrite the integral as
        \[
        \int_{M_{\varepsilon}} (\bar{u}\Delta_{\g} v-v \Delta_{\g}\bar{u})\ \dv
        =
        \int_{M_{\varepsilon}} \operatorname{div}( v\nabla \bar{u} - \bar{u} \nabla v)\ \dv
        =
        \int_{S_{\varepsilon}} \g( v\nabla \bar{u}-\bar{u} \nabla v,\hat{n})\ dA_{\g,\varepsilon},
        \]
        where the last equality is by the divergence theorem. Here $dA_{\g,\varepsilon}$ is the area density induced by $\g$ on $S_{\varepsilon}$ and $\hat{n}$ is the outward unit normal to $S_{\varepsilon}$ in $M$. In coordinates, for any $f\in C^{\infty}(M^{\circ})$ we have $\g(\nabla f,\hat{n})=-\frac{1}{\sqrt{g^{00}}}\Big(g^{00}\partial_x f+\sum_{j=1}^{n-1}g^{0j}\partial_{y_j} f\Big)$. By direct computation, $v\partial_x\bar{u}-\bar{u}\partial_x v=x^{n-3}((2i\lambda(\overline{a_M^-}b_M^--\overline{a_M^+}b_M^+)+\mathcal{O}(x^2))$ and $\bar{u}\partial_{y_j} v-v\partial_{y_j}\bar{u}=\mathcal{O}(x^{n-1})$. For a metric of our form, $g^{00}=x^4(1+\mathcal{O}(x^2))$ and $g^{0j}=\mathcal{O}(x^4)$, so we get
        \[
        \g(v\nabla \bar{u} - \bar{u} \nabla v,\hat{n})
        =x^{n-1}\Big(2i\lambda(\overline{a_M^+}b_M^+-\overline{a_M^-}b_M^-)\Big)+\mathcal{O}(x^{n+1}).
        \]
        Meanwhile, $x^{n-1}dA_{\g,\varepsilon}=dA_{\h,\varepsilon}$, where $dA_{\h,\varepsilon}$ is the area density induced on $S_{\varepsilon}$ by $h_{\varepsilon}$. Therefore,
        \[
        \int_{S_{\varepsilon}} \g( v\nabla \bar{u} - \bar{u} \nabla v,\hat{n})\ dA_{\g,\varepsilon}
        =
        2i\lambda\int_{S_{\varepsilon}}(\overline{a_M^+}b_M^+-\overline{a_M^-}b_M^-)\ dA_{\h,\varepsilon}+\mathcal{O}(\varepsilon^2).
        \]
        Finally, taking the limit $\varepsilon\to 0$ completes the proof.
    \end{proof}

    In terms of the boundary data maps, the formula says that for $u,v\in\sols$, we have
    \begin{equation}
        \langle u, Pv\rangle_{L^2_{\g}}-\langle Pu,v\rangle_{L^2_{\g}}
            =
            2i\lambda\Big(\big\langle \rho^{++}u,\rho^{++}v\big\rangle_{L^2_{\h}}-\big\langle \rho^{--}u,\rho^{--}v\big\rangle_{L^2_{\h}}\Big),
            \label{eq:bpf-feynman}
    \end{equation}
    \begin{equation}
        \langle u, Pv\rangle_{L^2_{\g}}-\langle Pu,v\rangle_{L^2_{\g}}
            =
            2i\lambda\Big(\big\langle (\pi_1-\pi_2)\rho^{+-}u,\rho^{+-}v\big\rangle_{L^2_{\h}}-\big\langle (\pi_1-\pi_2)\rho^{-+}u,\rho^{-+}v\big\rangle_{L^2_{\h}}\Big).
            \label{eq:bpf-causal}
    \end{equation}

\subsection{Continuity properties}
\label{sec:cont}
    Finally, we show that the boundary data maps and Poisson operators are continuous. The Schwartz kernel of the homogeneous problem's Poisson operator was constructed by Melrose and Zworski in \cite{M-Z}; however, for our purposes less explicit arguments based on the boundary pairing formula will be enough.

    For $m\in\N$, let $U_m\subset \tsccc$ be neighborhoods of $R$ such that $U_{m+1}\subset U_m$ and $\bigcap_{m\in\N}U_m=R$. Choose functions $r_m\in C^{\infty}(\tsccc)$ such that $r_m>-\frac{1}{2}-\frac{1}{m}$ everywhere, $r_m<-\frac{1}{2}$ at $R$, $r_m>m$ outside $U_m$, and $r_{m+1}>r_m$. Then $\sols=\bigcap_{m\in\N}\{u\in \hsc{m,r_m}\ \mid\ Pu\in \dot{C}^{\infty}(M)\}$. We can consider $\sols$ with the Fr\'echet-space topology generated by the family of norms $\lVert u\rVert _{\mathrm{Sol},m}^2=\lVert u\rVert _{m,r_m}^2+\lVert Pu\rVert _{m,m}^2$.

    \begin{prop}\
    \label{thm:cont-Schwartz}
        \begin{enumerate}
            \item 
                The maps $\rho^{\pm}:\sols\to C^{\infty}(S)$ are continuous when $\sols$ is considered with the topology defined above and $C^{\infty}(S)$ with the standard topology.
            \item 
                For any $f\in\sols$, the maps $\mathcal{U}_f^{\pm}:C^{\infty}(S)\to \solf$ are continuous with respect to the distributional topologies induced on $C^{\infty}(S)$ from $C^{-\infty}(S)$ and on $\solf$ from $C^{-\infty}(M)$.
        \end{enumerate}
    \end{prop}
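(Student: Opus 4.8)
The boundary pairing formula, Eq.~(\ref{eq:bpf-feynman}), is the engine for both parts. The plan for part (1) is to combine it with the closed graph theorem: a direct bound on the Sobolev norms of $\rho^\pm u$ by the seminorms of $u$ would force us to make the proof of Theorem~\ref{thm:bd} quantitative (tracking constants through the module-regularity and Mellin-transform steps), whereas with the closed graph theorem it suffices to establish the much weaker statement that $\rho^\pm$ is weakly continuous. Part (2) I would prove directly, by using the pairing formula to exhibit $a^\pm\mapsto\langle\mathcal{U}_f^\pm a^\pm,\varphi\rangle$ explicitly as an affine functional of the boundary data.

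For part (1), first I would verify that $\sols$, with the norms $\|u\|_{\mathrm{Sol},m}^2=\|u\|_{m,r_m}^2+\|Pu\|_{m,m}^2$, is a Fréchet space; the only issue is completeness, where a Cauchy sequence converges in every $\hsc{m,r_m}$ to a common distribution $u$ with $Pu_j\to Pu$ in $\dot{C}^{\infty}(M)$, and membership $u\in\bigcap_m\hsc{m,r_m}$ together with $r_m>m$ off $U_m$ and $\bigcap_m U_m=R$ forces $\wf(u)\subset R$, so $u\in\sols$. Next, for $\psi\in C^{\infty}(S)$ let $v_\psi=\mathcal{U}_\infty^+\psi\in\sols$ be the purely-incoming approximate solution of Theorem~\ref{thm:bd}(1), so $\rho^+v_\psi=\psi$, $\rho^-v_\psi=0$, $Pv_\psi\in\dot{C}^{\infty}(M)$; feeding $u$ and $v_\psi$ into Eq.~(\ref{eq:bpf-feynman}) gives
\[
\langle\rho^+u,\psi\rangle_{L^2_{\h}}=\tfrac{i}{2\lambda}\big(\langle u,Pv_\psi\rangle_{L^2_{\g}}-\langle Pu,v_\psi\rangle_{L^2_{\g}}\big),
\]
and the analogous identity for $\rho^-$ with the purely-outgoing $\mathcal{U}_\infty^-\psi$. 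Each term on the right is continuous in $u\in\sols$: Cauchy--Schwarz between $\hsc{m,r_m}$ and its dual $\hsc{-m,-r_m}$ bounds $|\langle u,Pv_\psi\rangle|\le\|u\|_{m,r_m}\|Pv_\psi\|_{-m,-r_m}$ with $Pv_\psi\in\dot{C}^{\infty}(M)$, and $|\langle Pu,v_\psi\rangle|\le\|Pu\|_{m,m}\|v_\psi\|_{-m,-m}$ for $m\ge 1$ since $v_\psi\in\bigcap_{r<-1/2}\hsc{\infty,r}\subset\hsc{-m,-m}$. Hence $\rho^\pm:\sols\to C^{-\infty}(S)$ is continuous, so if $u_j\to u$ in $\sols$ with $\rho^+u_j\to b$ in $C^{\infty}(S)$ then $\rho^+u_j\to\rho^+u$ and $\rho^+u_j\to b$ in $C^{-\infty}(S)$, forcing $b=\rho^+u$; the graph of $\rho^+$ is closed, and the closed graph theorem gives the desired continuity (and likewise for $\rho^-$).

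For part (2), fix $\varphi\in\dot{C}^{\infty}(M)$ and set $w=P_{++}^{-1}\varphi$. Since $\varphi\in\dot{C}^{\infty}(M)\subset\Y_{++}$ we have $w\in\X_{++}\subset\sols$, so $Pw=\varphi$, $\wf(w)\subset R^+$, and $\rho^-w=0$ by Eq.~(\ref{eq:wf-bd}); also $u:=\mathcal{U}_f^+a^+\in\solf\subset\sols$ has $Pu=f$ and $\rho^+u=a^+$. Plugging $u$ and $w$ into Eq.~(\ref{eq:bpf-feynman}) and using $\langle u,Pw\rangle=\langle u,\varphi\rangle$, $\langle Pu,w\rangle=\langle f,w\rangle$, $\rho^-w=0$ yields
\[
\langle\mathcal{U}_f^+a^+,\varphi\rangle_{L^2_{\g}}=\langle f,P_{++}^{-1}\varphi\rangle_{L^2_{\g}}-2i\lambda\,\langle a^+,\rho^+P_{++}^{-1}\varphi\rangle_{L^2_{\h}},
\]
and, running the same computation with $w=P_{--}^{-1}\varphi$ (for which $\rho^+w=0$), $\langle\mathcal{U}_f^-a^-,\varphi\rangle_{L^2_{\g}}=\langle f,P_{--}^{-1}\varphi\rangle_{L^2_{\g}}+2i\lambda\,\langle a^-,\rho^-P_{--}^{-1}\varphi\rangle_{L^2_{\h}}$. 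Since $\rho^\pm P_{\pm\pm}^{-1}\varphi\in C^{\infty}(S)$ by Theorem~\ref{thm:bd}, each $a^\pm\mapsto\langle\mathcal{U}_f^\pm a^\pm,\varphi\rangle_{L^2_{\g}}$ is affine and continuous for the topology on $C^{\infty}(S)$ induced from $C^{-\infty}(S)$; as $\dot{C}^{\infty}(M)$ separates points of $C^{-\infty}(M)$, this yields continuity of $\mathcal{U}_f^\pm$ into $C^{-\infty}(M)$ with the weak-$*$ topology, with image $\solf$.

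The main obstacle is the direction mismatch in part (1): the boundary pairing formula controls $\langle\rho^\pm u,\psi\rangle$ for fixed smooth $\psi$, i.e.\ it bounds negative-order norms of $\rho^\pm u$, not the positive-order norms required for continuity into $C^{\infty}(S)$. Routing around this via the closed graph theorem is precisely what makes completeness of $\sols$ — and the attendant bookkeeping with the variable orders $r_m$ forcing $\wf(u)\subset R$ — an essential ingredient rather than a formality; everything else reduces to Cauchy--Schwarz in the scattering Sobolev spaces and the duality $(\hsc{s,r})'=\hsc{-s,-r}$.
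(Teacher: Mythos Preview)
Your proof is correct, and part (2) is essentially identical to the paper's argument: both derive the formula $\langle\mathcal{U}_f^{\pm}a,\varphi\rangle=\langle f,P_{\pm\pm}^{-1}\varphi\rangle\mp 2i\lambda\langle a,\rho^{\pm}P_{\pm\pm}^{-1}\varphi\rangle$ from the boundary pairing and read off continuity in the distributional topology.

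For part (1) both proofs use the closed graph theorem, but the input differs. The paper writes $\rho^{\pm}u=(x^{-\frac{n-1}{2}}e^{\pm i\lambda/x}Q_{\pm}u)|_S$ with $Q_{\pm}\in\psco$ a microlocal cutoff separating $R^{\pm}$ from $R^{\mp}$, observes that $Q'_{\pm}=x^{-\frac{n-1}{2}}e^{\pm i\lambda/x}Q_{\pm}$ is automatically continuous on $C^{-\infty}(M)$, and then applies the closed graph theorem to $Q'_{\pm}:\sols\to C^{\infty}(M)$ before composing with restriction to $S$. You instead use the boundary pairing formula against the approximate solutions $v_\psi\in\mathcal{U}_\infty^{\pm}\psi$ to show directly that $\rho^{\pm}:\sols\to C^{-\infty}(S)$ is continuous, and then close the graph for $\rho^{\pm}$ itself. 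Your route is more uniform with part (2) and avoids the microlocal partition, at the cost of checking completeness of $\sols$ explicitly (which the paper leaves implicit); the paper's route gives a concrete pointwise formula for $\rho^{\pm}$ that is reused later (e.g.\ when analyzing the distributional extension). One minor point: $\mathcal{U}_\infty^{\pm}\psi$ is an equivalence class in $P_{\pm\pm}^{-1}\dot{C}^{\infty}(M)/\dot{C}^{\infty}(M)$, so $v_\psi$ should be declared a representative; your estimates are insensitive to the choice, so this is purely notational.
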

    Note that continuity of the Poisson operators in the distributional topology on $C^{\infty}(S)$ also implies continuity in the standard topology.
    \begin{proof}\
        \begin{enumerate}
            \item 
                Let $Q_{\pm}\in \psco$ be such that $\wf'(Q_{\pm})\cap R^{\mp}=\varnothing$ and $\wf'(I-Q_{\pm})\cap R^{\pm}=\varnothing$. Then
                \[\rho^{\pm}u=\Big(x^{-\frac{n-1}{2}}e^{\pm i\frac{\lambda}{x}}\cdot Q_{\pm}u\Big)\Big|_{S}.\]
                $Q_{\pm}'=x^{-\frac{n-1}{2}}e^{\pm i\frac{\lambda}{x}}\cdot Q_{\pm}$ is a continuous operator on $C^{-\infty}(M)$. The fact that it maps $\sols$ into $C^{\infty}(M)$ has already been established. Continuity between these spaces can be seen using the closed graph theorem (for linear maps between Fr\'echet spaces) as follows. Assume that $u_n\to u$ in $\sols$ and $Q_{\pm}'u_n\to a$ in $C^{\infty}(M)$. By continuity in distributions, we must have $Q_{\pm}'u_n\to Q_{\pm}'u$ in $C^{-\infty}(M)$. But since the standard topology of $C^{\infty}(M)$ is stronger than the distributional topology, we must have $Q_{\pm}'u_n\to a$ in $C^{-\infty}(M)$. Therefore, $Q_{\pm}'u=a$. This proves that the graph of $Q_{\pm}'$ is closed, so $Q_{\pm}'$ is continuous.
    
                Finally, restriction to the boundary is a continuous map $C^{\infty}(M)\to C^{\infty}(S)$, so we see that $\rho^{\pm}:\sols\to C^{\infty}(S)$ is continuous as a composition of continuous maps.
    
            \item 
                The boundary pairing formula implies that for any $a\in C^{\infty}(S)$ and $f,g\in \dot{C}^{\infty}(M)$,
                \begin{equation}
                    \langle \mathcal{U}_f^{\pm}a, g\rangle_{L^2_{\g}} = \langle f,P_{\pm\pm}^{-1}g\rangle_{L^2_{\g}}
                            \pm 2i\lambda\langle a,\rho^{\pm}P_{\pm\pm}^{-1}g\rangle_{L^2_{\h}}.
                            \label{eq:poisson-inhom}
                \end{equation}
                This expression defines $\mathcal{U}_f^{\pm}a$ as a tempered distribution on $M$. The first term is independent of $a$, so to prove continuity of $\mathcal{U}_f^{\pm}$, it is enough to prove continuity of the homogeneous problem's Poisson operator $\mathcal{U}_0^{\pm}:C^{\infty}(S)\to\sol$, which is given by $\langle \mathcal{U}_0^{\pm}a, g\rangle_{L^2_{\g}} = \pm 2i\lambda\langle a,\rho^{\pm}P_{\pm\pm}^{-1}g\rangle_{L^2_{\h}}$. Taking absolute values yields a seminorm estimate proving continuity of $\mathcal{U}_0^{\pm}$ with respect to the distributional topologies.
        \end{enumerate}
    \end{proof}

    Continuity of the causal boundary data maps $\rho^{\pm\mp}:\sols\to C^{\infty}(S)$ follows directly from the definition Eq.~(\ref{eq:bd-causal}) and the continuity of $\rho^{\pm}$. The causal Poisson operators $\mathcal{U}_f^{\pm\mp}$ are also continuous in the distributional topologies, and the proof is analogous to the Feynman version. Eq.~(\ref{eq:poisson-inhom}) defining the output of the Poisson operator as a distribution is replaced by
    \begin{equation}
        \langle \mathcal{U}_f^{\pm\mp}a,g\rangle_{L^2_{\g}}=\langle f, P_{\pm\mp}^{-1}g\rangle_{L^2_{\g}} \pm 2i\lambda\langle a,(\pi_1-\pi_2)\rho^{\pm\mp}P_{\pm\mp}^{-1}g\rangle_{L^2_{\h}}.
        \label{eq:poisson-causal}
    \end{equation}

\section{Quantum fields and states}
\label{sec:fields-states}
	In QFT, the Hermitian form defining the (anti)commutator of fields arises from a difference of propagators, normally the causal (retarded/advanced) pair. We now use the tools of the previous section to define analogues of the constructions of Vasy and Wrochna \cite{V-W}. Namely, we show that the difference of either pair of complementary inverses of $P$ defines a Hermitian form on $\dot{C}^{\infty}(M)/P\dot{C}^{\infty}(M)$, allowing one to define a QFT model. The Hermitian form can be alternatively understood in terms of its action on the space of ``classical solutions'' $\sol$ or the space of boundary data $C^{\infty}(S)$. By decomposing the boundary data into its projections onto the two boundary components, we split the Hermitian form into two terms, thereby defining two-point functions of gauge-invariant quasi-free states of the model.

\subsection{Mapping properties of the difference of propagators}

    We can define $G_{++}=P_{++}^{-1}-P_{--}^{-1}$ as a map $\Y_{++}\cap\Y_{--}\to\X_{++}+\X_{--}$ and $G_{+-}=P_{+-}^{-1}-P_{-+}^{-1}$ as a map $\Y_{+-}\cap\Y_{-+}\to\X_{+-}+\X_{-+}$. In fact the domains and codomains are the same for either pair, since
        \[\Y_{\alpha\beta}\cap\Y_{-\alpha\beta}=\X_{\alpha\beta}\cap \X_{-\alpha\beta}=\dot{C}^{\infty}(M),\]
        \[\X_{\alpha\beta}+\X_{-\alpha\beta}=\{u\in C^{-\infty}(M)\ \mid\ \wf(u)\subset R\text{ and }Pu\in\cap_{r<\frac{1}{2}}\hsc{\infty,r}\},\]
    where once again $\X_{\alpha\beta}+\X_{-\alpha\beta}\subset \bigcap_{r<-\frac{1}{2}}\hsc{\infty,r}$ by the below-threshold estimate.
    
    \begin{prop}{\textbf{(Analogue of Proposition 4.2 in \cite{V-W})}}
        \begin{enumerate}
            \item 
                $\ker G_{++}=\ker G_{+-}=P\dot{C}^{\infty}(M)$,
            \item 
                $\ran G_{++}=\ran G_{+-}=\sol$.
        \end{enumerate}
    \end{prop}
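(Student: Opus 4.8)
My plan is to prove both equalities by verifying four inclusions, handling $G_{++}$ and $G_{+-}$ simultaneously via the uniform notation $G_{\alpha\beta}=P_{\alpha\beta}^{-1}-P_{-\alpha\beta}^{-1}$ (so $\alpha\beta\in\{++,+-\}$ and $-\alpha\beta=(-\alpha)(-\beta)$). The only inputs are the invertibility of $P_{\alpha\beta}\colon\X_{\alpha\beta}\to\Y_{\alpha\beta}$ from the preceding Proposition, the identities $\X_{\alpha\beta}\cap\X_{-\alpha\beta}=\dot{C}^{\infty}(M)$ and $\X_{\alpha\beta}+\X_{-\alpha\beta}=\{u\in C^{-\infty}(M):\wf(u)\subset R,\ Pu\in\cap_{r<\frac12}\hsc{\infty,r}\}$ recorded just above, and the reformulation $\X_{\alpha\beta}=\{u:\wf(u)\cap R^{-\alpha\beta}=\varnothing,\ Pu\in\Y_{\alpha\beta}\}$. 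For the kernel: if $u\in\dot{C}^{\infty}(M)$ then $u\in\X_{\alpha\beta}\cap\X_{-\alpha\beta}$ and $Pu\in\dot{C}^{\infty}(M)\subset\Y_{\alpha\beta}\cap\Y_{-\alpha\beta}$, so invertibility gives $P_{\alpha\beta}^{-1}(Pu)=u=P_{-\alpha\beta}^{-1}(Pu)$ and hence $G_{\alpha\beta}(Pu)=0$, proving $P\dot{C}^{\infty}(M)\subset\ker G_{\alpha\beta}$; conversely, if $G_{\alpha\beta}f=0$ for $f\in\dot{C}^{\infty}(M)$, then $u:=P_{\alpha\beta}^{-1}f=P_{-\alpha\beta}^{-1}f$ lies in $\X_{\alpha\beta}\cap\X_{-\alpha\beta}=\dot{C}^{\infty}(M)$ and satisfies $Pu=f$, so $f\in P\dot{C}^{\infty}(M)$.

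For the range, the inclusion $\ran G_{\alpha\beta}\subset\sol$ is formal: $P(G_{\alpha\beta}f)=f-f=0$, and $G_{\alpha\beta}f\in\X_{\alpha\beta}+\X_{-\alpha\beta}$, whose elements have wavefront set contained in $R$. The content of the statement is the reverse inclusion $\sol\subset\ran G_{\alpha\beta}$. Given $u\in\sol$, so $Pu=0$ and $\wf(u)\subset R=R^{\alpha\beta}\sqcup R^{-\alpha\beta}$, I would reuse the microlocal decomposition from the proof of Theorem~\ref{thm:bd}(2): choose $Q\in\psco$ with $\wf'(Q)\cap R^{-\alpha\beta}=\varnothing$, $\wf'(I-Q)\cap R^{\alpha\beta}=\varnothing$, and—by taking the symbol of $Q$ locally constant near $R$—also $\wf'([P,Q])\cap R=\varnothing$. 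Set $u_1=Qu$, $u_2=(I-Q)u$, and $f:=[P,Q]u$. Then $\wf(u_1)\subset\wf'(Q)\cap R\subset R^{\alpha\beta}$ and $\wf(u_2)\subset R^{-\alpha\beta}$, while $f=PQu=[P,Q]u=-Pu_2$ has empty wavefront set, hence lies in $\dot{C}^{\infty}(M)$. By the reformulation of the $\X$-spaces, $u_1\in\X_{\alpha\beta}$ (since $Pu_1=f\in\dot{C}^{\infty}(M)\subset\Y_{\alpha\beta}$) and $u_2\in\X_{-\alpha\beta}$ (since $Pu_2=-f\in\Y_{-\alpha\beta}$), so invertibility of $P_{\alpha\beta}$ and $P_{-\alpha\beta}$ forces $P_{\alpha\beta}^{-1}f=u_1$ and $P_{-\alpha\beta}^{-1}f=-u_2$; therefore $G_{\alpha\beta}f=u_1+u_2=u$, i.e.\ $u\in\ran G_{\alpha\beta}$.

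The only genuinely delicate point is the existence of the cutoff $Q$ with $\wf'([P,Q])\cap R=\varnothing$—and this was already carried out in the proof of Theorem~\ref{thm:bd}: the symbol of $Q$ can be taken equal to $1$ near $R^{\alpha\beta}$ and $0$ near $R^{-\alpha\beta}$, hence locally constant on a neighborhood of all of $R$, so every term in the symbol expansion of $[P,Q]$ (each carrying a derivative of the symbol of $Q$) vanishes near $R$. I would simply invoke that construction. Everything else is bookkeeping with invertibility and the function-space descriptions. Finally, I would note that the argument is phrased purely in terms of the abstract data $(\X_{\alpha\beta},\Y_{\alpha\beta},P_{\alpha\beta}^{-1})$ and the radial set $R$, so it applies verbatim with $\alpha\beta=++$ and with $\alpha\beta=+-$, which yields $\ker G_{++}=\ker G_{+-}=P\dot{C}^{\infty}(M)$ and $\ran G_{++}=\ran G_{+-}=\sol$ in one stroke.
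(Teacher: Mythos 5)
Your proposal is correct and follows essentially the same two-step structure as the paper's proof: the kernel argument via $\X_{\alpha\beta}\cap\X_{-\alpha\beta}=\dot{C}^{\infty}(M)$, and the range argument via a microlocal cutoff $Q$ splitting $u=Qu+(I-Q)u$ and writing $u=G_{\alpha\beta}[P,Q]u$.

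One small remark: you flag the existence of a $Q$ with $\wf'([P,Q])\cap R=\varnothing$ as the one delicate point and propose arranging it by making the symbol of $Q$ locally constant near $R$. That works, but no extra construction is needed: since $[P,Q]=-[P,I-Q]$, one has $\wf'([P,Q])\subset\wf'(Q)\cap\wf'(I-Q)$, which is automatically disjoint from $R=R^{\alpha\beta}\cup R^{-\alpha\beta}$ given the two conditions $\wf'(Q)\cap R^{-\alpha\beta}=\varnothing$ and $\wf'(I-Q)\cap R^{\alpha\beta}=\varnothing$ already imposed on $Q$. The paper uses this directly. Everything else in your write-up (including the sign bookkeeping $G_{\alpha\beta}f=u_1-(-u_2)=u$) matches.
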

    
    \begin{proof}
    Fix $(\alpha,\beta)=(+,+)$ or $(+,-)$. Denote $\X_{\pm}=\X_{\pm\alpha\beta}$, $\Y_{\pm}=\Y_{\pm\alpha\beta}$, $P_{\pm}=P_{\pm\alpha\beta}$, and $G=G_{\alpha\beta}=P_+^{-1}-P_-^{-1}$.
        \begin{enumerate}
            \item 
                
                     Assume $f=Pu$ for some $u\in \dot{C}^{\infty}(M)=\X_+\cap\X_-$. Then $P_+u=P_-u=f$, so $Gf=0$. This proves that $P\dot{C}^{\infty}(M)\subset\ker G$.
                        
                     On the other hand, assume $Gf=0$, i.e. $P_+^{-1}f=P_-^{-1}f=u$ for some $u$. Since $P_+^{-1}f\in\X_+$ and $P_-^{-1}f\in\X_-$, we get $u\in \X_+\cap \X_-=\dot{C}^{\infty}(M)$. Since $f=Pu$, this proves that $\ker G\subset P\dot{C}^{\infty}(M)$.
            \item
            
                        Since $G$ is the difference of two inverses of $P$, we always have $PGf=0$, so 
                        \[\ran G\subset\{u\in\X_++\X_-\ \mid\ Pu=0\}=\sol.\]
                    
                        For the other direction, let $u\in\sol$. Let us choose $Q\in\psco$ such that $\wf'(Q)\cap R^{-\alpha\beta}=\varnothing$ and $\wf'(I-Q)\cap R^{\alpha\beta}=\varnothing$. Since $\wf(u)\subset R$ and $\wf'(Q)\cap R^{-\alpha\beta}=\varnothing$, we get $\wf(Qu)\subset R^{\alpha\beta}$. Let us write
                            \[P(Qu)=QPu+[P,Q]u=[P,Q]u,\]
                        since $Pu=0$. Since $\wf(u)\subset R$ and $\wf'([P,Q])\cap R=\varnothing$, we get $[P,Q]u\in\dot{C}^{\infty}(M)$. This proves that $P(Qu)\in\Y_+$, so $Qu\in \X_+$. An analogous argument shows that $(I-Q)u\in \X_-$.
                        
                        Then, using the mapping properties of $P_{\pm}$ and the fact that $Pu=0$, we get
                            \[
                            u
                            =
                            Qu+(I-Q)u
                            =
                            P_+^{-1}PQu+P_-^{-1}P(I-Q)u
                            =
                            \]
                            \[
                            =
                            P_+^{-1}[P,Q]u+P_-^{-1}[P,I-Q]u
                            =
                            G[P,Q]u.
                            \]
                        Thus, $u$ is in the range of $G$. This proves that $\ran G=\sol$.
        \end{enumerate}
    \end{proof}
    
    To summarize, then, $G_{++}$ and $G_{+-}$ induce linear bijections
    \[[G_{++}],[G_{+-}]:\frac{\dot{C}^{\infty}(M)}{P\dot{C}^{\infty}(M)}\to\sol.\]
    Moreover, the proof shows that the inverse of $[G_{+\pm}]$ maps any $u\in\sol$ to the class of $[P,Q]u\in \dot{C}^{\infty}(M)$ for any $Q\in\psco$ such that $\wf'(Q)\cap R^{-\mp}=\varnothing$ and $\wf'(I-Q)\cap R^{+\pm}=\varnothing$.

    \begin{prop}
        $P_{\alpha\beta}^{-1}:\dot{C}^{\infty}(M)\to\sols$ and $G_{+\pm}:\dot{C}^{\infty}(M)\to\sol$ are continuous with respect to the topology on $\sols$ defined in Section~\ref{sec:cont}.
    \end{prop}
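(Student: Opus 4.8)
The plan is to prove continuity of $P_{\alpha\beta}^{-1}$ (for each fixed $\alpha,\beta\in\{+,-\}$) via the closed graph theorem for Fréchet spaces, and then deduce the statement for $G_{+\pm}$ as an immediate corollary. Both $\dot{C}^{\infty}(M)$ and $\sols$, the latter equipped with the Fréchet topology of Section~\ref{sec:cont}, are Fréchet spaces, and $P_{\alpha\beta}^{-1}$ is a well-defined linear map $\dot{C}^{\infty}(M)\to\sols$: for $f\in\dot{C}^{\infty}(M)\subset\Y_{\alpha\beta}$ the distribution $u:=P_{\alpha\beta}^{-1}f$ satisfies $Pu=f\in\dot{C}^{\infty}(M)$ and $\wf(u)\subset R$, hence lies in $\sols$. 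It therefore suffices to check that the graph is closed, i.e.\ that $f_n\to f$ in $\dot{C}^{\infty}(M)$ and $u_n:=P_{\alpha\beta}^{-1}f_n\to u$ in $\sols$ force $u=P_{\alpha\beta}^{-1}f$.

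To do this I would compare both limits inside $C^{-\infty}(M)$. On the one hand, convergence in $\sols$ implies convergence in each $\hsc{m,r_m}$, hence in $C^{-\infty}(M)$, so $u_n\to u$ in $C^{-\infty}(M)$. On the other hand, fix one admissible choice of $s\in\R$ and of variable-order weight $r_{\alpha\beta}$ satisfying the relevant (Feynman, resp.\ causal) threshold conditions; by Theorem~\ref{thm:invertibility} — invoking, in the causal cases, the standing invertibility assumption — the map $P_{s,r_{\alpha\beta}}:\X_{s,r_{\alpha\beta}}\to\Y_{s,r_{\alpha\beta}}$ is a continuous linear bijection of Hilbert spaces, so its inverse $P_{\alpha\beta}^{-1}=P_{s,r_{\alpha\beta}}^{-1}$ is bounded from $\Y_{s,r_{\alpha\beta}}=\hsc{s-1,r_{\alpha\beta}+1}$ into $\X_{s,r_{\alpha\beta}}\subset\hsc{s,r_{\alpha\beta}}$. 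Since $\dot{C}^{\infty}(M)$ embeds continuously into $\hsc{s-1,r_{\alpha\beta}+1}$ (as into every weighted Sobolev space), $f_n\to f$ there, whence $u_n=P_{\alpha\beta}^{-1}f_n\to P_{\alpha\beta}^{-1}f$ in $\hsc{s,r_{\alpha\beta}}$ and so in $C^{-\infty}(M)$. As $C^{-\infty}(M)$ is Hausdorff, $u=P_{\alpha\beta}^{-1}f$; the graph is closed, and $P_{\alpha\beta}^{-1}:\dot{C}^{\infty}(M)\to\sols$ is continuous.

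For $G_{+\pm}$ I would simply write $G_{++}=P_{++}^{-1}-P_{--}^{-1}$ and $G_{+-}=P_{+-}^{-1}-P_{-+}^{-1}$: each summand has just been shown to be continuous $\dot{C}^{\infty}(M)\to\sols$, so the difference is as well, and since $\ran G_{+\pm}=\sol$ carries the subspace topology induced from $\sols$, this also gives continuity of $G_{+\pm}:\dot{C}^{\infty}(M)\to\sol$.

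The one real subtlety — and the reason for routing through the closed graph theorem rather than a direct estimate — is that the weights $r_m$ generating the topology of $\sols$ lie below the threshold $-\tfrac{1}{2}$ everywhere on $R$, so they do \emph{not} satisfy the threshold conditions under which Theorem~\ref{thm:invertibility} and the estimates of Section~\ref{sec:AES} furnish a priori bounds. A direct proof of continuity would have to upgrade a single fixed-order Fredholm estimate to the variable orders $(m,r_m)$ by feeding in the extra information that $u=P_{\alpha\beta}^{-1}f$ is microlocally Schwartz away from $R^{\alpha\beta}$ and only of borderline weight at $R^{\alpha\beta}$. The closed graph argument bypasses all of this, requiring only the elementary continuous inclusions into $C^{-\infty}(M)$ together with boundedness of the fixed-order inverse.
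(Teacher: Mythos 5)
Your proof is correct, but it takes a genuinely different route from the paper's, and I want to flag that the motivation you give for your route rests on a slight misconception. The paper's proof is a direct estimate: for each $m$ it picks a variable order $r_{\alpha\beta}>r_m$ \emph{pointwise} that satisfies the $(\alpha,\beta)$-threshold conditions. This is possible precisely because the conditions only constrain $r_{\alpha\beta}$ from above (below $-\tfrac12$) at $R^{\alpha\beta}$ --- where $r_m<-\tfrac12$ already, so there is room to squeeze $r_{\alpha\beta}$ in between --- while at $R^{-\alpha\beta}$ the condition $r_{\alpha\beta}>-\tfrac12$ is achieved simply by choosing $r_{\alpha\beta}$ large there, which is harmless since one wants an \emph{upper} bound on $\|\cdot\|_{m,r_m}$ and $\|u\|_{m,r_m}\leqslant\|u\|_{m,r_{\alpha\beta}}$ whenever $r_{\alpha\beta}\geqslant r_m$. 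So the estimate $\|P_{\alpha\beta}^{-1}f\|_{\mathrm{Sol},m}^2\leqslant C\|f\|_{m-1,r_{\alpha\beta}+1}^2+\|f\|_{m,m}^2$ follows at once from the fixed-(variable)-order Fredholm bound; no microlocal ``upgrade'' is needed, contrary to what your final paragraph suggests. Your closed-graph argument is nonetheless complete and valid: both spaces are Fr\'echet, convergence in $\sols$ and convergence via a single admissible $\hsc{s,r_{\alpha\beta}}$ both imply convergence in the Hausdorff space $C^{-\infty}(M)$, so the graph is closed, and the $G_{+\pm}$ statement follows by linearity. What your approach buys is that it bypasses the (admittedly easy) comparison of weights $r_m<r_{\alpha\beta}$; what the paper's approach buys is an explicit seminorm estimate, which is more informative and is essentially the same observation one needs anyway to see that $P_{\alpha\beta}^{-1}$ lands in $\sols$ at all.
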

    \begin{proof}
        Let $r_m$ be the variable orders defined in Section~\ref{sec:cont}. For any $m\in\N$ and any choice of $(\alpha,\beta)\in\{+,-\}$, there exists $r_{\alpha\beta}>r_m$ satisfying the $(\alpha,\beta)$-threshold condition. Then 
        \[
        \lVert P_{\alpha\beta}^{-1}f\rVert _{\mathrm{Sol},m}^2
        =
        \lVert P_{\alpha\beta}^{-1}f\rVert _{m,r_m}^2 + \lVert f\rVert _{m,m}^2
        \leqslant
        \lVert P_{\alpha\beta}^{-1}f\rVert _{m,r_{\alpha\beta}}^2 + \lVert f\rVert _{m,m}^2
        \leqslant
        \]
        \[
        \leqslant 
        C\lVert f\rVert _{m-1,r_{\alpha\beta}+1}^2+\lVert f\rVert _{m,m}^2.
        \]
        The last inequality comes from the continuity of $P_{m,r_{\alpha\beta}}^{-1}:\Y_{m,r_{\alpha\beta}}\to \X_{m,r_{\alpha\beta}}$ (which follows from the Fredholm estimates). Since the topology of $\dot{C}^{\infty}(M)$ can be generated by the weighted Sobolev norms, this proves the continuity of $P_{\alpha\beta}^{-1}:\dot{C}^{\infty}(M)\to \sols$; and since $G_{+\pm}=P_{+\pm}^{-1}-P_{-\mp}^{-1}$, continuity of $G$ follows.
    \end{proof}

\subsection{Hermitian structure and positivity}
	Now we show that the difference of propagators endows the space $\dot{C}^{\infty}(M)/P\dot{C}^{\infty}(M)$, and therefore also the space $\sol$ of solutions to the homogeneous problem, with a Hermitian inner product structure which is expressed particularly clearly in terms of the boundary data of solutions.

    \begin{theorem}{\textbf{(Analogue of Theorem 4.3. in \cite{V-W})}}
        \begin{enumerate}
            \item For any $f,g\in \dot{C}^{\infty}(M)$ we have $\langle f,P_{\alpha\beta}^{-1}g\rangle_{L^2_{\g}}=\langle P_{-\alpha\beta}^{-1}f,g\rangle_{L^2_{\g}}$.
            \item 
                $D_{+\pm}([f],[g])=i\langle f,G_{+\pm}g\rangle_{L^2_{\g}}$ are nondegenerate Hermitian sesquilinear forms on $\frac{\dot{C}^{\infty}(M)}{P\dot{C}^{\infty}(M)}$. 
            \item (Positivity in the Feynman case) 
                For any $f\in \dot{C}^{\infty}(M)$ we have $i\langle f, G_{++}f\rangle_{L^2_{\g}}\geqslant 0$, with equality if and only if $f\in P\dot{C}^{\infty}(M)$.
        \end{enumerate}
    \end{theorem}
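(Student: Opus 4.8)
The plan is to derive all three parts from two tools already established above: the integration‑by‑parts lemma (Lemma~\ref{thm:int-by-parts}) and the boundary pairing formula (Proposition~\ref{thm:bpf}). The structural input used throughout is that an element of $\X_{\alpha\beta}$ has wavefront set contained in $R^{\alpha\beta}$, while $R^{\alpha\beta}$ and $R^{-\alpha\beta}$ are disjoint in each of the four cases. For (1), I would fix $f,g\in\dot{C}^{\infty}(M)$ and set $u=P_{\alpha\beta}^{-1}g\in\X_{\alpha\beta}$, $v=P_{-\alpha\beta}^{-1}f\in\X_{-\alpha\beta}$, so $Pu=g$ and $Pv=f$. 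Since $\wf(u)\subset R^{\alpha\beta}$ and $\wf(v)\subset R^{-\alpha\beta}$ are disjoint, $\wf(u)\cap\wf(v)\cap\wf'(P)=\varnothing$, so Lemma~\ref{thm:int-by-parts} (applied to $A=P\in\psc{\infty,\infty}$, with $P^*=P$) yields $\langle v,Pu\rangle_{L^2_{\g}}=\langle Pv,u\rangle_{L^2_{\g}}$; all pairings here are between a Schwartz function and a tempered distribution and hence unambiguous. Rearranging gives $\langle P_{-\alpha\beta}^{-1}f,g\rangle_{L^2_{\g}}=\langle f,P_{\alpha\beta}^{-1}g\rangle_{L^2_{\g}}$, which is (1).

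For (2), applying (1) to the index pairs $+\pm$ and $-\mp$ gives, for $f,g\in\dot{C}^{\infty}(M)$,
$$\langle f,G_{+\pm}g\rangle_{L^2_{\g}}=\langle P_{-\mp}^{-1}f,g\rangle_{L^2_{\g}}-\langle P_{+\pm}^{-1}f,g\rangle_{L^2_{\g}}=-\langle G_{+\pm}f,g\rangle_{L^2_{\g}},$$
hence $D_{+\pm}([f],[g])=-i\langle f,G_{+\pm}g\rangle_{L^2_{\g}}=i\langle G_{+\pm}f,g\rangle_{L^2_{\g}}$. The first form is manifestly independent of the representative of $[g]$ and the second of that of $[f]$ (recall $G_{+\pm}$ annihilates $P\dot{C}^{\infty}(M)$), so $D_{+\pm}$ is a well-defined sesquilinear form on $\dot{C}^{\infty}(M)/P\dot{C}^{\infty}(M)$; it is Hermitian because $\overline{D_{+\pm}([g],[f])}=\overline{-i\langle g,G_{+\pm}f\rangle_{L^2_{\g}}}=i\langle G_{+\pm}f,g\rangle_{L^2_{\g}}=D_{+\pm}([f],[g])$. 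If $D_{+\pm}([f],[g])=0$ for all $g\in\dot{C}^{\infty}(M)$, then $\langle G_{+\pm}f,g\rangle_{L^2_{\g}}=0$ for all such $g$, so $G_{+\pm}f=0$; since $\ker G_{+\pm}=P\dot{C}^{\infty}(M)$, this gives $[f]=0$, proving nondegeneracy.

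For (3) I would write $u_{\pm}=P_{\pm\pm}^{-1}f\in\X_{\pm\pm}\cap\sols$, so $G_{++}f=u_{+}-u_{-}$ and $Pu_{\pm}=f\in\dot{C}^{\infty}(M)$, and recall $\X_{++}\cap\sols=\{u\in\sols:\rho^{--}u=0\}$, $\X_{--}\cap\sols=\{u\in\sols:\rho^{++}u=0\}$. Feeding the pair $(u_{+},u_{-})$ into Eq.~(\ref{eq:bpf-feynman}) makes the right-hand side vanish (each of its two terms contains a zero factor), so $\langle u_{+},f\rangle_{L^2_{\g}}=\langle f,u_{-}\rangle_{L^2_{\g}}$, i.e.\ $\overline{\langle f,u_{+}\rangle_{L^2_{\g}}}=\langle f,u_{-}\rangle_{L^2_{\g}}$; feeding in $(u_{+},u_{+})$ and using $\rho^{--}u_{+}=0$ gives $\langle u_{+},f\rangle_{L^2_{\g}}-\langle f,u_{+}\rangle_{L^2_{\g}}=-2i\lambda\|\rho^{++}u_{+}\|_{L^2_{\h}}^{2}$, i.e.\ $\operatorname{Im}\langle f,u_{+}\rangle_{L^2_{\g}}=\lambda\|\rho^{++}u_{+}\|_{L^2_{\h}}^{2}$. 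Combining,
$$-i\langle f,G_{++}f\rangle_{L^2_{\g}}=-i\bigl(\langle f,u_{+}\rangle_{L^2_{\g}}-\overline{\langle f,u_{+}\rangle_{L^2_{\g}}}\bigr)=2\operatorname{Im}\langle f,u_{+}\rangle_{L^2_{\g}}=2\lambda\bigl\|\rho^{++}P_{++}^{-1}f\bigr\|_{L^2_{\h}}^{2}\geqslant 0,$$
using $\lambda>0$. For the equality clause, vanishing of the left side forces $\rho^{++}u_{+}=0$, which together with $\rho^{--}u_{+}=0$ makes the asymptotic data of $u_{+}$ vanish, so $\wf(u_{+})=\varnothing$ by Lemma~\ref{thm:wf-bd}, i.e.\ $u_{+}\in\dot{C}^{\infty}(M)$ and $f=Pu_{+}\in P\dot{C}^{\infty}(M)$; conversely $f=Ph$ with $h\in\dot{C}^{\infty}(M)=\X_{++}\cap\X_{--}$ gives $u_{+}=P_{++}^{-1}f=h$, whose asymptotic data vanish.

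I expect the only real obstacle to be part (3): recognizing that the boundary pairing formula is the device that turns $-i\langle f,G_{++}f\rangle$ into a boundary quantity, exploiting that $u_{+}$ is purely incoming and $u_{-}$ purely outgoing to kill the cross terms in Eq.~(\ref{eq:bpf-feynman}), and keeping the sign of its $-\langle\rho^{--}\cdot,\rho^{--}\cdot\rangle$ term straight so that the Feynman choice of inverses produces a genuinely positive form rather than merely a Hermitian one. That sign, combined with $\lambda>0$, is exactly where positivity enters, mirroring the positivity of $-i(G_{\bar{F}}-G_{F})$ in the Lorentzian picture (cf.\ Eq.~(\ref{eq:props-bose})).
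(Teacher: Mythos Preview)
Your proof is correct and follows essentially the same approach as the paper: part (1) via Lemma~\ref{thm:int-by-parts} using disjointness of $R^{\alpha\beta}$ and $R^{-\alpha\beta}$, part (2) from (1) together with $\ker G_{+\pm}=P\dot{C}^{\infty}(M)$, and part (3) via the boundary pairing formula. The only cosmetic difference is in (3): the paper uses part (1) to rewrite $\langle f,G_{++}f\rangle=\langle P_{--}^{-1}f,f\rangle-\langle f,P_{--}^{-1}f\rangle$ and then applies Eq.~(\ref{eq:bpf-feynman}) once to $P_{--}^{-1}f$, obtaining $2\lambda\|\rho^{--}P_{--}^{-1}f\|^2$, whereas you apply Eq.~(\ref{eq:bpf-feynman}) twice (to the pairs $(u_+,u_-)$ and $(u_+,u_+)$) and arrive at the equivalent expression $2\lambda\|\rho^{++}P_{++}^{-1}f\|^2$.
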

    
    \begin{proof}\ 
        \begin{enumerate}
            \item We compute
                \[
                \langle f,P_{\alpha\beta}^{-1}g\rangle
                =
                \langle PP_{-\alpha\beta}^{-1}f,P_{\alpha\beta}^{-1}g\rangle
                =
                \langle P_{-\alpha\beta}^{-1}f,PP_{\alpha\beta}^{-1}g\rangle
                =
                \langle P_{-\alpha\beta}^{-1}f,g\rangle,
                \]
                where the integration by parts is justified by Lemma~\ref{thm:int-by-parts}, since $\wf(P_{\pm\alpha\beta}^{-1}f)\subset R^{\pm\alpha\beta}$ and $R^{\alpha\beta}$ and $R^{-\alpha\beta}$ are always disjoint.
                
            \item 
                Fix $(\alpha,\beta)=(+,+)$ or $(+,-)$. Denote $P_{\pm}=P_{\pm\alpha\beta}$ and $G=G_{\alpha\beta}=P_+^{-1}-P_-^{-1}$, $D=D_{\alpha\beta}$.
            
                For $f,g\in \dot{C}^{\infty}(M)$, define the sesquilinear form $\Tilde{D}(f,g)=i\langle f, Gg\rangle$. Hermiticity follows from part 1:
                    \[
                    \Tilde{D}(f,g)
                    =
                    i\langle f, Gg\rangle
                    =
                    i\Big(\langle f, P_+^{-1}g\rangle - \langle f, P_-^{-1}g\rangle\Big)
                    =
                    \]
                    \[
                    =
                    i\Big(\langle P_-^{-1}f,g\rangle - \langle P_+^{-1}f, g\rangle\Big)
                    =
                    -i\langle Gf, g\rangle
                    =
                    -i\overline{\langle g, Gf\rangle}
                    =
                    \overline{i\langle g, Gf\rangle}
                    =
                    \overline{\Tilde{D}(g,f)}.
                    \]

                We have already seen that $\ker G=P\dot{C}^{\infty}(M)$, so for any $g\in P\dot{C}^{\infty}(M)$ we get $\Tilde{D}(f,g)=i\langle f,Gg\rangle=0$. By hermiticity we also get $\Tilde{D}(f,g)=0$ if $f\in P\dot{C}^{\infty}(M)$. Thus, $D([f],[g])=\Tilde{D}(f,g)$ is well-defined as a Hermitian form on $\dot{C}^{\infty}(M)/P\dot{C}^{\infty}(M)$.
                
                To see that $D$ is nondegenerate, assume there exists $g\in \dot{C}^{\infty}(M)$ such that for all $f\in\dot{C}^{\infty}(M)$ we have $\Tilde{D}(f,g)=0$. Since the $L^2$ inner product is nondegenerate on $\dot{C}^{\infty}(M)$, this implies $Gg=0$, and we have already seen that $\ker G=P\dot{C}^{\infty}(M)$. This proves nondegeneracy of $D$ on the quotient.        
                
            \item
                By part 1,
                    \[
                    \langle f, G_{++}f\rangle
                    =
                    \langle f, P_{++}^{-1}f\rangle - \langle f, P_{--}^{-1}f\rangle
                    =
                    \langle P_{--}^{-1}f,f\rangle -\langle f,P_{--}^{-1}f\rangle.
                    \]
                By the boundary pairing formula,
                    \[
                    \langle P_{--}^{-1}f,f\rangle_{L^2_{\g}} -\langle f,P_{--}^{-1}f\rangle_{L^2_{\g}}
                    =
                    -2i\lambda\langle \rho^-P_{--}^{-1}f,\rho^-P_{--}^{-1}f\rangle_{L^2_{\h}}.
                    \]
                Therefore,
                    \[i\langle f,G_{++}f\rangle_{L^2_{\g}}=2\lambda\lVert  \rho^-P_{--}^{-1}f\rVert_{L^2_{\h}} ^2\geqslant 0.\]
                Equality will be attained if and only if $\rho^-P_{--}^{-1}f=0$. Since $\wf(P_{--}^{-1}f)\subset R^-$, this condition is equivalent to $P_{--}^{-1}f\in \dot{C}^{\infty}(M)$ by Eq.~(\ref{eq:wf-bd}). This is in turn equivalent to $f\in P\dot{C}^{\infty}(M)$.
        \end{enumerate}
    \end{proof}

    This Hermitian structure can be transported onto $\sol$ using the isomorphisms defined by $G_{+\pm}$. We would also like to understand the Hermitian structure further induced on the space of boundary data by the isomorphisms $\rho^{\alpha\beta}$.
    
    Let $f,g\in \dot{C}^{\infty}(M)$. For $(\alpha,\beta)=(+,+)$ or $(+,-)$, define
    \begin{equation}
    a^{\pm\alpha\beta}=\rho^{\pm\alpha\beta}G_{\alpha\beta}f=\pm\rho^{\pm\alpha\beta}P_{\alpha\beta}^{-1}f,
    \hskip 20pt
    b^{\pm\alpha\beta}=\rho^{\pm\alpha\beta}G_{\alpha\beta}g=\pm\rho^{\pm\alpha\beta}P_{\alpha\beta}^{-1}g.
        \label{eq:bd-notation}
    \end{equation}
    Using the boundary pairing formula, we can compute
        \[
        D_{++}([f],[g])
        =
        \pm i\Big(\langle f,P_{\pm\pm}^{-1}g\rangle_{L^2_{\g}}-\langle P_{\pm\pm}^{-1}f, g\rangle_{L^2_{\g}}\Big)
        =
        2\lambda\langle a^{\pm\pm},b^{\pm\pm}\rangle_{L^2_{\h}},
        \]
        \[
        D_{+-}([f],[g])
        =
        \pm i\Big(\langle f,P_{\pm\mp}^{-1}g\rangle_{L^2_{\g}}-\langle P_{\pm\mp}^{-1}f, g\rangle_{L^2_{\g}}\Big)
        =
        2\lambda\Big\langle (\pi_1-\pi_2)a^{\pm\mp},b^{\pm\mp}\Big\rangle_{L^2_{\h}}.
        \]
    
    To summarize, then, we have the following isomorphisms of Hermitian inner product spaces:
   \[\begin{tikzcd}
   {\ } & {\ } & {\Big(C^{\infty}(S), D_{++}^{\mathrm{bd}}\Big)} \\
   	{\Big(\dot{C}^{\infty}(M)/P\dot{C}^{\infty}(M)}, D_{++}\Big) 
    & {\Big(\sol, D_{++}^{\mathrm{sol}}\Big)} 
    & {\ }\\
    {\ } & {\ } & {\Big(C^{\infty}(S), D_{++}^{\mathrm{bd}}\Big)} 
   	\arrow["{\rho^{++}}", from=2-2, to=1-3]
   	\arrow["G_{++}", from=2-1, to=2-2]
    \arrow["{\rho^{--}}", from=2-2, to=3-3]
   \end{tikzcd}\]
   
   \[\begin{tikzcd}
   {\ } & {\ } & {\Big(C^{\infty}(S), D_{+-}^{\mathrm{bd}}\Big)} \\
   	{\Big(\dot{C}^{\infty}(M)/P\dot{C}^{\infty}(M)}, D_{+-}\Big) 
    & {\Big(\sol, D_{+-}^{\mathrm{sol}}\Big)} 
    & {\ }\\
    {\ } & {\ } & {\Big(C^{\infty}(S), D_{+-}^{\mathrm{bd}}\Big)} 
   	\arrow["{\rho^{+-}}", from=2-2, to=1-3]
   	\arrow["G_{+-}", from=2-1, to=2-2]
    \arrow["{\rho^{-+}}", from=2-2, to=3-3]
   \end{tikzcd}\]
   
    The various realizations of the Hermitian forms are:
    \begin{itemize}
        \item 
            For $[f],[g]\in \frac{\dot{C}^{\infty}(M)}{P\dot{C}^{\infty}(M)}$, $D_{+\pm}([f],[g])=i\langle f, G_{+\pm}g\rangle_{L^2_{\g}}$;
        \item 
            For $u,v\in\sol$, $D_{+\pm}^{\mathrm{sol}}(u,v)=i\langle [P,Q]u,v\rangle_{L^2_{\g}}$ for any $Q\in\psco$ with $\wf'(Q)\cap R^{-\mp}=\varnothing$ and $\wf'(I-Q)\cap R^{+\pm}=\varnothing$;
        \item 
            For $a,b\in C^{\infty}(S)$, $D_{+\pm}^{\mathrm{bd}}(a,b)=2\lambda\Big(\langle \pi_1 a,\pi_1 b\rangle_{L^2_{\h}}\pm \langle \pi_2 a,\pi_2 b\rangle_{L^2_{\h}}\Big)$.
    \end{itemize}

\subsection{Two-point functions}
    Define $\Lambda_1^{\pm\pm},\Lambda_2^{\pm\pm}:\dot{C}^{\infty}(M)\to\sol$ by
        \begin{equation}
            \Lambda_1^{\pm\pm} f=+i\mathcal{U}_0^{\pm\pm}\pi_1\rho^{\pm\pm}G_{++}f,
            \hskip 50pt
            \Lambda_2^{\pm\pm} f=+i\mathcal{U}_0^{\pm\pm}\pi_2\rho^{\pm\pm}G_{++}f,
            \label{eq:2pt-feynman}
        \end{equation}
    and define $\Lambda_1^{\pm\mp},\Lambda_2^{\pm\mp}:\dot{C}^{\infty}(M)\to\sol$ by
        \begin{equation}
            \Lambda_1^{\pm\mp} f=+i\mathcal{U}_0^{\pm\mp}\pi_1\rho^{\pm\mp}G_{+-}f,
            \hskip 50pt
            \Lambda_2^{\pm\mp} f=-i\mathcal{U}_0^{\pm\mp}\pi_2\rho^{\pm\mp}G_{+-}f.
            \label{eq:2pt-causal}
        \end{equation}
    These operators satisfy $\Lambda_1^{\pm\pm}+\Lambda_2^{\pm\pm}=iG_{++}$ and $\Lambda_1^{\pm\mp}-\Lambda_2^{\pm\mp}=iG_{+-}$, as well as the following properties for any choice of $\alpha,\beta\in\{+,-\}$:
    
    \begin{itemize}
        \item 
            $\Lambda_{1}^{\alpha\beta}, \Lambda_{2}^{\alpha\beta}:\dot{C}^{\infty}(M)\to C^{-\infty}(M)$ are continuous. This follows from the continuity of every map in the diagram
            \[\begin{tikzcd}
   {\dot{C}^{\infty}(M)} & {\sol} & {C^{\infty}(S)} & {C^{\infty}(S)} & {C^{-\infty}(M).}
   	\arrow["G_{+\pm}", from=1-1, to=1-2]
   	\arrow["\rho^{\alpha\beta}", from=1-2, to=1-3]
    \arrow["\pi_j", from=1-3, to=1-4]
    \arrow["\mathcal{U}_0^{\alpha\beta}", from=1-4, to=1-5]
   \end{tikzcd}\]
        \item 
            Since they map into $\sol$, $P\Lambda_1^{\alpha\beta}=P\Lambda_2^{\alpha\beta}=0$.
        \item 
            Since $G_{+\pm}P=0$ on $\dot{C}^{\infty}(M)$, $\Lambda_1^{\alpha\beta}P=\Lambda_2^{\alpha\beta}P=0$. This means that $\Lambda_1^{\alpha\beta}$ and $\Lambda_2^{\alpha\beta}$ are well-defined on the quotient $\dot{C}^{\infty}(M)/P\dot{C}^{\infty}(M)$.
        \item 
            For any $f,g\in \dot{C}^{\infty}(M)$ and $j\in\{1,2\}$, we have $\langle f,\Lambda_j^{\alpha\beta} g\rangle_{L^2_{\g}}=\overline{\langle g,\Lambda_j^{\alpha\beta} f\rangle_{L^2_{\g}}}$. To see this, use the boundary pairing as follows. For $(\alpha,\beta)=(+,+)$ or $(+,-)$, define $a^{\pm\alpha\beta}, b^{\pm\alpha\beta}$ as in Eq.~(\ref{eq:bd-notation}).
            \begin{itemize}
                \item In the Feynman ($+,+$) case:
                \[
            \langle f,\Lambda_j^{\pm\pm}g\rangle
            =
            i\langle f,\mathcal{U}_0^{\pm\pm}\pi_j\rho^{\pm\pm}G_{++}g\rangle
            =
            \pm i\langle f,\mathcal{U}_0^{\pm\pm}\pi_j\rho^{\pm\pm}P_{\pm\pm}^{-1}g\rangle
            =
            \]
            \[
            =
            2\lambda\langle a^{\pm\pm},\pi_j b^{\pm\pm} \rangle
            =
            2\lambda\langle\pi_j a^{\pm\pm},\pi_j b^{\pm\pm}\rangle
            =
            \overline{2\lambda\langle\pi_j b^{\pm\pm},\pi_j a^{\pm\pm}\rangle}
            =
            \overline{\langle g,\Lambda_j^{\pm\pm} f\rangle}.
            \]

                \item In the causal ($+,-$) case:
            \[
            \langle f,\Lambda_j^{\pm\mp}g\rangle
            =
            i(\delta_{1j}-\delta_{2j})\langle f,\mathcal{U}_0^{\pm\mp}\pi_j\rho^{\pm\mp}G_{+-}g\rangle
            =
            \]
            \[
            =
            \pm i(\delta_{1j}-\delta_{2j})\langle f,\mathcal{U}_0^{\pm\mp}\pi_j\rho^{\pm\mp}P_{\pm\mp}^{-1}g\rangle
            =
            2\lambda (\delta_{1j}-\delta_{2j})\langle (\pi_1-\pi_2)a^{\pm\mp},\pi_j b^{\pm\mp}\rangle
            =
            \]
            \[
            =
            2\lambda \langle\pi_j a^{\pm\mp},\pi_j b^{\pm\mp}\rangle
            =
            \overline{2\lambda \langle \pi_jb^{\pm\mp},\pi_ja^{\pm\mp} \rangle}
            =
            \overline{\langle g,\Lambda_j^{\pm\mp} f\rangle}.
            \]
            \end{itemize}
            
        \item 
            For any $f\in \dot{C}^{\infty}(M)$ and $j\in\{1,2\}$, we have $\langle f,\Lambda_j^{\alpha\beta} f\rangle_{L^2_{\g}}\geqslant 0$. We can see this from the computation above:
            \[\langle f, \Lambda_j^{\alpha\beta}f\rangle_{L^2_{\g}}=2\lambda \lVert \pi_j\rho^{\alpha\beta}P_{\alpha\beta}^{-1}f\rVert ^2_{L^2_{\h}}\geqslant 0.\]
    \end{itemize}
    In particular, these properties mean that 
    \begin{itemize}
        \item 
            both $(\Lambda_1^{++},\Lambda_2^{++})$ and $(\Lambda_1^{--},\Lambda_2^{--})$ define valid pairs of \textbf{fermionic} two-point functions on the algebra $\text{CAR}^{\text{pol}}(\dot{C}^{\infty}(M),D_{++})$ by $\lambda_{1,2}^{\pm\pm}(f,g)=\langle f,\Lambda_{1,2}^{\pm\pm}g\rangle_{L^2_{\g}}$;
        \item 
            both $(\Lambda_1^{+-},\Lambda_2^{+-})$ and $(\Lambda_1^{-+},\Lambda_2^{-+})$ define valid pairs of \textbf{bosonic} two-point functions on the algebra $\text{CCR}^{\text{pol}}(\dot{C}^{\infty}(M),D_{+-})$ by $\lambda_{1,2}^{\pm\mp}(f,g)=\langle f,\Lambda_{1,2}^{\pm\mp}g\rangle_{L^2_{\g}}$.
    \end{itemize}

    On the level of the space of boundary data, the Hermitian form $D_{++}$ is (up to a constant factor) simply the standard $L^2$ inner product, and the two-point functions simply decompose it into two terms corresponding to the orthogonal decomposition $C^{\infty}(S)=C^{\infty}(S_1)\oplus C^{\infty}(S_2)$. The situation with $D_{+-}$ is similar, but it is not positive because it combines the $L^2$ inner products on the two components of the boundary with opposite signs, and correspondingly the decomposition is into a difference, not a sum, of positive two-point functions.

    The reason that in either case we arrive at two different pairs of two-point functions is the fact that in our construction, a pair of two-point functions is specified by prescribing the behavior of one piece of boundary data (incoming or outgoing) at each end. The other two pieces then depend on the scattering matrix. This means that after we specify if we will be using Feynman or causal data, there are two choices left of what to fix, which in practice depends on what process we want the state to model. This is similar to in/out vacuum states on asymptotically Minkowski spacetimes, which are defined to have properties similar to the Minkowski vacuum either in the infinite past or the infinite future and in general define different representations of the algebra of observables which are connected by a nontrivial S-matrix.

\section{Extending the two-point functions to distributions}
\label{sec:extension}

  In this section, we extend the domain of the two-point functions defined above and show that they satisfy a wavefront mapping property analogue of the Hadamard condition.
  
    \subsection{Extending $G$ to distributions}
    \label{sec:ext-G}
        In the context of Section \ref{sec:bd-Poisson}, instead of intersecting we can take the union over all $s\in\R$ and $r_{\alpha\beta}$ satisfying the threshold conditions. We get
            \[
            \Y_{\alpha\beta}^{-\infty}
            =
            \bigcup_{s,r_{\alpha\beta}}\Y_{s,r_{\alpha\beta}}
            =
            \bigcup_{r>\frac{1}{2}}\{f\in C^{-\infty}(M)\ \mid\ \wf^{\infty, r}(f)\cap R^{-\alpha\beta}=\varnothing\},
            \]
            \[
            \X_{\alpha\beta}^{-\infty}
            =
            \bigcup_{s,r_{\alpha\beta}}\X_{s,r_{\alpha\beta}}
            =
            \bigcup_{r>-\frac{1}{2}}\{u\in C^{-\infty}(M)\ \mid\ \big(\wf^{\infty,r}(u) \cup \wf^{\infty,r+1}(Pu)\big) \cap R^{-\alpha\beta}=\varnothing\},
            \]
        and $P$ restricts to invertible linear maps $P_{\alpha\beta}:\X_{\alpha\beta}^{-\infty}\to\Y_{\alpha\beta}^{-\infty}$. Since these maps are just extensions of the previously defined $P_{\alpha\beta}$ to a larger domain, we will use the same symbols as before for $P_{\alpha\beta}$ and the inverses $P_{\alpha\beta}^{-1}$.

        We can extend $G_{+\pm}=P_{+\pm}^{-1}-P_{-\mp}^{-1}$ to maps $(\Y_{+\pm}^{-\infty}\cap\Y_{-\mp}^{-\infty})\to (\X_{+\pm}^{-\infty}+\X_{-\mp}^{-\infty})$. Here we have
            \[
            \Y^{-\infty}
            :=
            \Y_{\alpha\beta}^{-\infty}\cap\Y_{-\alpha\beta}^{-\infty}
            =
            \bigcup_{r>\frac{1}{2}}\{f\in C^{-\infty}(M)\ \mid\ \wf^{\infty,r}(f)\cap R=\varnothing\},
            \]
            \[
            \X^{-\infty}
            :=
            \X_{\alpha\beta}^{-\infty}\cap\X_{-\alpha\beta}^{-\infty}
            =
            \bigcup_{r>-\frac{1}{2}}\{u\in C^{-\infty}(M)\ \mid\ \big(\wf^{\infty,r}(u) \cup \wf^{\infty,r+1}(Pu)\big) \cap R=\varnothing\},
            \]
            \[
            \X_{\alpha\beta}^{-\infty}+\X_{-\alpha\beta}^{-\infty}=C^{-\infty}(M).
            \]

		Using arguments analogous to the Schwartz case, we find that $\ker G=P\X^{-\infty}$ and $\ran G=\sol'$, where we define
		\[\sol'=\{u\in\X_{\alpha\beta}^{-\infty}+\X_{-\alpha\beta}^{-\infty}\ \mid\ Pu=0\}=\{u\in C^{-\infty}(M)\ \mid\ Pu=0\}.\]
        By analogy with the Schwartz case, let us also define for any $f\in\Y^{-\infty}$ the space
        \[\solf'=\{u\in\X_{\alpha\beta}^{-\infty}+\X_{-\alpha\beta}^{-\infty}\ \mid\ Pu=f\}=\{u\in C^{-\infty}(M)\ \mid\ Pu=f\}\]
        and $\soly'=\bigcup_{f\in\Y^{-\infty}}\solf'$.    

        We topologize $\Y^{-\infty}$ and $\soly'$ in the following way. Let $r_m$ be the variable orders defined in Section~\ref{sec:cont}. Then $\Y^{-\infty}=\bigcup_{m\in\N}\hsc{-m,-r_m}$ and $\soly'=\bigcup_{m\in\N}S_m$, where
        \[S_m=\{u\in \hsc{-m,-m}\ \mid\ Pu\in \hsc{-m,-r_m}\}\]
        with the Hilbert space norms $\lVert u\rVert _{S_m}^2=\lVert u\rVert _{-m,-m}^2+\lVert Pu\rVert _{-m,-r_m}^2$. We consider $\Y^{-\infty}$ and $\soly'$ to be equipped with the corresponding strict inductive limit topologies.

    \begin{prop}
        $P_{\alpha\beta}^{-1}:\Y^{-\infty}\to\soly'$ and $G_{+\pm}:\Y^{-\infty}\to \sol'$ are continuous with respect to these topologies.
    \end{prop}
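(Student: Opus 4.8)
The plan is to follow the template of the analogous continuity statement over Schwartz functions proved just above, reducing everything to the variable‑order Fredholm theory of Theorem~\ref{thm:invertibility}. Since $\Y^{-\infty}=\bigcup_m\hsc{-m,-r_m}$ carries the strict inductive limit topology, a linear map out of it into $\soly'$ is continuous as soon as each restriction to a step $\hsc{-m,-r_m}$ is; so it suffices to show that for every $m$ the operator $P_{\alpha\beta}^{-1}$ maps $\hsc{-m,-r_m}$ boundedly into some step $S_{m'}$, and then to compose with the continuous inclusion $S_{m'}\hookrightarrow\soly'$. Continuity of $G_{+\pm}=P_{+\pm}^{-1}-P_{-\mp}^{-1}$ follows at once: it is a difference of continuous maps $\Y^{-\infty}\to\soly'$, hence continuous into $\soly'$, and its range is $\sol'$ (carrying the subspace topology), which is the asserted statement.

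Fix $m$. First choose a variable order $r_{\alpha\beta}\in C^{\infty}(\tsccc)$ satisfying the $(\alpha,\beta)$-threshold condition and, in addition, the pointwise bound $r_{\alpha\beta}+1\leqslant -r_m$ on all of $\tsccc$. Such a choice exists: on a neighborhood of $R$ one has $-r_m-1\in(-\tfrac12,-\tfrac12+\tfrac1m)$, so on the radial components where the threshold condition demands $r_{\alpha\beta}>-\tfrac12$ there is room to take $-\tfrac12<r_{\alpha\beta}\leqslant -r_m-1$, while on those where it demands $r_{\alpha\beta}<-\tfrac12$ the bound $r_{\alpha\beta}\leqslant -r_m-1$ holds automatically; off $R$ there is no threshold constraint and $-r_m-1$ is bounded below, so $r_{\alpha\beta}$ may be taken very negative there, the monotonicity along $H$ in each component of $\Sigma$ being arranged---exactly as in the construction of the $r_m$---by making $r_{\alpha\beta}$ perform essentially all of its variation inside small neighborhoods of the radial components. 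With $s=-m+1$ this yields a bounded inclusion $\hsc{-m,-r_m}\hookrightarrow\hsc{-m,r_{\alpha\beta}+1}=\Y_{s,r_{\alpha\beta}}$.

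By Theorem~\ref{thm:invertibility}---part (1) for the Feynman indices $\alpha\beta\in\{++,--\}$, and part (2) with our standing small-$\lambda$ assumption for the causal indices $\{+-,-+\}$---the map $P_{s,r_{\alpha\beta}}\colon\X_{s,r_{\alpha\beta}}\to\Y_{s,r_{\alpha\beta}}$ is an invertible bounded map of Hilbert spaces, so $P_{\alpha\beta}^{-1}=P_{s,r_{\alpha\beta}}^{-1}$ is bounded $\Y_{s,r_{\alpha\beta}}\to\X_{s,r_{\alpha\beta}}$ by the bounded inverse theorem. Given $f\in\hsc{-m,-r_m}$, set $u=P_{\alpha\beta}^{-1}f$. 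Then $u\in\X_{s,r_{\alpha\beta}}\subset\hsc{s,r_{\alpha\beta}}$, while $Pu=f$ lies in $\hsc{-m,-r_m}$, which has decay order $>\tfrac12$ near all of $R$---this is precisely the extra information, beyond what a generic element of $\Y_{s,r_{\alpha\beta}}$ carries, needed to land in $\soly'$. Choosing $m'\geqslant m$ large enough that $\hsc{s,r_{\alpha\beta}}\hookrightarrow\hsc{-m',-m'}$ (possible since $s,r_{\alpha\beta}$ are bounded below and $\tsccc$ is compact) and $r_{m'}\geqslant r_m$, we get $u\in\hsc{-m',-m'}$ and $Pu=f\in\hsc{-m,-r_m}\hookrightarrow\hsc{-m',-r_{m'}}$, so $u\in S_{m'}$; chaining these continuous inclusions with the boundedness of $P_{s,r_{\alpha\beta}}^{-1}$ gives $\|u\|_{S_{m'}}\leqslant C\,\|f\|_{-m,-r_m}$ with $C$ depending only on $m$. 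This is the sought continuity of $P_{\alpha\beta}^{-1}\colon\hsc{-m,-r_m}\to S_{m'}$.

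The one step that I expect to need real care is the construction of $r_{\alpha\beta}$ in the second paragraph: it must at once respect the threshold sign conditions at the radial set, obey the pointwise bound $r_{\alpha\beta}+1\leqslant -r_m$---which is tight near $R$, where $-r_m-1$ sits just above $-\tfrac12$---and be monotone along the Hamilton flow. The point that makes it go through is that the threshold conditions only pin $r_{\alpha\beta}$ down near $R$ and allow it to be arbitrarily negative in the bulk of $\Sigma$, so the monotone profile can be taken to do all its climbing from the very negative bulk values up to the slightly-above-$-\tfrac12$ endpoint values inside small neighborhoods of the radial components. Everything else is the routine composition of continuous embeddings of weighted Sobolev spaces with the bounded inverse theorem, just as in the Schwartz case.
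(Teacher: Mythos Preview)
Your proof is correct and follows essentially the same approach as the paper: reduce to showing that for each $m$ the inverse $P_{\alpha\beta}^{-1}$ maps $\hsc{-m,-r_m}$ boundedly into some $S_{m'}$, by choosing a threshold-admissible variable order $r_{\alpha\beta}$ with $r_{\alpha\beta}+1\leqslant -r_m$ so that the Fredholm/invertibility theory on $\X_{-m+1,r_{\alpha\beta}}\to\Y_{-m+1,r_{\alpha\beta}}$ applies, and then embedding into a sufficiently large $S_{m'}$. The only minor difference is bookkeeping: you track $Pu=f$ back in the original space $\hsc{-m,-r_m}$ to land in $S_{m'}$, whereas the paper uses the $\X$-space norm directly; both work.
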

    \begin{proof}
        It is enough to prove that for every $m\in\N$, there exists $m'\in\N$ such that $P_{\alpha\beta}^{-1}$ maps $\hsc{-m,-r_m}\to S_{m'}$ continuously. For every $m$ and any $\alpha,\beta\in\{+,-\}$, there exists $r_{\alpha\beta}<-r_m-1$ satisfying the $(\alpha,\beta)$-threshold conditions, so $\hsc{-m,-r_m}\subset\hsc{-m,r_{\alpha\beta}+1} =\Y_{-m+1,r_{\alpha\beta}}$ with continuous inclusion. Next, $P_{\alpha\beta}^{-1}:\Y_{-m+1,r_{\alpha\beta}}\to \X_{-m+1,r_{\alpha\beta}}$ is continuous. Finally, $\X_{-m+1,r_{\alpha\beta}}\subset S_{m'}$ with continuous inclusion for any $m'>\max(m-1,\sup (-r_{\alpha\beta}))$ such that $r_{m'}>-r_{\alpha\beta}-1$ (which is always satisfied for large enough $m'$). This proves continuity of $P_{\alpha\beta}^{-1}$, and the continuity of $G_{+\pm}$ follows since $G_{+\pm}=P_{+\pm}^{-1}-P_{-\mp}^{-1}$.
    \end{proof}

        Finally, the following proposition formalizes the notion that the distinguished inverses only propagate singularities in a certain direction along the Hamilton flow. This wavefront mapping property is analogous to that of the distinguished parametrices discussed in Section~\ref{sec:QFT}. A consequence is that except for possible creation of new singularities at the radial set, $G_{+\pm}$ propagates singularities only along bicharacteristics, even when propagation out of radial sets is considered.
        
        For any $U\subset\Sigma$, define $\lc(U)$ to be the union of bicharacteristics whose closures intersect $U$ and the limiting points of those bicharacteristics in $R$. We also define $\lc_{\alpha\beta}(U)$ to be the subset of $\lc(U)$ consisting of points which lie between a point of $U$ and a point of $R^{\alpha\beta}$ on some bicharacteristic.
        \begin{prop}
        \label{thm:PoS-G}
        Fix any $r\in\R$.
            \begin{enumerate}
                \item 
                    For any $f\in \Y^{-\infty}_{\alpha\beta}$, we have $\wf^{\infty,r}(P_{\alpha\beta}^{-1}f)\cap(\Sigma\backslash R^{\alpha\beta})\subset \lc_{\alpha\beta}(\wf^{\infty,r+1}(f)\cap\Sigma)$.
                \item 
                    For any $f\in\Y^{-\infty}$, we have $\wf^{\infty,r}(G_{+\pm}f)\backslash R \subset\lc(\wf^{\infty,r+1}(f)\cap\Sigma)$.
            \end{enumerate}
            The analogous results also hold for the absolute scattering wavefront set.
        \end{prop}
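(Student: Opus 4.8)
The plan is to derive part~(1) from the propagation-of-singularities results of Theorem~\ref{thm:PoS}, then obtain part~(2) by applying part~(1) to the two terms of $G_{+\pm}=P_{+\pm}^{-1}-P_{-\mp}^{-1}$ together with microlocal elliptic regularity, and finally pass to the absolute wavefront set by taking unions over the weight, using $\wf(v)=\bigcup_r\wf^{\infty,r}(v)$ and the fact that $\lc$ and $\lc_{\alpha\beta}$ commute with unions of subsets of $\Sigma$.

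For part~(1), fix $f\in\Y^{-\infty}_{\alpha\beta}$ and put $u=P_{\alpha\beta}^{-1}f$, so that $Pu=f$ and, by the description of $\X_{\alpha\beta}^{-\infty}$, there is some $r_0>-\tfrac12$ with $\wf^{\infty,r_0}(u)\cap R^{-\alpha\beta}=\varnothing$. The set $\Sigma\setminus R^{\alpha\beta}$ splits into the radial points $R^{-\alpha\beta}$ and the non-radial points $\Sigma\setminus R$. The threshold conditions place the weight \emph{above} threshold along $R^{-\alpha\beta}$, so the above-threshold radial estimate (part~4 of Theorem~\ref{thm:PoS}) applies there: if $\zeta\in R^{-\alpha\beta}$ and $\zeta\notin\wf^{\infty,r+1}(f)=\wf^{\infty,r+1}(Pu)$, then $\zeta\notin\wf^{\infty,r}(u)$ --- immediately if $r_0\geqslant r$, and by one application of the estimate with the lower order $r_0\in(-\tfrac12,r)$ otherwise. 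Thus $\wf^{\infty,r}(u)\cap R^{-\alpha\beta}\subset\wf^{\infty,r+1}(f)\cap R^{-\alpha\beta}$, and each such point lies in $\lc_{\alpha\beta}(\wf^{\infty,r+1}(f)\cap\Sigma)$ because, by the source--sink structure of the rescaled Hamilton flow (componentwise in the mixed case), it is a flow-limit of a bicharacteristic whose other flow-limit lies in $R^{\alpha\beta}$. For a non-radial $\zeta\in\wf^{\infty,r}(u)\cap\Sigma$, take the bicharacteristic $\gamma$ through it, which tends to some $\zeta_+\in R^{\alpha\beta}$ in one direction and to some $\zeta_-\in R^{-\alpha\beta}$ in the other: if the closed arc of $\overline{\gamma}$ from $\zeta$ to $\zeta_-$ meets $\wf^{\infty,r+1}(f)$ at a point $\zeta_1$, then $\zeta$ lies between $\zeta_1$ and $\zeta_+$ and so $\zeta\in\lc_{\alpha\beta}(\wf^{\infty,r+1}(f)\cap\Sigma)$; otherwise that arc (including $\zeta_-$) avoids $\wf^{\infty,r+1}(f)$, so the previous case gives $\zeta_-\notin\wf^{\infty,r}(u)$ and hence a $\Sigma$-neighborhood of $\zeta_-$ disjoint from $\wf^{\infty,r}(u)$, and real-principal-type propagation (part~1 of Theorem~\ref{thm:PoS}) along the portion of $\gamma$ from $\zeta$ into that neighborhood, which stays in $\Sigma\setminus\wf^{\infty,r+1}(f)$, forces $\zeta\notin\wf^{\infty,r}(u)$, a contradiction. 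So only the first alternative occurs, and part~(1) follows.

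For part~(2), note $P(G_{+\pm}f)=0$, so microlocal elliptic regularity gives $\wf^{\infty,r}(G_{+\pm}f)\subset\Sigma$ and therefore $\wf^{\infty,r}(G_{+\pm}f)\setminus R=\wf^{\infty,r}(G_{+\pm}f)\cap(\Sigma\setminus R)$. Since $R=R^{+\pm}\cup R^{-\mp}$, on $\Sigma\setminus R$ part~(1) applies to both $P_{+\pm}^{-1}f$ and $P_{-\mp}^{-1}f$; combining this with subadditivity of $\wf^{\infty,r}$ under sums and with $\lc_{+\pm}(U)\cup\lc_{-\mp}(U)\subset\lc(U)$ gives $\wf^{\infty,r}(G_{+\pm}f)\setminus R\subset\lc(\wf^{\infty,r+1}(f)\cap\Sigma)$, and the $\wf$ version follows by the union argument from the first paragraph.

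I expect the main difficulty to be the weight bookkeeping in part~(1): one needs the base regularity at $R^{-\alpha\beta}$ at exactly the weight $r$ appearing on the left-hand side, for every $r$, rather than only at the weight $r_0$ supplied by membership in $\X^{-\infty}_{\alpha\beta}$ --- this is what the above-threshold estimate provides, but only when invoked at the radial set where the weight is above threshold and after checking its hypotheses (in particular $\zeta_-\notin\wf^{\infty,r+1}(Pu)$). A secondary point is to keep track, component by component and for each sign pair $(\alpha,\beta)$, of which flow-direction along a bicharacteristic enters $R^{\alpha\beta}$, so that the conclusion lands in $\lc_{\alpha\beta}$ and not merely in $\lc$, and to confirm the geometric fact that radial points of $R^{-\alpha\beta}$ which $f$ forces to be singular are captured by $\lc_{\alpha\beta}$.
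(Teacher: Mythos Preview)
Your proof is correct, and part~(2) and the passage to the absolute wavefront set match the paper's argument. For part~(1), however, you take a genuinely different route. The paper argues by contrapositive and, rather than invoking the radial point and real-principal-type estimates separately, it builds a single variable weight $r'\in C^{\infty}(\tsccc)$ satisfying the $(\alpha,\beta)$ threshold conditions, equal to $r$ on the relevant bicharacteristic segment (from $\zeta$ to its limit $\zeta_-\in R^{-\alpha\beta}$) and bounded by the given $r_{\alpha\beta}$ outside a neighborhood of that segment; then $f\in\Y_{s,r'}$ and the variable-order invertibility of $P_{\alpha\beta}$ gives $P_{\alpha\beta}^{-1}f\in\hsc{s,r'}$, hence $\zeta\notin\wf^{\infty,r}(P_{\alpha\beta}^{-1}f)$ in one stroke. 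The case $r\leqslant-\tfrac12$ is handled separately by noting that a neighborhood of $R^{-\alpha\beta}$ is already clear of $\wf^{\infty,-1/2}(P_{\alpha\beta}^{-1}f)$ and repeating the construction away from it.

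Your approach is more elementary in that it only uses the constant-order propagation and radial estimates from Theorem~\ref{thm:PoS}, avoiding the variable-order Fredholm machinery; the cost is a two-step argument (radial points first, then propagate) with a mild case split on $r$ versus $r_0$. The paper's approach is slicker once the variable-order spaces are in hand, packaging both the radial and real-principal-type input into a single choice of weight. Either way the content is the same: regularity is established at $R^{-\alpha\beta}$ and then transported along the flow toward $R^{\alpha\beta}$.
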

        \begin{proof}{\ }
            \begin{enumerate}
                \item 
                    Let $\zeta\in\Sigma\backslash R^{\alpha\beta}$, and let $\gamma$ denote the bicharacteristic through $\zeta$ (which consists only of $\zeta$ if $\zeta\in R^{-\alpha\beta}$). $\zeta\notin\lc_{\alpha\beta}(\wf^{\infty,r+1}(f)\cap\Sigma)$ means that the limiting point $\zeta_-\in R^{-\alpha\beta}$ of $\gamma$ is not in $\wf^{\infty,r+1}(f)$ and the segment of $\gamma$ between $\zeta_-$ and $\zeta$ does not intersect $\wf^{\infty,r+1}(f)$. Then there exists a neighborhood $U\subset\partial\tsccc$ of this segment (including both endpoints) such that $\wf^{\infty,r+1}(f)\cap U=\varnothing$.

                    Choose $s\in\R$ and $r_{\alpha\beta}$ satisfying the $(\alpha,\beta)$ threshold conditions such that $f\in \Y_{s,r_{\alpha\beta}}$. If $r>-\frac{1}{2}$, we can always choose $r'\in C^{\infty}(\tsccc)$ satisfying the $(\alpha,\beta)$ threshold conditions such that $r'\leqslant r_{\alpha\beta}$ outside of $U$, $r'\leqslant r$ in $U$, and $r'= r$ on a neighborhood of the segment of $\gamma$. Then $f\in \Y_{s,r'}$, and therefore $P_{\alpha\beta}^{-1}f\in\X_{s,r'}\subset \hsc{s,r'}$. This implies $\zeta\notin \wf^{\infty,r}(P_{\alpha\beta}^{-1}f)$.

                    If $r\leqslant -\frac{1}{2}$, we cannot find $r'$ with these properties that would satisfy the threshold condition near $\zeta_-$. However, since $P_{\alpha\beta}^{-1}f\in\X_{s,r_{\alpha\beta}}$, there automatically exists a neighborhood $U'$ of $R^{-\alpha\beta}$ which is disjoint from $\wf^{\infty,-\frac{1}{2}}(P_{\alpha\beta}^{-1}f)$, and if $\zeta \notin U'$, we can still conclude that $\zeta\notin\wf^{\infty,r}(P_{\alpha\beta}^{-1}f)$ by repeating the previous argument using $U\backslash U'$ in place of $U$.
                \item 
                    Since $PG_{+\pm}f=0$, we have $\wf(G_{+\pm}f)\subset\Sigma$ by microlocal elliptic regularity. In $\Sigma\backslash R$, the statement follows from part 1 since $G_{+\pm}=P_{+\pm}^{-1}-P_{-\mp}^{-1}$.
            \end{enumerate}
            The proof of the infinite-regularity version is similar, taking $r'$ to be arbitrarily large rather than equal to the desired order on the bicharacteristic segment.
        \end{proof}

\subsection{Distributional boundary data and generalized Poisson operators}

    To define the two-point functions for distributions, we need to first generalize the boundary data map and Poisson operators. We saw in Proposition~\ref{thm:cont-Schwartz} that the Poisson operators are in fact continuous with respect to distributional topologies, so they can be extended to continuous maps $C^{-\infty}(S)\to C^{-\infty}(M)$. We can also proceed in the opposite direction, directly extending the boundary data maps with the help of the boundary pairing formula.

    Consider $u\in \sols$ and $g\in \dot{C}^{\infty}(M)$. The boundary pairing formula tells us that
    \begin{equation}
    \langle \rho^{\pm}u, \rho^{\pm}P_{\pm\pm}^{-1}g\rangle_{L^2_{\h}}
    =
    \mp\frac{i}{2\lambda}\Big(\langle u,g\rangle_{L^2_{\g}}-\langle Pu, P_{\pm\pm}^{-1}g\rangle_{L^2_{\g}}\Big).
    \label{eq:bd-distr-1}
    \end{equation}
    We would like to re-interpret this formula as the definition of $\rho^{\pm}u$ for more general distributions $u$. For this, we need to describe the correspondence between $g\in\dot{C}^{\infty}(M)$ and $\rho^{\pm}P_{\pm\pm}^{-1}g\in C^{\infty}(S)$.

    The inverses $P_{\pm\pm}^{-1}:\Y_{\pm\pm}\to\X_{\pm\pm}$ restrict to bijective maps $\dot{C}^{\infty}(M)\to P_{\pm\pm}^{-1}\dot{C}^{\infty}(M)$. As we saw in Theorem \ref{thm:bd}, $\rho^{\pm}:P_{\pm\pm}^{-1}\dot{C}^{\infty}(M)\to C^{\infty}(S)$ is surjective with kernel $\dot{C}^{\infty}(M)$, so we can define a bijection $\rho^{\pm}_{\infty}P_{\pm\pm}^{-1}:\frac{\dot{C}^{\infty}(M)}{P\dot{C}^{\infty}(M)}\to C^{\infty}(S)$, where $\rho^{\pm}_{\infty}$ is defined in the discussion after Theorem \ref{thm:bd}. Note that for any $g\in P\dot{C}^{\infty}(M)$ and $u\in C^{-\infty}(M)$, we can integrate by parts on the right-hand side of Eq.~(\ref{eq:bd-distr-1}) to get $\langle u,g\rangle-\langle Pu, P_{\pm\pm}^{-1}g\rangle =0$. Therefore, the nontrivial kernel of the map $g\mapsto \rho^{\pm}P_{\pm\pm}^{-1}g$ is not an obstacle to using Eq.~(\ref{eq:bd-distr-1}) to extend $\rho^{\pm}$ to distributions.

    Recall $\mathcal{U}_{\infty}^{\pm}=(\rho^{\pm}_{\infty})^{-1}$. Then, based on the discussion above, we can write 
    \begin{equation}
    \overline{\rho^{\pm}u[\overline{a}]}
    =
    \langle \rho^{\pm}u,a\rangle_{L^2_{\h}}
    =
    \mp\frac{i}{2\lambda}\Big(\langle u,Pv^{\pm}\rangle_{L^2_{\g}}-\langle Pu,v^{\pm}\rangle_{L^2_{\g}}\Big)
    \label{eq:bd-distr-2}
    \end{equation}
    for any $u\in \sols$, $a\in C^{\infty}(S)$, and where $v^{\pm}$ is any representative of the class $\mathcal{U}_{\infty}^{\pm}a\in \frac{P_{\pm\pm}^{-1}\dot{C}^{\infty}(M)}{\dot{C}^{\infty}(M)}$. On the left-hand side we interpret $\rho^{\pm}u\in C^{\infty}(S)$ as an element of $C^{-\infty}(S)$.

    For any $u\in\soly'$, we have $Pu\in \hsc{-m,-r_m}$ for some $m$. On the other hand, for any $a\in C^{\infty}(S)$ we have $v^{\pm}\in\sols\subset \hsc{m,r_m}$ for any representative $v^{\pm}\in\mathcal{U}_{\infty}^{\pm}a$. Therefore, the right-hand side of Eq.~(\ref{eq:bd-distr-2}) is well-defined for any $u\in \soly'$. We then use Eq.~(\ref{eq:bd-distr-2}) to define $\rho^{\pm}u$ as a linear functional on $C^{\infty}(S)$ for any such $u$.

    \begin{theorem}{\textbf{(Extension of boundary data map and Poisson operators).}}
    \label{thm:ext}
        \begin{enumerate}
            \item 
                For any $u\in \soly'$, we have $\rho^{\pm}u\in C^{-\infty}(S)$.
            \item
                $\rho^{\pm}:\soly'\to C^{-\infty}(S)$ are continuous extensions (with respect to the topology on $\soly'$ defined in Section~\ref{sec:ext-G}) of $\rho^{\pm}:\sols\to C^{\infty}(S)$.
            \item
                For any fixed $f\in\Y^{-\infty}$, the maps $\rho^{\pm}:\solf'\to C^{-\infty}(S)$ are bijective.
            \item 
                The inverses $\mathcal{U}_f^{\pm}:C^{-\infty}(S)\to \solf'$ are continuous extensions (with respect to distributional topologies) of $\mathcal{U}_f^{\pm}:C^{\infty}(S)\to\solf$.
        \end{enumerate}
    \end{theorem}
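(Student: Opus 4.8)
The plan is to read parts (1)--(2) as a single continuity statement for the defining formula Eq.~(\ref{eq:bd-distr-2}), and then to deduce parts (3)--(4) from it together with the explicit formula Eq.~(\ref{eq:poisson-inhom}) for the Poisson operator, establishing the two identities $P\mathcal{U}_0^{\pm}=0$ and $\rho^{\pm}\mathcal{U}_0^{\pm}=\mathrm{id}$ by direct pairing computations rather than by density/approximation arguments.

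For (1)--(2): fix $u\in\soly'$ and choose $m$ with $u\in S_m$, so $u\in\hsc{-m,-m}$ and $Pu\in\hsc{-m,-r_m}$. The right-hand side of Eq.~(\ref{eq:bd-distr-2}) is the value, at any representative $v^{\pm}\in P_{\pm\pm}^{-1}\dot{C}^{\infty}(M)$ of $\mathcal{U}_{\infty}^{\pm}a$, of the functional $v\mapsto\langle u,Pv\rangle_{L^2_{\g}}-\langle Pu,v\rangle_{L^2_{\g}}$ on $\sols$, and this value is independent of the representative because the functional vanishes on $\dot{C}^{\infty}(M)$ (integration by parts, as $Pv$ is Schwartz there; or Lemma~\ref{thm:int-by-parts}). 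Since $\sols\subset\hsc{m,r_m}$ with $Pv\in\dot{C}^{\infty}(M)\subset\hsc{m,m}$, the duality pairings $\hsc{-m,-m}=(\hsc{m,m})^{*}$ and $\hsc{-m,-r_m}=(\hsc{m,r_m})^{*}$ bound this functional by a constant (depending on $u$) times $\|v\|_{\mathrm{Sol},m}$; hence it descends to a continuous functional on the quotient $P_{\pm\pm}^{-1}\dot{C}^{\infty}(M)/\dot{C}^{\infty}(M)$. Precomposing with $\mathcal{U}_{\infty}^{\pm}$ --- continuous into that quotient by Proposition~\ref{thm:cont-Schwartz} and the open mapping theorem --- shows $a\mapsto\rho^{\pm}u[\overline{a}]$ is continuous on $C^{\infty}(S)$, which is (1); and with $a$ now fixed the same bound reads $|\rho^{\pm}u[\overline{a}]|\leq C_{a}\|u\|_{S_m}$, so $\rho^{\pm}$ is continuous on each $S_m$ and hence on the strict inductive limit $\soly'$. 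It extends $\rho^{\pm}:\sols\to C^{\infty}(S)$ because on $\sols$ Eq.~(\ref{eq:bd-distr-2}) is exactly Proposition~\ref{thm:bpf} applied to $u$ and the representative $v^{\pm}$ (for which $\rho^{\pm}v^{\pm}=a$, $\rho^{\mp}v^{\pm}=0$), and $\sols\hookrightarrow\soly'$, $C^{\infty}(S)\hookrightarrow C^{-\infty}(S)$ continuously.

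For (3), injectivity: if $u,u'\in\solf'$ satisfy $\rho^{\pm}u=\rho^{\pm}u'$, then $w:=u-u'\in\sol'$, so Eq.~(\ref{eq:bd-distr-2}) gives $\langle w,Pv^{\pm}\rangle_{L^2_{\g}}=0$ for every representative $v^{\pm}$; since $P$ maps $P_{\pm\pm}^{-1}\dot{C}^{\infty}(M)$ onto $\dot{C}^{\infty}(M)$, $\langle w,g\rangle=0$ for all $g\in\dot{C}^{\infty}(M)$, so $w=0$. For surjectivity, first extend the homogeneous Poisson operator: Eq.~(\ref{eq:poisson-inhom}) with $f=0$ reads $\langle\mathcal{U}_0^{\pm}\psi,g\rangle=\mp2i\lambda\langle\psi,\rho^{\pm}P_{\pm\pm}^{-1}g\rangle$, which for $\psi\in C^{-\infty}(S)$, $g\in\dot{C}^{\infty}(M)$ defines an element of $C^{-\infty}(M)$, weak-$*$ continuous in $\psi$ and agreeing with the old $\mathcal{U}_0^{\pm}$ on $C^{\infty}(S)$. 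Testing against $Pg$ and using $P_{\pm\pm}^{-1}Pg=g$ (valid since $\dot{C}^{\infty}(M)\subset\X_{\pm\pm}$) together with $\rho^{\pm}g=0$ for $g\in\dot{C}^{\infty}(M)$ (Lemma~\ref{thm:wf-bd}, e.g. writing $g=x^{\frac{n-1}{2}}e^{-i\lambda/x}\cdot(x^{-\frac{n-1}{2}}e^{i\lambda/x}g)$ with the second factor again Schwartz) gives $P\mathcal{U}_0^{\pm}\psi=0$, so $\mathcal{U}_0^{\pm}\psi\in\sol'\subset\soly'$ and $\rho^{\pm}\mathcal{U}_0^{\pm}\psi$ is defined. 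Plugging $w=\mathcal{U}_0^{\pm}\psi$ into Eq.~(\ref{eq:bd-distr-2}) with $v^{\pm}=P_{\pm\pm}^{-1}g$ and $a:=\rho^{\pm}v^{\pm}$ yields $\langle\rho^{\pm}\mathcal{U}_0^{\pm}\psi,a\rangle=\pm\tfrac{i}{2\lambda}\langle\mathcal{U}_0^{\pm}\psi,g\rangle=\langle\psi,a\rangle$ for every $a\in C^{\infty}(S)$, i.e. $\rho^{\pm}\mathcal{U}_0^{\pm}\psi=\psi$. Surjectivity of $\rho^{\pm}:\solf'\to C^{-\infty}(S)$ then follows by setting, for given $\psi$, $u=u_0+\mathcal{U}_0^{\pm}(\psi-\rho^{\pm}u_0)$ with any $u_0\in\solf'$ (e.g. $u_0=P_{\pm\pm}^{-1}f$).

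For (4), the same computation with the full Eq.~(\ref{eq:poisson-inhom}) shows that the distribution $\tilde u$ defined by $\langle\tilde u,g\rangle=\langle f,P_{\pm\pm}^{-1}g\rangle\mp2i\lambda\langle\psi,\rho^{\pm}P_{\pm\pm}^{-1}g\rangle$ ($g\in\dot{C}^{\infty}(M)$) satisfies $P\tilde u=f$ and, via Eq.~(\ref{eq:bd-distr-2}), $\rho^{\pm}\tilde u=\psi$; by the bijectivity from (3), $\tilde u=(\rho^{\pm})^{-1}\psi=\mathcal{U}_f^{\pm}\psi$, so $\mathcal{U}_f^{\pm}$ is given by this formula --- manifestly weak-$*$ continuous in $\psi$ and reducing on $C^{\infty}(S)$ to Eq.~(\ref{eq:poisson-inhom}), hence extending the original operator. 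The main obstacle is not a single hard estimate but the topological bookkeeping: one must align the Sobolev weights so that every bilinear pairing in Eq.~(\ref{eq:bd-distr-2}) is a genuine duality pairing controlled by a single $\|\cdot\|_{\mathrm{Sol},m}$ seminorm --- precisely what makes $\rho^{\pm}$ continuous for the LF-topology on $\soly'$ --- and one should avoid proving $\rho^{\pm}\mathcal{U}_0^{\pm}=\mathrm{id}$ and $P\mathcal{U}_0^{\pm}=0$ by approximating $\psi\in C^{-\infty}(S)$ by smooth data, which would entangle the weak-$*$ topology on $C^{-\infty}(S)$, the LF-topology on $\soly'$, and the Hilbert norms on the $S_m$; the direct computations above sidestep this, the one delicate input being that Schwartz functions have vanishing boundary data.
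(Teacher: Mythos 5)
Your proof is correct and follows essentially the same route as the paper's: it extends $\rho^{\pm}$ via the boundary pairing identity Eq.~(\ref{eq:bd-distr-2}) and weighted-Sobolev duality, and defines the extended Poisson operators via Eq.~(\ref{eq:poisson-inhom}), verifying $P\mathcal{U}_0^{\pm}=0$ and $\rho^{\pm}\mathcal{U}_0^{\pm}=\mathrm{id}$ by direct pairing computations. The only differences are organizational rather than substantive: where the paper builds an explicit continuous lift $a\mapsto A\mathcal{U}_0^{\pm}a$ and invokes the closed graph theorem, you descend the functional to the quotient $P_{\pm\pm}^{-1}\dot{C}^{\infty}(M)/\dot{C}^{\infty}(M)$ and apply the open mapping theorem to $\rho^{\pm}_{\infty}$ (requiring the observation that $P_{\pm\pm}^{-1}\dot{C}^{\infty}(M)=\ker\rho^{\mp}$ and $\dot{C}^{\infty}(M)=\ker\rho^{+}\cap\ker\rho^{-}$ are closed in $\sols$ by continuity of $\rho^{\pm}$), and where the paper writes the single defining formula Eq.~(\ref{eq:poisson-dist}) for $\mathcal{U}_f^{\pm}a$, you reduce to the homogeneous case and then set $u=u_0+\mathcal{U}_0^{\pm}(\psi-\rho^{\pm}u_0)$.
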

    \begin{proof}{\ }
        We consider the case of $\rho^+,\mathcal{U}_f^+$. The case of $\rho^-,\mathcal{U}_f^-$ is analogous. We denote the distributional seminorms $\lVert a\rVert _{S,b}=|\langle a,b\rangle|$ for $a\in C^{-\infty}(S), b\in C^{\infty}(S)$.
        \begin{enumerate}
            \item 
                We need to prove that $\rho^+u$ is continuous as a linear functional on $C^{\infty}(S)$. Consider $a\in C^{\infty}(S)$.

                Let $v=\mathcal{U}_0^+a$, and take any $A\in\psco$ such that $\wf'(A)\cap R^-=\varnothing$ and $\wf'(I-A)\cap R^+=\varnothing$. Then $\wf(Av)\subset R$ and $P(Av)=[P,A]v \in \dot{C}^{\infty}(M)$, so $Av\in\sols$. Since $\wf(Av)\cap R^-=\varnothing$, we must have $\rho^-(Av)=0$. On the other hand, since $\wf((I-A)v)\cap R^+=\varnothing$, we must have $\rho^+((I-A)v)=0$, so $\rho^+(Av)=\rho^+(v)=a$. Thus, $Av$ is a representative of $\mathcal{U}_{\infty}^+a$.

                This means that for any choice of $A$ as above, we can define $\rho^+u$ by
                \begin{equation}				                    
                \langle \rho^+u, a\rangle_{L^2_{\h}}
                =
                -\frac{i}{2\lambda}\Big(\langle u, PA\mathcal{U}_0^+a\rangle_{L^2_{\g}}-\langle Pu,A\mathcal{U}_0^+a\rangle_{L^2_{\g}}\Big).
                \label{eq:bd-distr-3}					
                \end{equation}
                We now show that both terms on the right-hand side are continuous as a function of $a$.

                Consider the first term. It has already been established that $PA\mathcal{U}_0^+$ maps $C^{\infty}(S)\to\dot{C}^{\infty}(M)$. Continuity can be seen using the closed graph theorem (for linear maps between Fr\'echet spaces). Consider a sequence $a_n\in C^{\infty}(S)$ such that $a_n\to a$ in $C^{\infty}(S)$ and $PA\mathcal{U}_0^+a_n\to f$ in $\dot{C}^{\infty}(M)$. Since $\mathcal{U}_0^+:C^{\infty}(S)\to C^{-\infty}(M)$ and $PA:C^{-\infty}(M)\to C^{-\infty}(M)$ are continuous, we have $PA\mathcal{U}_0^+a_n\to PA\mathcal{U}_0^+a$ in $C^{-\infty}(M)$. At the same time, since the Schwartz topology of $\dot{C}^{\infty}(M)$ is stronger that the distributional one, we must have $PA\mathcal{U}_0^+a_n\to f$ in $C^{-\infty}(M)$. Therefore, $PA\mathcal{U}_0^+a=f$. This proves that the graph of $PA\mathcal{U}_0^+$ is closed, so we conclude continuity. Finally, since $u\in C^{-\infty}(M)$, the map $\langle u,\bullet\rangle:\dot{C}^{\infty}(M)\to\C$ is continuous, so the first term of Eq.~(\ref{eq:bd-distr-3}) is continuous in $a$.

                A similar argument applies to the second term. For any given $u$, we have $Pu\in \hsc{-m,-r_m}$ for some $m\in\N$. $A\mathcal{U}_0^+$ maps $C^{\infty}(S)\to \sols\subset \hsc{m,r_m}$. Continuity of $A\mathcal{U}_0^+\to\hsc{m,r_m}$ follows by the closed graph theorem in the same way. Finally, the Sobolev-space pairing $\langle Pu,\bullet\rangle:\hsc{m,r_m}\to\C$ is continuous.

            \item
                The new $\rho^+$ is an extension of the old one by construction, so we only need to show that $\rho^+ : \soly' \to C^{-\infty}(S)$ is continuous. It is enough to show continuity as a map $S_m\to C^{-\infty}(S)$ for any $m\in\N$. For any $u\in S_m$, $a\in C^{\infty}(S)$, and $v^+\in\mathcal{U}_{\infty}^+a$, we have directly from Eq.~(\ref{eq:bd-distr-2}):
                \[
                \lVert \rho^+u\rVert _{S,a}
                \leqslant 
                \frac{1}{2\lambda}\Big( |\langle u, Pv^+\rangle_{L^2_{\g}}|+|\langle Pu,v^+\rangle_{L^2_{\g}}|\Big)
                \leqslant
                \]
                \[
                \leqslant
                \frac{1}{2\lambda}\Big( \lVert u\rVert _{-m,-m}\lVert Pv^+\rVert _{m,m} +\lVert Pu\rVert _{-m,-r_m}\lVert v^+\rVert _{m,r_m}\Big)
                \leqslant
                C\lVert u\rVert _{S_m}.
                \]
                                
            \item
                Assume $u,v\in\solf'$ and $\rho^+u=\rho^+v$. Then $w:=u-v\in\sol'$ and $\rho^+w=0$. For any $g\in \dot{C}^{\infty}(M)$,
                \[
                \langle w,g\rangle_{L^2_{\g}}=\langle w,g\rangle_{L^2_{\g}}-\langle Pw,P_{++}^{-1}g\rangle_{L^2_{\g}}=2i\lambda\langle\rho^+w,\rho^+P_{++}^{-1}g\rangle_{L^2_{\h}}=0,\]
                i.e. $w=0$ as a distribution, so $u=v$. This proves injectivity.

                To prove surjectivity, consider an arbitrary $a\in C^{-\infty}(S)$. From the definition, we know that if there exists $u\in \solf'$ such that $\rho^+u=a$, then for any $g\in \dot{C}^{\infty}$ we have
                \[\langle u, g\rangle_{L^2_{\g}}=\langle f,P_{++}^{-1}g\rangle_{L^2_{\g}}+2i\lambda\langle a,\rho^+P_{++}^{-1}g\rangle_{L^2_{\h}}.\]
                
                Consider the linear functional $u$ on $\dot{C}^{\infty}(M)$ defined by this formula.      
                Then if we manage to prove $u\in C^{-\infty}(M)$ and $Pu=f$, we will have $\rho^+u=a$ by construction. 
                
                The statement that $u\in C^{-\infty}(M)$ is the continuity of this expression as a function of $g$, which follows from the continuity of $P_{++}^{-1}:\dot{C}^{\infty}\to \sols\subset \hsc{m,r_m}$ for any $m$, $\rho^+:\sols\to C^{\infty}(S)$, $\langle a,\bullet\rangle:C^{\infty}(S)\to\C$, and $\langle f,\bullet\rangle:\hsc{m,r_m}\to\C$ for $f\in\hsc{-m,-r_m}$.
                
                To see that $Pu=f$, consider any $v\in \dot{C}^{\infty}(M)$. Then $\rho^+v=\rho^-v=0$, so
                \[
                \langle Pu, v\rangle_{L^2_{\g}}
                =
                \langle u, Pv \rangle_{L^2_{\g}}
                =
                \langle f,P_{++}^{-1}Pv\rangle_{L^2_{\g}}+2i\lambda \langle a, \rho^+v\rangle_{L^2_{\h}} 
                =
                \langle f, v\rangle_{L^2_{\g}}.
                \]
                So indeed $Pu=f$ as a distribution. This proves surjectivity of $\rho^+:\solf'\to C^{-\infty}(S)$.

            \item
                From the previous part, we see that $\mathcal{U}^+_f:C^{-\infty}(S)\to \solf'$ is defined by
                \begin{equation}
                \langle \mathcal{U}_f^+a, g\rangle_{L^2_{\g}} = \langle f,P_{++}^{-1}g\rangle_{L^2_{\g}}+2i\lambda \langle a, \rho^+P_{++}^{-1}g\rangle_{L^2_{\h}}
                \label{eq:poisson-dist}
                \end{equation}
                for any $g\in\dot{C}^{\infty}(M)$. It is enough to show continuity of $\mathcal{U}_0^+$, defined by 
                \[\langle \mathcal{U}_0^+a, g\rangle_{L^2_{\g}} = 2i\lambda \langle a, \rho^+P_{++}^{-1}g\rangle_{L^2_{\h}},\]
                and we already saw in Proposition \ref{thm:cont-Schwartz} that this is continuous $C^{-\infty}(S)\to C^{-\infty}(M)$.
        \end{enumerate}
    \end{proof}

    We will denote the extended boundary data maps and Poisson operators by the same symbols as before, $\rho^{\pm}:\soly'\to C^{-\infty}(S)$ and $\mathcal{U}_f^{\pm}:C^{-\infty}(S)\to\solf'$, and occasionally $\rho^{\pm\pm}$ and $\mathcal{U}_f^{\pm\pm}$.

    The extended causal boundary data maps are $\rho^{\pm\mp}=\pi_1\circ\rho^{\pm}+\pi_2\circ\rho^{\mp}$, where now $\pi_1,\pi_2:C^{-\infty}(S_1)\oplus C^{-\infty}(S_2)\to C^{-\infty}(S_1)\oplus C^{-\infty}(S_2)$ are projections on the space of distributions. The explicit expression defining $\rho^{\pm\mp}u$ as a distribution is
    \begin{equation}
        \langle\rho^{\pm\mp}u,a\rangle_{L^2_{\h}}=\mp\frac{i}{2\lambda}\Big(\langle u, P(v_1^{\pm}-v_2^{\mp})\rangle_{L^2_{\g}}-\langle Pu, v_1^{\pm}-v_2^{\mp}\rangle_{L^2_{\g}}\Big)
        \label{eq:bd-distr-causal-1}
    \end{equation}
    for any $a\in C^{\infty}(S)$ and any representatives $v_j^{\pm}\in\mathcal{U}^{\pm}_{\infty}\pi_ja$. The analogue of Eq.~(\ref{eq:bd-distr-1}) is
    \begin{equation}
        \langle\rho^{\pm\mp}u,(\pi_1-\pi_2)\rho^{\pm\mp}P_{\pm\mp}^{-1}g\rangle_{L^2_{\h}} = \mp\frac{i}{2\lambda}\Big(\langle u,g\rangle_{L^2_{\g}}-\langle Pu,P_{\pm\mp}^{-1}g\rangle_{L^2_{\g}}\Big)
        \label{eq:bd-distr-causal-2}
    \end{equation}
    for any $g\in\dot{C}^{\infty}(M)$. The fact that $\rho^{\pm\mp}$ map $\sol'\to C^{-\infty}(S)$ continuously follows directly from the definition and the corresponding properties of $\rho^{\pm}$. They also restrict to invertible maps $\solf'\to C^{-\infty}(S)$ with continuous inverses $\mathcal{U}_0^{\pm\mp}:C^{-\infty}(S)\to\sol'$, the proofs being analogous to the Feynman case (Theorem~\ref{thm:ext}, parts 3 and 4) but using Eq.~(\ref{eq:bd-distr-causal-2}) instead of Eq.~(\ref{eq:bd-distr-1}).

    The following set of results connecting the wavefront set of solutions to properties of their boundary data is a weak extension of Lemma \ref{thm:wf-bd} to the distributional setting. Much finer results can be obtained using the explicit characterization of the Poisson operator by Melrose and Zworski \cite{M-Z}, see Appendix A of \cite{Vasy-3BS}.

   \begin{prop}\ 
   \label{thm:wf-transl}
       \begin{enumerate}
            \item 
                Let $u\in \soly'$, $q\in S$, and $\zeta_{\pm}=\pi^{-1}(q)\cap R^{\pm}$. If $\zeta_{\pm}\notin \wf(u)$, then $q\notin\supp(\rho^{\pm}u)$.
           \item 
                Let $u\in\sol'$, $j=1$ or $2$. Then $\wf(u)\cap\Sigma_j\subset R$ if and only if $\pi_j\rho^{+}u,\pi_j\rho^{-}u\in C^{\infty}(S)$.
            \item 
                Let $u\in\sol'$, $j=1$ or $2$. Then $\wf(u)\cap\Sigma_j\subset R^{\mp}$ if and only if $\pi_j\rho^{\pm}u=0$. 
       \end{enumerate}
   \end{prop}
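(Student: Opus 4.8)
The plan is to leverage the boundary pairing formula (Proposition~\ref{thm:bpf}) and the pairing formula Eq.~(\ref{eq:bd-distr-2}), together with the propagation of singularities statements in Theorem~\ref{thm:PoS} and Proposition~\ref{thm:PoS-G}, reducing everything to microlocal statements about the scattering wavefront set. Throughout I would use the projectors $Q_{\pm}\in\psco$ microlocalizing near $R^{\pm}$ (with $Q_++Q_-=I$, $\wf'(Q_{\pm})\cap R^{\mp}=\varnothing$) which have already appeared several times above, so that $Q_{\pm}u$ splits $u$ into pieces with wavefront set in $\Sigma_{1}^{\pm}\cup\Sigma_2^{\pm}$ and $P(Q_{\pm}u)\in\dot C^\infty(M)$ when $Pu\in\dot C^\infty(M)$.

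For part (1): suppose $\zeta_{\pm}=\pi^{-1}(q)\cap R^{\pm}\notin\wf(u)$. Since $\wf(u)$ is closed and disjoint from a neighborhood $U$ of $\zeta_\pm$, I can choose an elliptic-at-$\zeta_\pm$ operator $B\in\psco$ with $\wf'(B)\subset U$ and $Bu\in\dot C^\infty(M)$. The strategy is then to test $\rho^\pm u$ against smooth boundary data $a\in C^\infty(S)$ supported near $q$, using Eq.~(\ref{eq:bd-distr-2}) with $v^\pm=A\mathcal U_0^\pm a$ for a suitable $A$ as in the proof of Theorem~\ref{thm:ext}. The key point is that by Lemma~\ref{thm:wf-bd}, $\wf(v^\pm)=\pi^{-1}(\supp a)\cap R^\pm$, which can be arranged to lie inside $U$ by taking $\supp a$ close enough to $q$; then in $\langle u,Pv^\pm\rangle-\langle Pu,v^\pm\rangle$ the first term vanishes since $Pv^\pm\in\dot C^\infty(M)$ pairs with any distribution, and for the second term I split $v^\pm=Bv^\pm+(I-B)v^\pm$ using Lemma~\ref{thm:int-by-parts}: $\langle Pu,(I-B)v^\pm\rangle$ is controlled since $\wf((I-B)v^\pm)$ avoids... actually more cleanly, since $Bu\in\dot C^\infty(M)$ and $\wf'(I-B)\cap\wf(v^\pm)=\varnothing$, integration by parts gives $\langle Pu,v^\pm\rangle=\langle PB^*... \rangle$-type terms all of which are Schwartz-paired, hence zero. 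Therefore $\langle\rho^\pm u,a\rangle=0$ for all such $a$, i.e.\ $q\notin\supp(\rho^\pm u)$.

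For parts (2) and (3): these are "global" versions obtained by applying (1) along whole components $\Sigma_j$ and combining with propagation of singularities. For part (3), the forward direction is immediate from (1): if $\wf(u)\cap\Sigma_j\subset R^{\mp}$ then in particular $\pi^{-1}(q)\cap R^{\pm}\notin\wf(u)$ for every $q\in S_j$, so $\supp(\pi_j\rho^{\pm}u)=\varnothing$. For the converse, if $\pi_j\rho^{\pm}u=0$, then on the component $S_j$ the data is "purely $\mp$"; I would run the uniqueness/parametrization argument: $Q_{\pm}u$ restricted over $S_j$ has vanishing $\rho^{\pm}$-data and, after cutting down by $Q_\pm$ and a cutoff to $S_j$, lies in the invertible space $\X_{\mp\cdots}$ where $P$ is injective — concretely, using that $\wf(u)\cap\Sigma_j$ must then be disjoint from $R^{\pm}_j$ (by a radial-point argument mirroring Lemma~\ref{thm:wf-bd}, since the $e^{\mp i\lambda/x}$-component dominates at weight $-\tfrac12$ wherever the $\pm$-data is nonzero), and then real principal type propagation (Theorem~\ref{thm:PoS}(1)) forces $\wf(u)\cap\Sigma_j\subset R^{\mp}$ because every bicharacteristic in $\Sigma_j$ limits to $R_j^{+}$ backward and $R_j^{-}$ forward. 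Part (2) follows the same pattern: $\pi_j\rho^\pm u\in C^\infty(S)$ (rather than zero) means the $\pm$-boundary data has empty wavefront set on $S_j$, so by (the quantitative content of) Lemma~\ref{thm:wf-bd} — extended to the distributional setting via the explicit Poisson operator, or more cheaply via Theorem~\ref{thm:mod-reg}-type module regularity near $R^\pm_j$ — one gets $\wf(u)\cap R^\pm_j=\varnothing$; combined over both signs and with propagation of singularities, $\wf(u)\cap\Sigma_j\subset R$.

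The main obstacle I anticipate is the converse direction of part (2): translating "$\pi_j\rho^\pm u$ is \emph{smooth}" (not merely supported away from a point, nor zero) into the wavefront statement $\wf(u)\cap R^\pm_j=\varnothing$. Part (1) only controls $\supp(\rho^\pm u)$, which is blind to the difference between $C^\infty$ and merely continuous boundary data; to get the sharp regularity conclusion at the radial set I would need either the finer results via the Melrose–Zworski Poisson kernel alluded to in the remark before the proposition (Appendix A of \cite{Vasy-3BS}), or a direct argument pairing $u$ against $\mathcal U_0^\pm a$ for a \emph{family} of $a$'s realizing every Sobolev order and reading off $\wf^{\infty,r}(u)$ at $R^\pm_j$ from the uniform boundedness — essentially a quantitative refinement of the part (1) argument tracking weights. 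I would present the weak version using only the tools developed above and cite \cite{M-Z,Vasy-3BS} for sharpenings.
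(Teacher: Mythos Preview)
Your approach to part (1) is essentially the paper's, though your execution is muddled: the claim that ``the first term $\langle u,Pv^\pm\rangle$ vanishes since $Pv^\pm\in\dot C^\infty(M)$'' is false---that term is merely well-defined, not zero. The correct argument is simply that $\wf(u)\cap\wf(v^\pm)=\varnothing$, so Lemma~\ref{thm:int-by-parts} (with $A=P$) gives $\langle u,Pv^\pm\rangle=\langle Pu,v^\pm\rangle$ directly, and the difference vanishes.

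For parts (2) and (3) you miss the paper's key simplification and introduce an actual error. The paper does \emph{not} attack the converse directions microlocally. Instead it takes a \emph{spatial} cutoff $\varphi\in C^\infty(M)$ with $\varphi\equiv 1$ near $S_1$ and $\varphi\equiv 0$ near $S_2$, sets $u_j=\varphi u$ or $(1-\varphi)u$, and observes that since $Pu=0$ and $\varphi$ is locally constant near $S$, one has $Pu_j\in\dot C^\infty(M)$ and $\rho^\pm u_j=\pi_j\rho^\pm u$. Now the bijectivity of $\rho^\pm:\mathrm{Sol}'_{P,Pu_j}\to C^{-\infty}(S)$ (Theorem~\ref{thm:ext}(3)) together with $\mathcal{U}_{Pu_j}^\pm:C^\infty(S)\to\sols$ forces $u_j\in\sols$ whenever $\rho^\pm u_j\in C^\infty(S)$, and Lemma~\ref{thm:wf-bd} then gives the wavefront conclusion immediately. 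This dissolves the obstacle you flag for (2): no Melrose--Zworski kernel, no module regularity, no quantitative refinement is needed. Your proposed route for (2) also contains a genuine mistake: smooth nonzero $\pi_j\rho^\pm u$ does \emph{not} imply $\wf(u)\cap R^\pm_j=\varnothing$---by Lemma~\ref{thm:wf-bd} the wavefront set at $R^\pm$ tracks the \emph{support} of the boundary data, not its singular support, so smooth nonzero data still produces wavefront set in $R^\pm_j$; your chain would yield $\wf(u)\cap\Sigma_j=\varnothing$, which is too strong and false. Finally, you do not address the forward direction of (2) at all; the paper handles it by a commutator identity: with $A\in\psco$ microlocalizing near $R^\pm\cap\Sigma_j$, one has $\langle\pi_j\rho^\pm u,a\rangle=\mp\frac{i}{2\lambda}\langle[P,A^*]u,\mathcal U_0^\pm(\pi_ja)\rangle$, and $\wf(u)\cap\Sigma_j\subset R$ makes $[P,A^*]u\in\dot C^\infty(M)$, whence $\pi_j\rho^\pm u=\pi_j\rho^\pm P_{\pm\pm}^{-1}[P,A^*]u\in C^\infty(S)$.
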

   \begin{proof}\ 
       \begin{enumerate}
            \item 
                If $\zeta_{\pm}\notin \wf(u)$, then there exists a neighborhood $U\subset R^{\pm}$ of $\zeta_{\pm}$ such that $\wf(u)\cap U=\varnothing$. Define $V=\pi(U)$, which is a neighborhood of $q$ in $S$. Choose $\varphi\in C_{\mathrm{c}}^{\infty}(S)$ such that $\supp(\varphi)\subset V$. Let $v\in\sols$ be a representative of $\mathcal{U}^{\pm}_{\infty}\varphi$. Then $\wf(v)=\pi^{-1}(\supp(\varphi))\cap R^{\pm}\subset U$, so $\wf(u)\cap \wf(v)=\varnothing$. Then by Lemma~\ref{thm:int-by-parts}, we can integrate by parts to get
                \[
                \langle \rho^{\pm}u,\varphi\rangle_{L^2_{\h}}
                =
                \mp\frac{i}{2\lambda}\Big(\langle u, Pv\rangle_{L^2_{\g}}-\langle Pu,v\rangle_{L^2_{\g}}\Big)
                =
                0.
                \]
             Thus, there exists a neighborhood of $q$ such that pairing $\rho^{\pm}u$ with any smooth function supported in the neighborhood gives zero, which means $q\notin\supp(\rho^{\pm}u)$.

            \item 
                Choose $A\in\psco$ such that $\wf'(A)\cap ((\Sigma\backslash\Sigma_j)\cup R^{\mp})=\varnothing$ and $\wf'(I-A)\cap (R^{\pm}\cap\Sigma_j)=\varnothing$. One can check that for any $a\in C^{\infty}(S)$, $A\mathcal{U}_0^{\pm}(\pi_ja)$ is a representative of $\mathcal{U}_{\infty}^{\pm}(\pi_ja)$. Then for any $u\in\sol'$ and $a\in C^{\infty}(S)$, we can write
                \[
                \langle \pi_j\rho^{\pm}u,a\rangle_{L^2_{\h}}
                =
                \langle\rho^{\pm}u,\pi_ja\rangle_{L^2_{\h}}
                =
                \mp\frac{i}{2\lambda}\langle u,[P,A]\mathcal{U}_0^{\pm}(\pi_ja)\rangle_{L^2_{\g}}
                =
                \]
                \[
                =
                \pm\frac{i}{2\lambda}\langle [P,A^*]u,\mathcal{U}_0^{\pm}(\pi_ja)\rangle_{L^2_{\g}}.
                \]
                In the last step we used Lemma~\ref{thm:int-by-parts} to integrate by parts since $\wf'([P,A])\cap R=\varnothing$ and $\wf(\mathcal{U}_0^{\pm}(\pi_ja))\subset R$. Now if $\wf(u)\cap \Sigma_j\subset R$, then $[P,A^*]u\in\dot{C}^{\infty}(M)$, so we can use Eq.~(\ref{eq:poisson-dist}) to write
                \[
                \pm\frac{i}{2\lambda}\langle [P,A^*]u,\mathcal{U}_0^{\pm}(\pi_ja)\rangle
                =
                \langle \rho^{\pm}P_{\pm}^{-1}[P,A^*]u,\pi_ja\rangle
                =
                \langle \pi_j\rho^{\pm}P_{\pm}^{-1}[P,A^*]u,a\rangle.
                \]
                Thus, $\pi_j\rho^{\pm}u=\pi_j\rho^{\pm}P_{\pm}^{-1}[P,A^*]u\in C^{\infty}(S)$.

                For the other direction, take $\varphi\in C^{\infty}(M)$ such that $\varphi=1$ on a neighborhood of $S_1$ and $\varphi=0$ on a neighborhood of $S_2$. Let $u_1=\varphi u$ and $u_2=(1-\varphi)u$. Since $Pu=0$ and $\varphi$ is constant in a neighborhood of either component of $S$, $Pu_1=Pu_2=0$ in a neighborhood of $S$ as well, so in particular $Pu_1,Pu_2\in \dot{C}^{\infty}(M)$ and $\rho^{\pm}u_1,\rho^{\pm}u_2$ are well-defined and satisfy $\rho^{\pm}u=\rho^{\pm}u_1+\rho^{\pm}u_2$.
                
                By construction, $\wf(u_1)=\wf(u)\cap\Sigma_1$ and $\wf(u_2)=\wf(u)\cap\Sigma_2$. Then part 1 of the proposition implies $\supp(\rho^{\pm}u_1)\subset S_1$, $\supp(\rho^{\pm}u_2)\subset S_2$. Then we must have $\rho^{\pm}u_j=\pi_j\rho^{\pm}u$. 
                
                Thus, $\pi_j\rho^{\pm}u\in C^{\infty}(S)$ is equivalent to $\rho^{\pm}u_j\in C^{\infty}(S)$. Since we know that $Pu_j\in\dot{C}^{\infty}(M)$, this implies $u_j\in \sols$. Then by Eq.~(\ref{eq:wf-bd}) we get $\wf(u_j)\subset R$, so if $\pi_j\rho^+u$ and $\pi_j\rho^-u$ are both in $C^{\infty}(S)$, then $\wf(u)\cap\Sigma_j\subset R$.
       
           \item 
                $\wf(u)\cap\Sigma_j\subset R^{\mp}$ implies $\pi_j\rho^{\pm}u=0$ immediately by part 1. For the other direction, assume that $\pi_j\rho^{\pm}u=0$. Defining $u_1$ and $u_2$ as above, by the same argument we find that $u_j\in \sols$ and $\rho^{\pm}u_j=0$. Then Lemma~\ref{thm:wf-bd} implies $\wf(u)\cap\Sigma_j=\wf(u_j)\subset R^{\mp}$.
       \end{enumerate}
   \end{proof}

\subsection{Two-point functions and their wavefront mapping properties}
    Now that we have defined $\rho^{\alpha\beta}:\sol'\to C^{-\infty}(M)$, we can extend the two-point functions to operators $\Lambda_1^{\alpha\beta},\Lambda_2^{\alpha\beta}: \Y^{-\infty}\to \sol'$, still using the expressions 
    \begin{equation}
        \Lambda_1^{\pm\pm} f=+i\mathcal{U}_0^{\pm\pm}\pi_1\rho^{\pm\pm}G_{++}f,
        \hskip 50pt
        \Lambda_2^{\pm\pm} f=+i\mathcal{U}_0^{\pm\pm}\pi_2\rho^{\pm\pm}G_{++}f,
        \label{eq:2pt-feynman-dist}
    \end{equation}
    \begin{equation}
        \Lambda_1^{\pm\mp} f=+i\mathcal{U}_0^{\pm\mp}\pi_1\rho^{\pm\mp}G_{+-}f,
        \hskip 50pt
        \Lambda_2^{\pm\mp} f=-i\mathcal{U}_0^{\pm\mp}\pi_2\rho^{\pm\mp}G_{+-}f.
        \label{eq:2pt-causal-dist}
    \end{equation}
    The extensions are continuous as maps $\Y^{-\infty}\to C^{-\infty}(M)$, which follows from the continuity of every map in the diagram
            \[\begin{tikzcd}
   {\Y^{-\infty}} & {\sol'} & {C^{-\infty}(S)} & {C^{-\infty}(S)} & {C^{-\infty}(M).}
   	\arrow["G_{+\pm}", from=1-1, to=1-2]
   	\arrow["\rho^{\alpha\beta}", from=1-2, to=1-3]
    \arrow["\pi_j", from=1-3, to=1-4]
    \arrow["\mathcal{U}_0^{\alpha\beta}", from=1-4, to=1-5]
   \end{tikzcd}\]
   Note that the identity $\rho^{-\alpha\beta}P_{\alpha\beta}^{-1}f=0$ (which extends from $f\in\dot{C}^{\infty}(M)$ to $f\in\Y^{-\infty}$ by continuity due to the density of $\dot{C}^{\infty}(M)$ in every weighted Sobolev space) means that in fact in each of the expressions in Eqs.~(\ref{eq:2pt-feynman-dist})-(\ref{eq:2pt-causal-dist}) only one inverse contributes out of the two which compose $G_{+\pm}$.

    Propositions~\ref{thm:PoS-G} and~\ref{thm:wf-transl} provide information about the wavefront mapping properties of the two-point functions. First, since $\supp(\rho^{\alpha\beta}(\Lambda^{\alpha\beta}_jf))\subset S_j$, we have
    \[\wf(\Lambda^{\alpha\beta}_jf)\subset \Sigma_j\cup R^{\alpha\beta}.\]
    Furthermore, since $\pi_j\rho^{\alpha\beta}(\Lambda^{\alpha\beta}_jf\pm iP_{\alpha\beta}^{-1}f)=0$ (the choice of $\pm$ depending on the particular $\alpha,\beta,j$), in $\Sigma_j$ we have $\wf(\Lambda^{\alpha\beta}_jf\pm iP_{\alpha\beta}^{-1}f)\cap (\Sigma_j\backslash R^{\alpha\beta})=\varnothing$. In particular, this implies $\wf^{\infty,r}(\Lambda^{\alpha\beta}_jf)\cap (\Sigma_j\backslash R^{\alpha\beta}) = \wf^{\infty,r}(P_{\alpha\beta}^{-1}f)\cap (\Sigma_j\backslash R^{\alpha\beta})$ and therefore
    \[
    \wf^{\infty,r}(\Lambda^{\alpha\beta}_jf)\cap (\Sigma_j\backslash R^{\alpha\beta}) \subset \lc_{\alpha\beta}(\wf^{\infty,r+1}(f)\cap\Sigma_j)
    \]
    for any $r$, along with the analogous infinite-regularity statement. Put together, this becomes
    \begin{equation}
    \wf^{\infty,r}(\Lambda^{\alpha\beta}_jf)\backslash R^{\alpha\beta}\subset \lc_{\alpha\beta}(\wf^{\infty,r+1}(f)\cap \Sigma_j).
        \label{eq:Hadamard-sc}
    \end{equation}

    This is an analogue of Eq.~(\ref{eq:Hadamard-mapping}), the wavefront mapping property of Hadamard two-point functions. Unlike Eq.~(\ref{eq:Hadamard-mapping}), Eq.~(\ref{eq:Hadamard-sc}) provides information on behavior at infinity, except for the radial sets where singularities can appear. (On the other hand, since $P$ is elliptic over the interior of $M$, in our case all of the interesting behavior happens at infinity, which is not true for wave equations).

    We briefly comment on the reason why there is no simple analogue of the Hadamard condition Eq.~(\ref{eq:Hadamard-kernel}) on the kernels of the two-point functions in the setting of scattering theory. Operators on $M$ have kernels which are distributions on $M\times M$. When $M$ is a manifold without boundary, then so is $M\times M$, and consequently we can talk about wavefront sets of distributions on both $M$ and $M\times M$, and indeed the wavefront mapping properties of an operator are related to the wavefront set of its kernel. On the other hand, when $M$ is a manifold with boundary, $M\times M$ is instead a manifold with corners, which does not have a natural scattering calculus, so one cannot in general talk about the scattering wavefront set of the kernel. A condition like Eq.~(\ref{eq:Hadamard-kernel}) in this setting would therefore require the introduction of additional structure.

\subsection{Mock propagators}
\label{sec:mock-props}
    For the same reason, the definition of distinguished parametrices from Section~\ref{sec:QFT} in terms of kernels cannot be directly translated to the scattering setting. Nevertheless, as we have seen, the distinguished inverses $P_{\alpha\beta}^{-1}$ play a similar role in many ways. The most important properties that distinguish them are the wavefront mapping properties in Proposition~\ref{thm:PoS-G} and, on the level of boundary data, the identity $\rho^{-\alpha\beta}P_{\alpha\beta}^{-1}f=0$.

    Recall that in usual quantum field theories, given one pair of propagators and the two-point functions of a state, we can find the other pair of propagators using Eqs.~(\ref{eq:props-bose-c-to-f})-(\ref{eq:props-bose-f-to-c}), and if the state is a Hadamard state, then these propagators will be distinguished parametrices of the appropriate type. A version of this procedure can be carried out in our setting. If we consider $\mathrm{CCR}^{\mathrm{pol}}(\dot{C}^{\infty}(M),G_{+-})$, the inverses $P_{+-}^{-1}$ and $P_{-+}^{-1}$ play a role analogous to $G_{\mathrm{R}}$ and $G_{\mathrm{A}}$ respectively, and taking either of the pairs $(\Lambda_1^{+-},\Lambda_2^{+-})$ or $(\Lambda_1^{-+},\Lambda_2^{-+})$, we can define new operators that will be analogues of $G_{\mathrm{F}}$ and $G_{\bar{\mathrm{F}}}$. Similarly, if we consider $\mathrm{CAR}^{\mathrm{pol}}(\dot{C}^{\infty}(M),G_{++})$, the inverses $P_{++}^{-1}$ and $P_{--}^{-1}$ play a role analogous to $G_{\mathrm{F}}$ and $G_{\bar{\mathrm{F}}}$ respectively, and taking either $(\Lambda_1^{++},\Lambda_2^{++})$ or $(\Lambda_1^{--},\Lambda_2^{--})$, we can define analogues of $G_{\mathrm{A}}$ and $G_{\mathrm{R}}$. The pair of ``mock propagators'' thus defined shares many properties of the other pair of distinguished inverses.

    Thus, for $\alpha,\beta\in\{+,-\}$ and $j=1,2$, define operators $Q^{\alpha\beta}_j:\Y^{-\infty}\to \soly'$ by
    \begin{equation}
    Q^{++}_1=P_{+-}^{-1}-i\Lambda_2^{-+}=P_{-+}^{-1}-i\Lambda_1^{-+}
    \hskip 30pt
    Q^{++}_2=P_{+-}^{-1}-i\Lambda_2^{+-}=P_{-+}^{-1}-i\Lambda_1^{+-},
        \label{eq:mock-prop++}
    \end{equation}
    \begin{equation}
    Q^{--}_1=P_{+-}^{-1}+i\Lambda_1^{+-}=P_{-+}^{-1}+i\Lambda_2^{+-},
    \hskip 30pt
    Q^{--}_2=P_{+-}^{-1}+i\Lambda_1^{-+}=P_{-+}^{-1}+i\Lambda_2^{-+},
        \label{eq:mock-prop--}
    \end{equation}
    \begin{equation}
    Q^{+-}_1=P_{++}^{-1}+i\Lambda_2^{--}=P_{--}^{-1}-i\Lambda_1^{--},
    \hskip 30pt
    Q^{+-}_2=P_{++}^{-1}+i\Lambda_2^{++}=P_{--}^{-1}-i\Lambda_1^{++},
        \label{eq:mock-prop+-}
    \end{equation}
    \begin{equation}
    Q^{-+}_1=P_{++}^{-1}+i\Lambda_1^{++}=P_{--}^{-1}-i\Lambda_2^{++},
    \hskip 30pt
    Q^{-+}_2=P_{++}^{-1}+i\Lambda_1^{--}=P_{--}^{-1}-i\Lambda_2^{--}.
        \label{eq:mock-prop-+}
    \end{equation}
    By construction, $PQ^{\alpha\beta}_jf=f$ for any $f\in\Y^{-\infty}$, and the pairs $(Q^{\alpha\beta}_1,Q^{-\alpha\beta}_2)$ satisfy analogues of Eq.~(\ref{eq:props-bose}), namely
    \begin{equation}
    i(Q_1^{++}-Q_2^{--})=\Lambda_1^{-+}+\Lambda_2^{-+},
    \hskip 30pt
    i(Q_2^{++}-Q_1^{--})=\Lambda_1^{+-}+\Lambda_2^{+-},
        \label{eq:mock-G++}
    \end{equation}
    \begin{equation}
    i(Q_1^{+-}-Q_2^{-+})=\Lambda_1^{--}-\Lambda_2^{--},
    \hskip 30pt
    i(Q_2^{+-}-Q_1^{-+})=\Lambda_1^{++}-\Lambda_2^{++},
        \label{eq:mock-G+-}
    \end{equation}
    and
    \begin{equation}
    Q_1^{++}+Q_2^{--}=Q_2^{++}+Q_1^{--}=P_{+-}^{-1}+P_{-+}^{-1},
    \hskip 30pt
    Q_1^{+-}+Q_2^{-+}=Q_2^{+-}+Q_1^{-+}=P_{++}^{-1}+P_{--}^{-1}.
        \label{eq:mock-props-sum}
    \end{equation}
    Moreover, as we prove shortly, they propagate singularities in the same way as the corresponding distinguished inverses. However, on the level of boundary data, $Q^{\alpha\beta}_j$ only has the ``correct'' behavior on $S_j$; on the other component, the boundary data might not be zero like it is for $P_{\alpha\beta}^{-1}$, but it is smooth.
    
    \begin{table}[H]
        \centering
        \begin{tabular}{|c||c|c|c|c|}
        \hline
             & $\pi_1\rho^+$ & $\pi_1\rho^-$ & $\pi_2\rho^+$ & $\pi_2\rho^-$\\
             \hline\hline
            $P_{+-}^{-1}f$  & $a^{+-}_1$ & $0$ & $0$ & $a^{+-}_2$\\
            \hline
           $P_{-+}^{-1}f$  & $0$ & $-a^{-+}_1$ & $-a^{-+}_2$ & $0$\\
           \hline
           $-i\Lambda^{-+}_1f$  & $\pi_1\mathcal{S}^{+-}_{-+}(a^{-+}_1)$ & $a^{-+}_1$ & $0$ & $\pi_2\mathcal{S}^{+-}_{-+}(a^{-+}_1)$\\
           \hline
           $Q^{++}_1f$  & $\pi_1\mathcal{S}^{+-}_{-+}(a^{-+}_1)$ & $0$ & $-a^{-+}_2$ & $\pi_2\mathcal{S}^{+-}_{-+}(a^{-+}_1)$\\
           \hline
           $P_{++}^{-1}f$  & $a^{++}_1$ & $0$ & $a^{++}_2$ & $0$\\
           \hline
        \end{tabular}
        \caption{Components of boundary data of the images of $f\in\Y^{-\infty}$ under various operators, illustrating that the ``mock propagator'' $Q^{\alpha\beta}_j$ is constructed to match the boundary data of the global propagator $P_{\alpha\beta}^{-1}$ in one of the two pieces where it is zero, specifically the piece on $S_j$. The other such piece ($\pi_2\rho^-$ in this example) is smooth but typically nonzero. Here $\mathcal{S}^{-\alpha\beta}_{\alpha\beta}=\rho^{-\alpha\beta}\mathcal{U}_0^{\alpha\beta}$.}
        \label{tab:mock-prop}
    \end{table}

    \begin{prop}
    \label{thm:mock-props}
        For any $f\in \mathcal{Y}^{-\infty}$, we have $\rho^{-\alpha\beta}Q_j^{\alpha\beta}f\in C^{\infty}(S)$, $\pi_j\rho^{-\alpha\beta}Q^{\alpha\beta}_j f=0$, and $\wf((Q^{\alpha\beta}_j-P_{\alpha\beta}^{-1})f)\subset R^{\alpha\beta}$.
    \end{prop}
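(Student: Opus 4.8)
The plan is to push the whole statement down to the level of boundary data, where the operators $Q^{\alpha\beta}_j$ were designed, and then read off the wavefront properties from Proposition~\ref{thm:wf-transl}. All eight cases (four sign patterns, $j=1,2$) run in parallel — one uses the two equivalent presentations of each $Q^{\alpha\beta}_j$ together with the identities $\rho^{-\gamma\delta}P^{-1}_{\gamma\delta}=0$ and $\rho^{\gamma\delta}\mathcal{U}_0^{\gamma\delta}=\mathrm{id}$ — so I would present only the representative case $Q^{++}_1=P_{-+}^{-1}+i\Lambda_1^{-+}$. The first move is to rewrite $Q^{++}_1$ as a Poisson-operator correction of a single global inverse: writing $b:=\pi_1\rho^{-}P_{-+}^{-1}f$ (a distribution on $S$ supported in $S_1$, so $\pi_1 b=b$), the identity $\rho^{-+}G_{+-}f=-\rho^{-+}P_{-+}^{-1}f$ — valid because $\rho^{-+}P_{+-}^{-1}f=\rho^{-(+-)}P_{+-}^{-1}f=0$ — yields $i\Lambda_1^{-+}f=-\mathcal{U}_0^{-+}(b)$ and hence
$$Q^{++}_1f=P_{-+}^{-1}f-v,\qquad v:=\mathcal{U}_0^{-+}(b)\in\sol',\quad \pi_1\rho^{-}v=b,\ \ \pi_2\rho^{+}v=0.$$

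Next I would dispatch the first two assertions by applying $\rho^{-(++)}=\rho^{-}$ to this identity. On $S_1$ one has $\pi_1\rho^{-}v=\pi_1\rho^{-+}v=b$, so $\pi_1\rho^{-}Q^{++}_1f=b-b=0$, which is $\pi_j\rho^{-\alpha\beta}Q^{\alpha\beta}_jf=0$. On $S_2$, since $\rho^{+-}P_{-+}^{-1}=0$ gives $\pi_2\rho^{-}P_{-+}^{-1}f=0$, one gets $\pi_2\rho^{-}Q^{++}_1f=-\pi_2\rho^{-}v$; but $\pi_2\rho^{+}v=0$, so Proposition~\ref{thm:wf-transl}(3) forces $\wf(v)\cap\Sigma_2\subset R^-_2\subset R$, and then Proposition~\ref{thm:wf-transl}(2) gives $\pi_2\rho^{-}v\in C^{\infty}(S)$. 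Hence $\rho^{-}Q^{++}_1f=-\pi_2\rho^{-}v\in C^{\infty}(S)$, which is $\rho^{-\alpha\beta}Q^{\alpha\beta}_jf\in C^{\infty}(S)$. Conceptually this is the statement that the cross-component part of the scattering relation is smoothing, which is exactly what one expects since no bicharacteristic joins $\Sigma_1$ to $\Sigma_2$ ($P$ being elliptic over $M^{\circ}$); notably it needs nothing about the Poisson kernel beyond Proposition~\ref{thm:wf-transl}.

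For the wavefront statement I would set $w:=(Q^{++}_1-P^{-1}_{++})f$, so $Pw=0$ and $w\in\sol'$, and locate $\wf(w)$ one component at a time. On $\Sigma_1$: $\pi_1\rho^{-}w=0-0=0$ (using the previous step and $\rho^{--}P^{-1}_{++}=0$), so Proposition~\ref{thm:wf-transl}(3) gives $\wf(w)\cap\Sigma_1\subset R^{+}_1=R^{\alpha\beta}\cap\Sigma_1$. On $\Sigma_2$: $\pi_2\rho^{-}w=-\pi_2\rho^{-}v\in C^{\infty}(S)$ by the previous step, while $\pi_2\rho^{+}w=\pi_2\rho^{+}(P_{-+}^{-1}-P^{-1}_{++})f$ is the $\pi_2\rho^{+}$-component of a homogeneous solution whose $\pi_2\rho^{-}$-component vanishes identically (both $\rho^{+-}P_{-+}^{-1}$ and $\rho^{--}P^{-1}_{++}$ annihilate it), so Proposition~\ref{thm:wf-transl}(3) and then (2) give $\pi_2\rho^{+}w\in C^{\infty}(S)$ as well; hence Proposition~\ref{thm:wf-transl}(2) gives $\wf(w)\cap\Sigma_2\subset R$. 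Combining, $\wf(w)\subset R$ with $\wf(w)\cap\Sigma_j\subset R^{\alpha\beta}$, and the absolute-regularity refinement is proved the same way. The remaining seven cases go through by the same scheme, with the obvious relabelling and, for $Q^{+-},Q^{-+}$, the Feynman inverses $P^{-1}_{\pm\pm}$ in place of the causal ones.

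The step I expect to be the main obstacle is the third assertion — not any single estimate but the bookkeeping behind it: one has to track, for each of the four boundary-data components of $w$, whether it vanishes identically or is merely smooth, and this needs the identities $\rho^{-\gamma\delta}P^{-1}_{\gamma\delta}=0$, the explicit formula for $\Lambda^{\gamma\delta}_k$, and the two presentations of each $Q^{\alpha\beta}_j$ combined in exactly the right way. The subtle point is that over $S_{3-j}$ the argument produces only smoothness, not vanishing (compare Table~\ref{tab:mock-prop}), so it pins $\wf(w)$ down to $R$ and, over $S_j$, to $R^{\alpha\beta}$; sharpening the location over $S_{3-j}$ is where one must be especially careful.
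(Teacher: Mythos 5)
Your arguments for the first two assertions are correct and run in essentially the same way as the paper's: you rewrite $Q^{\alpha\beta}_j$ as a single global inverse plus a Poisson-operator correction $v=\mathcal{U}_0^{-\alpha\beta}(b)$ with $b$ supported in $S_j$, and Proposition~\ref{thm:wf-transl}(2)--(3) then yield the required vanishing on $S_j$ and the smoothness on $S_{3-j}$. This is the same computation as the paper's Table~\ref{tab:mock-prop}, just written as an identity rather than as a table.

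For the third assertion there is a genuine gap, exactly at the spot you flag. Over $\Sigma_{3-j}$ the boundary-data bookkeeping gives $\wf\big((P^{-1}_{-+}-P^{-1}_{++})f\big)\cap\Sigma_2\subset R^+$ but only $\wf(v)\cap\Sigma_2\subset R^-$ (since $\pi_2\rho^+v=0$ hands you Proposition~\ref{thm:wf-transl}(3) with the \emph{wrong} sign), and Proposition~\ref{thm:wf-transl}(2) then pins $\wf(w)\cap\Sigma_2$ only into $R$, not into $R^{++}=R^+$ as the proposition claims. No amount of reshuffling the boundary-data identities can fix this, because $\pi_2\rho^-v$ is generically nonzero (nontrivial cross-component scattering) and smoothness of it only controls $\wf(v)\cap\Sigma_2$ up to $R$. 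The missing ingredient is the wavefront mapping property Eq.~(\ref{eq:Hadamard-sc}) for the two-point functions: since $\lc_{-+}$ applied to a subset of $\Sigma_1$ stays in $\Sigma_1$, Eq.~(\ref{eq:Hadamard-sc}) forces $\wf(\Lambda_1^{-+}f)\cap\Sigma_2\subset R^{-+}\cap\Sigma_2=R^+_2$, i.e.\ $\wf(v)\cap\Sigma_2\subset R^+$. With that, both pieces of $w$ have $\Sigma_2$-wavefront in $R^+$ and the claim follows. (Incidentally, intersecting this with your $R^-$ bound shows $\wf(v)\cap\Sigma_2=\varnothing$: the cross-component correction is microlocally trivial near $S_2$.) This is precisely what the paper does, and it is the reason Eq.~(\ref{eq:Hadamard-sc}) is established before this proposition.
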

    \begin{proof} We consider $Q^{++}_1$; the other cases are all analogous. For $(\alpha,\beta)=(+,+)$ or $(+,-)$ and $j=1,2$, define $a^{\pm\alpha\beta}_j=\pi_j \rho^{\pm\alpha\beta}G_{\alpha\beta}f$. Then the boundary data components of images of $f$ under relevant operators, by properties of distinguished inverses and direct calculation from the definition, are as shown in Table~\ref{tab:mock-prop}. In particular, $\pi_1\rho^{-}Q^{++}_1=0$ is immediate. In the other component, $\pi_2\rho^+(-i\Lambda^{-+}_1f)=0$, so by Proposition~\ref{thm:wf-transl} we get $\wf(-i\Lambda^{-+}_1f)\cap\Sigma_2\subset R^-$ and therefore $\pi_2\rho^-(-i\Lambda^{-+}_1f)\in C^{\infty}(S)$. Thus, we get $\pi_2\rho^-(Q^{++}_1f)=\pi_2\rho^-(-i\Lambda^{-+}_1f)\in C^{\infty}(S)$ as well. This completes the proof of the boundary data statement.

    Since $\pi_1\rho^-$ of both terms is zero, we have $\pi_1\rho^-((Q_1^{++}-P_{++}^{-1})f)=0$. Then, since $(Q_1^{++}-P_{++}^{-1})f\in\sol'$, by Proposition~\ref{thm:wf-transl} we conclude that $\wf((Q_1^{++}-P_{++}^{-1})f)\cap \Sigma_1\subset R^+$. In the other component, on the other hand, we have $\wf(\Lambda_1^{-+}f)\cap(\Sigma_2\backslash R^+)=\varnothing$ by Eq.~(\ref{eq:Hadamard-sc}), so
    \[
    \wf((Q_1^{++}-P_{++}^{-1})f)\cap (\Sigma_2\backslash R^+)
    =
    \wf((P_{-+}^{-1}-P_{++}^{-1})f) \cap (\Sigma_2\backslash R^+)
    =
    \varnothing,
    \]
    where in the last step we used Proposition~\ref{thm:wf-transl} since $(P_{-+}^{-1}-P_{++}^{-1})f\in\sol'$ and satisfies $\pi_2\rho^-((P_{-+}^{-1}-P_{++}^{-1})f)=0$. Thus, $\wf((Q_1^{++}-P_{++}^{-1})f)\cap\Sigma_2\subset R^+$ as well.
    \end{proof}

    As we see, the $(\alpha,\beta)$ ``mock propagators'' thus constructed do not coincide with the $(\alpha,\beta)$ distinguished inverses, but their differences are completely regularizing away from $R^{\alpha\beta}$. In particular, while the inverses themselves do not satisfy equations like Eq.~(\ref{eq:mock-G++})-(\ref{eq:mock-props-sum}) exactly, the errors are completely regularizing away from $R$. Demanding that $\rho^{-\alpha\beta}P_{\alpha\beta}^{-1}f=0$ distinguishes the true inverse among the operators with the correct wavefront mapping properties. (The fact that this singles out a unique operator follows from the bijectivity of $\mathcal{U}_f^{-\alpha\beta}: C^{-\infty}(S)\to \solf$). This is similar to the situation with distinguished parametrices, which are unique modulo smoothing operators.

    The reason for the discrepancy can be understood as follows. Our two-point functions describe states based on part of the asymptotic behavior of the fields: at each end, a two-point function has either known incoming or known outgoing behavior, but not both (in the sense that if we express the boundary data in terms of $a^{\pm\alpha\beta}_j$, the other components will depend on the scattering matrix). Therefore, by combining an inverse and a two-point function, we can ``engineer'' the desired behavior at each end for the kind of data that the state refers to, but what happens to the other pieces of data is not controlled.
    
    This is similar to the situation with in/out vacuum states of QFT on asymptotically Minkowski spacetimes: we can construct a Feynman propagator for the ``in'' vacuum state, and if we only consider the ``in'' representation, it will have many of the properties of the Minkowski Feynman propagator. However, the ``in'' states are connected to the ``out'' states by a nontrivial S-matrix, so the Feynman propagator of the ``in'' vacuum will not have the desired properties from the perspective of the ``out'' representation. The statement that $\rho^{-\alpha\beta}P_{\alpha\beta}^{-1}f$ is smooth even though nonzero is a statement about the regularity of the S-matrix.

    A natural further question is: can a pair of ``mock propagators'' be used to define a Hermitian form on $\dot{C}^{\infty}(M)/P\dot{C}^{\infty}(M)$ similarly to the corresponding pair of inverses, and if so, how different are these Hermitian forms from each other? Eqs.~(\ref{eq:mock-G++})-(\ref{eq:mock-G+-}) along with the properties of the two-point functions imply that $i(Q^{\alpha\beta}_1-Q^{-\alpha\beta}_2)$ and $i(Q^{\alpha\beta}_2-Q^{-\alpha\beta}_1)$ indeed define nondegenerate Hermitian forms on $\dot{C}^{\infty}(M)/P\dot{C}^{\infty}(M)$ for either choice of $(\alpha,\beta)=(+,+)$ or $(+,-)$, and in the $(+,+)$ cases the forms are positive.

    Let $f,g\in\dot{C}^{\infty}(M)$, and define $a_j^{\pm\alpha\beta}=\pi_ja^{\pm\alpha\beta}$, $b_j^{\pm\alpha\beta}=\pi_j b^{\pm\alpha\beta}$, where $a^{\pm\alpha\beta},b^{\pm\alpha\beta}$ are as in Eq.~(\ref{eq:bd-notation}). Then using Eqs.~(\ref{eq:mock-G++})-(\ref{eq:mock-G+-}) we get
    \begin{equation}
    \label{eq:mock-G-bd-1}
    i\langle f, (Q^{+\pm}_1-Q^{-\mp}_2)g\rangle_{L^2_{\g}}
    =
    2\lambda\Big(\langle a^{-\pm}_1, b^{-\pm}_1\rangle_{L^2_{\h}} \pm \langle a^{-\pm}_2, b^{-\pm}_2\rangle_{L^2_{\h}}\Big),
    \end{equation}
    \begin{equation}
    \label{eq:mock-G-bd-2}
    i\langle f, (Q^{+\pm}_2-Q^{-\mp}_1)g\rangle_{L^2_{\g}}
    =
    2\lambda\Big(\langle a^{+\mp}_1, b^{+\mp}_1\rangle_{L^2_{\h}} \pm \langle a^{+\mp}_2, b^{+\mp}_2\rangle_{L^2_{\h}}\Big),
    \end{equation}
    compared to 
    \begin{equation}
    \label{eq:global-G-bd}
    i\langle f,G_{+\pm}g\rangle_{L^2_{\g}}
    =
    2\lambda\Big(\langle  a^{+\pm}_1, b^{+\pm}_1\rangle_{L^2_{\h}}\pm \langle a^{+\pm}_2, b^{+\pm}_2\rangle_{L^2_{\h}}\Big)
    =
    2\lambda\Big(\langle  a^{-\mp}_1, b^{-\mp}_1\rangle_{L^2_{\h}}\pm \langle a^{-\mp}_2, b^{-\mp}_2\rangle_{L^2_{\h}}\Big).
    \end{equation}

    \begin{table}[H]
        \centering
        \begin{tabular}{|c||c|c|c|c|}
        \hline
             & $\pi_1\rho^+$ & $\pi_1\rho^-$ & $\pi_2\rho^+$ & $\pi_2\rho^-$\\
             \hline\hline
            $P_{++}^{-1}f$  & $a^{++}_1$ & $0$ & $a^{++}_2$ & $0$\\
            \hline
           $P_{+-}^{-1}f$  & $a^{+-}_1$ & $0$ & $0$ & $a^{+-}_2$\\
           \hline
           $(P_{++}^{-1}-P_{+-}^{-1})f$  & $a^{++}_1-a^{+-}_1$ & $0$ & $a^{++}_2$ & $-a^{+-}_2$\\
           \hline
        \end{tabular}
        \caption{Components of boundary data of the images of $f\in\dot{C}^{\infty}(M)$ under inverses which have the same propagation properties in one component of $\Sigma$.}
        \label{tab:G-diff}
    \end{table}
    
    To characterize the discrepancy, let us consider separately scattering within a single component of the boundary and cross-component scattering. Let $\mathcal{S}^{-\alpha\beta}_{\alpha\beta}=\rho^{-\alpha\beta}\mathcal{U}^{\alpha\beta}_0$. For simplicity, we consider a particular example, the difference
    \begin{equation}
    \begin{split}
    \label{eq:G-diff}
        i\Big(\langle f,G_{++}g\rangle_{L^2_{\g}}-\langle f, (Q^{++}_2-Q^{--}_1)g\rangle_{L^2_{\g}}\Big)
        = \\
        2\lambda\Big( 
        \big(\langle a_1^{++},b_1^{++}\rangle_{L^2_{\h}} -\langle a_1^{+-},b_1^{+-}\rangle_{L^2_{\h}}\big)
        +
        \big(\langle a_2^{++},b_2^{++}\rangle_{L^2_{\h}} -\langle a_2^{+-},b_2^{+-}\rangle_{L^2_{\h}}\big)
        \Big).
    \end{split}
    \end{equation}
    The inverses $P_{++}^{-1}$ and $P_{+-}^{-1}$ differ by an operator which is completely regularizing on $\Sigma_1\backslash R^+$. This is reflected on the level of boundary data by the fact that $\pi_1\rho^-(P_{++}^{-1}-P_{+-}^{-1})f=0$ (see Table~\ref{tab:G-diff}). Therefore, $(P_{++}^{-1}-P_{+-}^{-1})f=-\mathcal{U}_0^{--}(a^{+-}_2)$. Then we can write
    \[a_1^{++}-a_1^{+-}=\pi_1\mathcal{S}^{++}_{--}(a_2^{+-}),\]
    so $a_1^{++}-a_1^{+-}$ is nonzero purely due to the presence of cross-component scattering, and the same can be said about the first term on the right-hand side of Eq.~(\ref{eq:G-diff}) (since the argument applies to $b_1^{+-}$ as well). We remark that this cross-component scattering amplitude $\pi_1\mathcal{S}^{++}_{--}(a_2^{+-})$ is {\em always} nonzero when $a_2^{+-}$ is such, for if it vanished, $\mathcal{U}_0^{--}(a^{+-}_2)$ would have both asymptotic data pieces vanish at $S_1$, and thus in fact would be rapidly decaying there, but then unique continuation at infinity and then standard unique continuation would in fact show that $\mathcal{U}_0^{--}(a^{+-}_2)$ vanished identically. This is in contrast with the Lorentzian setting, in which timelike Killing vector fields allow trivial cross-component scattering.
    
    In the other component, we get
    \[a_2^{+-}=-\mathcal{S}^{--}_{++}(a_2^{++})-\mathcal{S}^{--}_{++}(a_1^{++}-a_1^{+-}),\]
    and similarly for $b_2^{+-}$. The second term is the result of cross-component scattering; in its absence, we would have $\langle a_2^{+-},b_2^{+-}\rangle=\langle \mathcal{S}^{--}_{++}(a_2^{++}),\mathcal{S}^{--}_{++}(b_2^{++})\rangle=\langle a_2^{++}, b_2^{++}\rangle$ due to the unitarity of the scattering matrix. Thus, the second term on the right-hand side of Eq.~(\ref{eq:G-diff}) would also be zero in the absence of cross-component scattering.
    
    A similar analysis applies to the other three cases. Thus we see that the difference between the Hermitian forms in Eqs.~(\ref{eq:mock-G-bd-1})-(\ref{eq:global-G-bd}) is due to the presence of nontrivial cross-component scattering.

\printbibliography
\end{document}